\DeclareMathOperator{\arctanh}{arctanh}
\DeclareMathOperator{\arcsech}{arcsech}
\DeclareMathOperator{\arccosh}{arccosh}
\DeclareMathOperator{\arcsinh}{arcsinh}
\DeclareMathOperator{\lcm}{lcm}
\newtheorem{theorem}{Theorem}
\newtheorem{lemma}[theorem]{Lemma}
\newtheorem{definition}[theorem]{Definition}
\newtheorem{example}[theorem]{Example}
\newtheorem{remark}[theorem]{Remark}
\newtheorem{acknowledgment}[]{Acknowledgment}
\begin{document}

\begin{frontmatter}

\title{Symbolic computation of hypergeometric type and non-holonomic power series}

\author{Bertrand Teguia Tabuguia}
\ead{bteguia@mathematik.uni-kassel.de}
\ead[url]{https://www.bertrandteguia.com}

\author{Wolfram Koepf}
\address{University of Kassel, Heinrich-Plett-Str.40. 34132 Kassel, Germany}
\ead{koepf@mathematik.uni-kassel.de}
\ead[url]{http://www.mathematik.uni-kassel.de/~koepf}

\begin{abstract}
A term $a_n$ is $m$-fold hypergeometric, for a given positive integer $m$, if the ratio $a_{n+m}/a_n$ is a rational function over a field $\mathbb{K}$ of characteristic zero. We establish the structure of holonomic recurrence equation, i.e. linear and homogeneous recurrence equations having polynomial coefficients, that have $m$-fold hypergeometric term solutions over $\mathbb{K}$, for any positive integer $m$. Consequently, we describe an algorithm, say mfoldHyper, that extends van Hoeij's algorithm (1998) which computes a basis of the subspace of hypergeometric $(m=1)$ term solutions of holonomic recurrence equations to the more general case of $m$-fold hypergeometric terms.

A Laurent-Puiseux series
\begin{equation}
	\sum_{n=n_0}^{\infty}a_n(z-z_0)^{n/k}~ (a_n\in\mathbb{K}, k\in\mathbb{N}, n_0\in\mathbb{Z}), \label{ref00}
\end{equation}
\noindent where $k$ denotes the corresponding Puiseux number, is mainly characterized by the coefficient $a_n$. We generalize the concept of hypergeometric type power series introduced by Koepf (1992), by considering linear combinations of Laurent-Puiseux series whose coefficients are $m$-fold hypergeometric terms. Such power series could not be computed before due to the lack of an algorithm to find $m$-fold hypergeometric term solutions of holonomic recurrence equation which constitute a key step in Koepf's procedure. Thanks to mfoldHyper, we deduce a complete procedure to compute these power series; indeed, it turns out that every linear combination of power series with $m$-fold hypergeometric term coefficients, for finitely many values of $m$, is detected.

On the other hand, we investigate an algorithm to represent power series of non-holonomic functions like $\tan(z), (1-\tan(z))/(1+\tan(z)),$ $z/(\exp(z)-1), \log(1+\sin(z)), \text{etc}$. The algorithm follows the same steps of Koepf's algorithm, but instead of seeking holonomic differential equations, quadratic differential equations are computed and the Cauchy product rule is used to deduce recurrence equations for the power series coefficients. This algorithm defines a normal function that yields together with enough initial values normal forms for many power series of non-holonomic functions. Therefore, non-trivial identities like
\[
\ln\left(\frac{1+\tan(z)}{1-\tan(z)}\right)=2\arctanh\left(\frac{\sin(2z)}{1+\cos(2z)}\right)
\]
are automatically proved using this approach. This paper is accompanied by implementations in the Computer Algebra Systems (CAS) Maxima 5.44.0 and Maple 2019.
\end{abstract}

\begin{keyword}
$m$-fold hypergeometric term; holonomic equation; hypergeometric type power series; quadratic differential equation; normal form
\end{keyword}
\end{frontmatter}

\section{Introduction}\label{sec0}

The applicability of complex analysis is essentially restricted to analytic functions, since it easily allows both differentiation and integration. These functions are represented by power series with positive radius of convergence. Power series are used to represent orthogonal polynomials (\cite{koepf1997representations}); in combinatorics, generating functions are power series (\cite{combinator}); in dynamical systems, algebraic properties of power series involve most of the constructions (see \cite{dynsys}); we can also enumerate commutative algebra and algebraic geometry (\cite{intro1}), \cite[Chapter VII]{gathmann2006tropical}. It is therefore important to know the exact general coefficient or formula of a power series. There is no algorithm which computes the power series of every given analytic function. We classify series with a certain common property, and build an algorithm which will always find the power series representation from such an analytic expression, whenever possible. It is important to notice the word "expression", because we are not considering complex functions as abstract objects defined in a certain domain and its range, but instead as a differentiable object that we can manipulate symbolically to characterize its Taylor coefficients by a certain type of linear recurrence equation. Moreover, by the unique power series characterization, this approach does not only lead to the verification of known identities, but also to the discovery of new ones.

Let $\mathbb{K}$ be an infinite computable field\footnote{Mostly $\mathbb{K}:=\mathbb{Q}(\alpha_1,\ldots,\alpha_N)$ is the field of rational functions in several variables}, and $(a_n)_{n\in\mathbb{Z}}, a_n\in\mathbb{K},$ be an $m$-fold hypergeometric sequence such that
\begin{equation}
	a_{n+m} = r(n) a_n,~\forall n>n_0,~n_0\in\mathbb{Z}, \label{ref1}
\end{equation}

\noindent where $r(n)$ denotes a rational function in $\mathbb{K}(n)$, $m \in \mathbb{N}$, and $n_0$ is the first non-zero term index. $m$-fold hypergeometric sequences are very useful in summation theory (\cite{WolfBook}). Our first interest is to describe an algorithm which computes power series (Puiseux series) of the form

\begin{equation}
	\sum_{n=n_0}^{\infty}a_n(z-z_0)^{n/k}~ (a_n\in\mathbb{K}, k\in\mathbb{N}, n_0\in\mathbb{Z}), \label{ref0}
\end{equation}

\noindent such that $a_n$ is an $m$-fold hypergeometric term.

In 1992, Koepf published an algorithmic approach for computing power series (see \cite{Koepf1992}). The algorithm was implemented in the computer algebra systems (CAS) Maple (\cite{heck1993introduction}) and Mathematica (\cite{wolfram1999mathematica}). In his original approach, Koepf considered three types of functions: \textit{two-term recurrence relation} type which corresponds to expressions leading to a linear recurrence equation equivalent to $(\ref{ref1})$. That is
\begin{equation}
	Q_n a_{n+m} + P_n a_n = 0,~n\in\mathbb{Z}, \label{ref2}
\end{equation}
where $Q_n,P_n$ are polynomials in $\mathbb{K}[n]$. The second type called \textit{exp-like}, corresponding to expressions leading to linear recurrence equations with constant coefficients in $\mathbb{K}$. And the third type with a completely different approach based on partial fraction decomposition (over $\mathbb{C}$) corresponding to rational functions in $\mathbb{C}(z)$. All gathered in the Maple and Mathematica packages FPS could already recover the power series formulas of a large family of analytic functions.

Note that in the rational function case, the algorithm can still find a linear recurrence equation satisfied by the general coefficient sought, but the issue was in solving that equation. Furthermore, it turns out that the general coefficient found for each type used in Koepf's approach is always a linear combination of $m$-fold hypergeometric terms. Therefore, if we could find all $m$-fold hypergeometric term solutions of a linear homogeneous recurrence equation, then we could considerably increase the family of power series computed automatically.

Marko Petkov{\v{s}}ek later published an algorithm which finds all hypergeometric ($m=1$) term solutions of linear recurrences (\cite{petkovvsek1992hypergeometric}). This algorithm was implemented in Maple by Koepf and  in Mathematica by Petkov{\v{s}}ek. Petkov{\v{s}}ek brilliantly used tools involved in Gosper's algorithm (see Chapter 5 in ~\cite{WolfBook}) in his approach. However, the complexity of Petkov{\v{s}}ek's algorithm can be very high depending on the degree of polynomial coefficients of the equation. 

In 1999, Mark van Hoeij used a different approach and got a much more efficient algorithm for the same purpose. Indeed, he considered the local behavior of solution terms, which naturally decreases the complexity by reducing the number of candidates since hypergeometric term solutions are built from some factors of the leading and the trailing polynomial coefficients (\cite{van1999finite}). Van Hoeij implemented his algorithm in Maple as \textit{LREtools[hypergeomsols]}. An equivalent algorithm is part of our main approach, this is described in (\cite{BThyper}).

Note, however, that the Petkov{\v{s}}ek and van Hoeij algorithms might only find hypergeometric term solutions in an extension field of $\mathbb{Q}$, which in certain cases, for $m>1$, can be equivalent to $m$-fold hypergeometric term solutions in $\mathbb{Q}$. Indeed, the algorithm is implemented to find all hypergeometric term solutions in $\mathbb{Q}(\alpha)$, where $\alpha \in \mathbb{C}\setminus \mathbb{Q}$; since $\alpha$ is not always explicitly known in advance, we will often replace extension fields of $\mathbb{Q}$ by $\mathbb{C}$. But this has some disadvantages of simplicity. If we consider the power series of the cosine function at $z_0=0$ given by
\begin{equation}
	\cos(z)=\sum_{n=0}^{\infty}\dfrac{(-1)^n}{(2n)!}z^{2n} \label{ref3},
\end{equation}
then we observe that its general coefficient satisfies the recurrence equation
\begin{equation}
	\left( 1+n\right) \left( 2+n\right) {{a}_{n+2}}+{{a}_{n}}=0. \label{ref4}
\end{equation}

Using Koepf's algorithm, the type $m=2$ is detected and the formula $(\ref{ref3})$ is obtained as a linear combination of the two 2-fold hypergeometric series involved, provided the initial values $a_0=1$ and $a_1=0$.

Using van Hoeij's algorithm implemented in Maple\footnote{We used Maple 2019 for this paper}, with the same initial values, we find the hypergeometric solution
\begin{equation}
	\frac{\text{i}^n}{2\Gamma(n+1)} + \frac{(-\text{i})^n}{2\Gamma(n+1)}, ~ \text{i}\in\mathbb{C},\text{i}^2=-1 \label{ref5}
\end{equation}
which gives the expansion
\begin{equation}
	\cos(z)=\sum_{n=0}^{\infty}\dfrac{(-\text{i})^n+\text{i}^n}{2\Gamma(n+1)}z^n \label{ref6}.
\end{equation}

Therefore thanks to Koepf's algorithm, Maple treats the cosine case well in $\mathbb{Q}$ since the recurrence equation obtained is a two-term recurrence relation. In general, an issue occurs with unnecessary algebraic extensions of $\mathbb{Q}$ when van Hoeij's algorithm is used, because it only looks for hypergeometric term solutions ($m$=1). For example, any linear combination of $\cos(z)$ or $\sin(z)$ with an expression having a hypergeometric general coefficient will have a formula involving $(\ref{ref5})$. 
\begin{maplegroup}
	\begin{mapleinput}
		\mapleinline{active}{1d}{convert(cos(z)+exp(z),FPS);
		}{}
	\end{mapleinput}
	\mapleresult
	\begin{maplelatex}
		\[\displaystyle 
		\sum_{k=0}^{\infty}\dfrac{\left(1+\frac{\text{i}^k}{2}+\frac{(-\text{i})^k}{2}\right) z^k}{k!}\]
	\end{maplelatex}
\end{maplegroup}
\begin{maplegroup}
	\begin{mapleinput}
		\mapleinline{active}{1d}{convert(log(1+z)+sin(z),FPS);
		}{}
	\end{mapleinput}
	\mapleresult
	\begin{maplelatex}
		\[\displaystyle \sum_{k=0}^{\infty} \left(-\frac{(-1)^{k+1}}{k+1}-\frac{\text{i}\cdot\text{i}^{k+1}}{2(k+1)!} + \frac{\text{i}\cdot(-\text{i})^{k+1}}{2(k+1)!}\right) z^{k+1} \]
	\end{maplelatex}
\end{maplegroup}

Note, however, that this paper does not aim to find the power series formula with a simple hypergeometric general coefficient, but to find the formula with the simplest $m$-fold hypergeometric general coefficients. Simple here means that the coefficients are not taken in an extension field of $\mathbb{Q}$ whenever there exists an $m$-fold equivalent over $\mathbb{Q}$. We should highlight $m$-fold hypergeometric, because up to now there is no implemented algorithm able to find such solutions of a linear recurrence equation. And it is worth to have such an algorithm since in many cases, Maple's \textit{convert}\footnote{Maple's convert command uses Koepf's original approach followed by an invocation of van Hoeij's algorithm.} command fails to find power series of this type. 
\begin{maplegroup}
	\begin{mapleinput}
		\mapleinline{active}{1d}{convert(arcsin(z)+cos(z),FPS);
		}{}
	\end{mapleinput}
	\mapleresult
	\begin{maplelatex}
		\[\displaystyle \arcsin(z) + \cos(z)\]
	\end{maplelatex}
\end{maplegroup}
\begin{maplegroup}
	\begin{mapleinput}
		\mapleinline{active}{1d}{convert(exp(z\symbol{94}2)+log(1+z\symbol{94}3),FPS);
		}{}
	\end{mapleinput}
	\mapleresult
	\begin{maplelatex}
		\[\displaystyle e^{z^2} + \ln(1+z^{3})\]
	\end{maplelatex}
\end{maplegroup}

The above Maple failures rely on the incapacity of van Hoeij's algorithm to detect $m$-fold ($m>1$) hypergeometric term solutions of so-called holonomic recurrence equations, that is homogeneous linear recurrence equations with polynomial coefficients. Indeed, by using the Maple package \textit{FormalPowerSeries} we get the following holonomic recurrence equations.

\begin{maplegroup}
	\begin{mapleinput}
		\mapleinline{active}{1d}{bind(FormalPowerSeries)
		}{}
	\end{mapleinput}
\end{maplegroup}
\begin{maplegroup}
	\begin{mapleinput}
		\mapleinline{active}{1d}{RE1:=SimpleRE(arcsin(z)+cos(z),z,a(n));
		}{}
	\end{mapleinput}
	\mapleresult
	\begin{maplelatex}
		\vspace{-0.5cm}
		\begin{multline*}
			\text{RE1 :=}-n(n^3-10n^2+21n-22)a(n)+(n-4)^2a(n-4)+(n-2)(n^3-11n^2+39n-41)a(n-2)\\
			+2(n+1)(n+2)(n^2+4n-1)a(n+2)-2(n+1)(n+2)(n+3)(n+4)a(n+4) = 0
		\end{multline*}
	\end{maplelatex}
\end{maplegroup}
\begin{maplegroup}
	\begin{mapleinput}
		\mapleinline{active}{1d}{RE2:=SimpleRE(exp(z\symbol{94}2)+log(1+z\symbol{94}3),z,a(n));
		}{}
	\end{mapleinput}
	\mapleresult
	\begin{maplelatex}
		\begin{multline*}
			\text{RE2 :=} -4(n-9)^2a(n-9)+2(n-13)(n-7)^2a(n-7)-4(n-6)(2n-15)a(n-6)\\
			+2(n-7)(n-5)^2a(n-5)+2(n-4)(2n^2-28n+107)a(n-4)-4(n-3)(n-6)a(n-3)\\
			+(n-2)(n-4)(n-17)a(n-2)+2(n-1)(n-4)^2a(n-1)-(n-1)(n-2)(n+1)a(n+1) = 0
		\end{multline*}
	\end{maplelatex}
\end{maplegroup}
Applying van Hoeij's algorithm to these two recurrence equations yields
\begin{maplegroup}
	\begin{mapleinput}
		\mapleinline{active}{1d}{LREtools[hypergeomsols](RE1,a(n),\{\},output=basis);}{}
	\end{mapleinput}
	\mapleresult
	\begin{maplelatex}
		\[\displaystyle \left[\dfrac{(-\text{i})^n}{\Gamma(n+1)},~\dfrac{\text{i}^n}{\Gamma(n+1)}\right] \]
	\end{maplelatex}
\end{maplegroup}
\begin{maplegroup}
	\begin{mapleinput}
		\mapleinline{active}{1d}{LREtools[hypergeomsols](RE2,a(n),\{\},output=basis);}{}
	\end{mapleinput}
	\mapleresult
	\begin{maplelatex}
		\[\displaystyle \left[\dfrac{(-1)^n}{n},~\dfrac{\left(\frac{1}{2}-\frac{\text{i}\sqrt{3}}{2}\right)^n}{n},~\dfrac{\left(\frac{1}{2}+\frac{\text{i}\sqrt{3}}{2}\right)^n}{n}\right]\]
	\end{maplelatex}
\end{maplegroup}
\noindent showing that the general coefficients of $\arcsin(z)$ in RE1 and the one of $\exp(z^2)$ in RE2 are missed.

Although some algorithms for computing $m$-fold hypergeometric term solutions of holonomic recurrence equations have been described, none of them is implemented. For example, in (\cite{cluzeau2006computing}) and (\cite{van1999finite}) an algorithm using linear operators is developed, but the described approach needs non-commutative factorization for its implementation. In our approach however, non-commutative algebra is not needed. We will use a different view on holonomic recurrence equations and develop a new algorithm to detect all their $m$-fold hypergeometric term solutions. Thus with our Maple and Maxima\footnote{We used Maxima 5.44.0 for this paper.} implementations, the issue with $m$-fold hypergeometric term solutions of holonomic recurrence equations is solved as demonstrated in Maxima below (see \cite{BTphd}).

\noindent
\begin{minipage}[t]{8ex}\color{red}\bf
	\begin{verbatim}
		(%i1) 
	\end{verbatim}
\end{minipage}
\begin{minipage}[t]{\textwidth}\color{blue}
	\begin{verbatim}
		RE1:FindRE(asin(z)+cos(z),z,a[n]);
	\end{verbatim}
\end{minipage}
\definecolor{labelcolor}{RGB}{100,0,0}
\begin{multline*}
	\displaystyle
	\parbox{10ex}{$\color{labelcolor}\mathrm{\tt (\%o1) }\quad $}
	-2\cdot \left( 1+n\right) \cdot \left( 2+n\right) \cdot \left( 3+n\right) \cdot \left( 4+n\right) \cdot {{a}_{n+4}}+2\cdot \left( 1+n\right) \cdot \left( 2+n\right) \cdot \\
	\left( -1+4\cdot n+{{n}^{2}}\right) \cdot {{a}_{n+2}}-n\cdot \left( -22+21\cdot n-10\cdot {{n}^{2}}+{{n}^{3}}\right) \cdot {{a}_{n}}
	\\+\left( n-2\right) \cdot \left( -41+39\cdot n-11\cdot {{n}^{2}}+{{n}^{3}}\right) \cdot {{a}_{n-2}}+{{\left( n-4\right) }^{2}}\cdot {{a}_{n-4}}=0\mbox{}
\end{multline*}

\noindent
\begin{minipage}[t]{8ex}\color{red}\bf
	\begin{verbatim}
		(%i2) 
	\end{verbatim}
\end{minipage}
\begin{minipage}[t]{\textwidth}\color{blue}
	\begin{verbatim}
		mfoldHyper(RE1,a[n]);
	\end{verbatim}
\end{minipage}
\definecolor{labelcolor}{RGB}{100,0,0}
\[\displaystyle
\parbox{10ex}{$\color{labelcolor}\mathrm{\tt (\%o2) }\quad $}
\left[\left[2,\left\{ \frac{{{\left( -1\right) }^{n}}}{\left( 2\cdot n\right) !},\frac{{{4}^{n}}\cdot {{n!}^{2}}}{{{n}^{2}}\cdot \left( 2\cdot n\right) !} \right\}\right]\right]\mbox{}
\]

\noindent
\begin{minipage}[t]{8ex}\color{red}\bf
	\begin{verbatim}
		(%i3) 
	\end{verbatim}
\end{minipage}
\begin{minipage}[t]{\textwidth}\color{blue}
	\begin{verbatim}
		RE2:FindRE(exp(z^2)+log(1+z^3),z,a[n]);
	\end{verbatim}
\end{minipage}
\definecolor{labelcolor}{RGB}{100,0,0}
\begin{multline*}
	\displaystyle
	\parbox{10ex}{$\color{labelcolor}\mathrm{\tt (\%o3) }\quad $}
	-\left( n-2\right) \cdot \left( n-1\right) \cdot \left( 1+n\right) \cdot {{a}_{n+1}}+2\cdot {{\left( n-4\right) }^{2}}\cdot \left( n-1\right) \cdot {{a}_{n-1}}\\
	+\left( n-17\right) \cdot \left( n-4\right) \cdot \left( n-2\right) \cdot {{a}_{n-2}}-4\cdot \left( n-6\right) \cdot \left( n-3\right) \cdot {{a}_{n-3}}\\
	+2\cdot \left( n-4\right) \cdot \left( 107-28\cdot n+2\cdot {{n}^{2}}\right) \cdot {{a}_{n-4}}+2\cdot \left( n-7\right) \cdot {{\left( n-5\right) }^{2}}\cdot {{a}_{n-5}}\\
	-4\cdot \left( n-6\right) \cdot \left( 2\cdot n-15\right) \cdot {{a}_{n-6}}+2\cdot \left( n-13\right) \cdot {{\left( n-7\right) }^{2}}\cdot {{a}_{n-7}}\\
	-4\cdot {{\left( n-9\right) }^{2}}\cdot {{a}_{n-9}}=0\mbox{}
\end{multline*}

\noindent
\begin{minipage}[t]{8ex}\color{red}\bf
	\begin{verbatim}
		(%i4) 
	\end{verbatim}
\end{minipage}
\begin{minipage}[t]{\textwidth}\color{blue}
	\begin{verbatim}
		mfoldHyper(RE2,a[n]);
	\end{verbatim}
\end{minipage}
\definecolor{labelcolor}{RGB}{100,0,0}
\[\displaystyle
\parbox{10ex}{$\color{labelcolor}\mathrm{\tt (\%o4) }\quad $}
\left[\left[1,\left\{\frac{{{\left( -1\right) }^{n}}}{n}\right\}\right],\left[2,\left\{\frac{1}{n!}\right\}\right],\left[3,\left\{\frac{{{\left( -1\right) }^{n}}}{n}\right\}\right]\right]\mbox{}
\]

\noindent
\begin{minipage}[t]{8ex}\color{red}\bf
	\begin{verbatim}
		(%i5) 
	\end{verbatim}
\end{minipage}
\begin{minipage}[t]{\textwidth}\color{blue}
	\begin{verbatim}
		FPS(asin(z)+cos(z),z,n);
	\end{verbatim}
\end{minipage}
\definecolor{labelcolor}{RGB}{100,0,0}
\[\displaystyle
\parbox{10ex}{$\color{labelcolor}\mathrm{\tt (\%o5) }\quad $}
\left( \sum_{n=0}^{\infty }\frac{\left( 2\cdot n\right) !\cdot {{z}^{1+2\cdot n}}}{\left( 2\cdot n+1\right) \cdot {{4}^{n}}\cdot {{n!}^{2}}}\right) +\sum_{n=0}^{\infty }\frac{{{\left( -1\right) }^{n}}\cdot {{z}^{2\cdot n}}}{\left( 2\cdot n\right) !}\mbox{}
\]

\noindent
\begin{minipage}[t]{8ex}\color{red}\bf
	\begin{verbatim}
		(%i6) 
	\end{verbatim}
\end{minipage}
\begin{minipage}[t]{\textwidth}\color{blue}
	\begin{verbatim}
		FPS(exp(z^2)+log(1+z^3),z,n);
	\end{verbatim}
\end{minipage}
\definecolor{labelcolor}{RGB}{100,0,0}
\[\displaystyle
\parbox{10ex}{$\color{labelcolor}\mathrm{\tt (\%o6) }\quad $}
\left( \sum_{n=0}^{\infty }\frac{{{\left( -1\right) }^{n}}\cdot {{z}^{3\cdot \left( 1+n\right) }}}{n+1}\right) +\sum_{n=0}^{\infty }\frac{{{z}^{2\cdot n}}}{n!}\mbox{}
\]

Another important issue that we solve is the step which consists in deducing, when it exists, the correct linear combination needed to find the hypergeometric type representation sought. Let $P_0(z), P_1(z),\ldots,P_d(z)$ be $d+1$ polynomials in $\mathbb{K}(z)$, and $f_1(z),\ldots, f_d(z)$ some analytic expressions that have $m$-fold hypergeometric term coefficients in their power series expansions. More generally, our algorithm handles formal series formulas of expressions of the form
\begin{equation}
	P_0(z) + \sum_{j=1}^{d} P_j(z)f_j(z). \label{geninput}
\end{equation}

The output of such an input is of course a linear combination of hypergeometric type series, plus a polynomial which might be zero. If the correct linear combination of $m$-fold hypergeometric term solutions of the corresponding linear recurrence equation is not found, then the output might be missed. This happens sometimes with Maple for the hypergeometric ($m=1$) case. For example, Maple gives

\begin{maplegroup}
	\begin{mapleinput}
		\mapleinline{active}{1d}{convert((z+z\symbol{94}2+1)*exp(z)+(z\symbol{94}3+3)*log(1+z),FPS);
		}{}
	\end{mapleinput}
	\mapleresult
	\begin{maplelatex}
		\[\displaystyle (z+z^2+1)e^z + (z^3+3)\ln(z+1)\]
	\end{maplelatex}
\end{maplegroup}
\begin{maplegroup}
	\begin{mapleinput}
		\mapleinline{active}{1d}{}{}
	\end{mapleinput}
\end{maplegroup}
\vspace{-0.7cm}
\begin{maplegroup}
	\begin{mapleinput}
		\mapleinline{active}{1d}{convert(1+z+z\symbol{94}2+z\symbol{94}3*arctan(z),FPS);
		}{}
	\end{mapleinput}
	\mapleresult
	\begin{maplelatex}
		\[\displaystyle  1+z+z^2+z^3\cdot\arctan(z)\]
	\end{maplelatex}
\end{maplegroup}
\begin{maplegroup}
	\begin{mapleinput}
		\mapleinline{active}{1d}{}{}
	\end{mapleinput}
\end{maplegroup}
\vspace{-0.5cm}
\noindent whereas our algorithm yields correctly

\noindent
\begin{minipage}[t]{8ex}\color{red}\bf
	\begin{verbatim}
		(%i7) 
	\end{verbatim}
\end{minipage}
\begin{minipage}[t]{\textwidth}\color{blue}
	\begin{verbatim}
		FPS((z+z^2+1)*exp(z)+(z^3+3)*log(1+z),z,n);
	\end{verbatim}
\end{minipage}
\definecolor{labelcolor}{RGB}{100,0,0}
\vspace{-0.5cm}
\begin{small}
	\begin{multline}
		\parbox{10ex}{$\color{labelcolor}\mathrm{\tt (\%o7) }\quad $} \frac{8\cdot {{z}^{3}}}{3}+{{z}^{2}}+5\cdot z+1 \\
		+\left( \sum_{n=0}^{\infty }-\frac{\left( -68-117\cdot n-61\cdot {{n}^{2}}-13\cdot {{n}^{3}}-{{n}^{4}}-{{\left( -1\right) }^{n}}\cdot \left( 4+n\right) !+2\cdot n\cdot {{\left( -1\right) }^{n}}\cdot \left( 4+n\right) !\right) \cdot {{z}^{4+n}}}{\left( n+1\right) \cdot \left( n+4\right) \cdot \left( 4+n\right) !}\right)\mbox{} \label{polypart1}
	\end{multline}
\end{small}
\begin{minipage}[t]{8ex}\color{red}\bf
	\begin{verbatim}
		(%i8) 
	\end{verbatim}
\end{minipage}
\begin{minipage}[t]{\textwidth}\color{blue}
	\begin{verbatim}
		FPS(1+z+z^2+z^3*atan(z),z,n);
	\end{verbatim}
\end{minipage}
\definecolor{labelcolor}{RGB}{100,0,0}
\[\displaystyle
\parbox{10ex}{$\color{labelcolor}\mathrm{\tt (\%o8) }\quad $}
\left( \sum_{n=0}^{\infty }\frac{{{\left( -1\right) }^{n}}\cdot {{z}^{2\cdot n}}}{2\cdot n-3}\right) +z+\frac{4}{3}\mbox{}
\]

Note that for these two latter examples van Hoeij's algorithm finds the corresponding hypergeometric terms

\begin{maplegroup}
	\begin{mapleinput}
		\mapleinline{active}{1d}{LREtools[hypergeomsols](SimpleRE((z\symbol{94}2+z+1)*exp(z)+(z\symbol{94}3+3)
			*log(1+z),z,a(n)),a(n),\{\},output=basis);}{}
	\end{mapleinput}
	\mapleresult
	\begin{maplelatex}
		\[\displaystyle \left[\dfrac{(-1)^n(2n-9)}{(n-3)n},~\dfrac{(n^2+1)}{\Gamma(n+1)}\right]\]
	\end{maplelatex}
\end{maplegroup}
\begin{maplegroup}
	\begin{mapleinput}
		\mapleinline{active}{1d}{LREtools[hypergeomsols](SimpleRE(1+z+z\symbol{94}2+z\symbol{94}3*arctan(z),
			z,a(n)),a(n),\{\},output=basis);}{}
	\end{mapleinput}
	\mapleresult
	\begin{maplelatex}
		\[\displaystyle \left[\dfrac{i^n}{n-3},~\dfrac{(-1)^n}{n-3}\right]\]
	\end{maplelatex}
\end{maplegroup}
\noindent but the power series terms are missed by the Maple command \textit{convert}. We mention that this issue is not related to an argument of \textit{convert} which has to be specified, in particular the order of the differential equations involved in the computations. Indeed the default value used for the upper bound of the differential equations sought for power series computations is $4$. However, using our Maxima procedure \textit{HolonomicDE} which implements a variant of Koepf's algorithm to compute holonomic differential equations, one finds the following differential equations of order less than $4$.

\noindent
\begin{minipage}[t]{8ex}\color{red}\bf
	\begin{verbatim}
		(%i9) 
	\end{verbatim}
\end{minipage}
\begin{minipage}[t]{\textwidth}\color{blue}
	\begin{verbatim}
		HolonomicDE((z+z^2+1)*exp(z)+(z^3+3)*log(1+z),F(z));
	\end{verbatim}
\end{minipage}
\definecolor{labelcolor}{RGB}{100,0,0}
\begin{small}
	\begin{multline*}\displaystyle
		\parbox{10ex}{$\color{labelcolor}\mathrm{\tt (\%o9) }\quad $}
		\left( 1+z\right) \cdot \left( 63+99\cdot z-18\cdot {{z}^{2}}-84\cdot {{z}^{3}}-9\cdot {{z}^{4}}+33\cdot {{z}^{5}}+4\cdot {{z}^{6}}-2\cdot {{z}^{7}}+{{z}^{8}}+{{z}^{9}}\right)\\
		\cdot \left( \frac{{{d}^{3}}}{d\,{{z}^{3}}}\cdot \mathrm{F}\left( z\right) \right)-\big( 36+27\cdot z-171\cdot {{z}^{2}}-222\cdot {{z}^{3}}+54\cdot {{z}^{4}}+147\cdot {{z}^{5}}+39\cdot {{z}^{6}}+14\cdot {{z}^{8}}+9\cdot {{z}^{9}}\\
		+{{z}^{10}}\big)\cdot \left( \frac{{{d}^{2}}}{d\,{{z}^{2}}}\cdot \mathrm{F}\left( z\right) \right)+\big( -162-441\cdot z-441\cdot {{z}^{2}}-15\cdot {{z}^{3}}+186\cdot {{z}^{4}}+42\cdot {{z}^{5}}-8\cdot {{z}^{6}}\\
		+51\cdot {{z}^{7}}+35\cdot {{z}^{8}}+5\cdot {{z}^{9}}\big)\cdot \left( \frac{d}{d\,z}\cdot \mathrm{F}\left( z\right) \right)-3\cdot \left( -1-z+{{z}^{2}}\right) \cdot \big( -42-42\cdot z\\
		+18\cdot {{z}^{2}}+55\cdot {{z}^{3}}+52\cdot {{z}^{4}}+22\cdot {{z}^{5}}+3\cdot {{z}^{6}}\big)\cdot \mathrm{F}\left( z\right) =0\mbox{}
	\end{multline*}
\end{small}
\noindent
\clearpage
\begin{minipage}[t]{8ex}\color{red}\bf
	\begin{verbatim}
		(%i10) 
	\end{verbatim}
\end{minipage}
\begin{minipage}[t]{\textwidth}\color{blue}
	\begin{verbatim}
		HolonomicDE(1+z+z^2+z^3*atan(z),F(z));
	\end{verbatim}
\end{minipage}
\definecolor{labelcolor}{RGB}{100,0,0}
\begin{multline*}
	\displaystyle
	\parbox{10ex}{$\color{labelcolor}\mathrm{\tt (\%o10) }\quad $}
	z\cdot \left( 1+{{z}^{2}}\right) \cdot \left( 3+2\cdot z+4\cdot {{z}^{2}}+2\cdot {{z}^{3}}\right) \cdot \left( \frac{{{d}^{2}}}{d\,{{z}^{2}}}\cdot \mathrm{F}\left( z\right) \right) \\
	-2\cdot \left( 3+3\cdot z+8\cdot {{z}^{2}}+6\cdot {{z}^{3}}+4\cdot {{z}^{4}}+3\cdot {{z}^{5}}\right) \cdot \left( \frac{d}{d\,z}\cdot \mathrm{F}\left( z\right) \right)\\
	+6\cdot \left( 1+z+2\cdot {{z}^{2}}+{{z}^{4}}\right) \cdot \mathrm{F}\left( z\right) =0\mbox{}
\end{multline*}

Therefore we can deduce that the reason why Maple's command \textit{convert} cannot find the power series formulas of $(1+z+z^2)\exp(z)+(z^3+3)\log(1+z)$ and $1+z+z^2+z^3\arctan(z)$ is that the linear combinations of hypergeometric term solutions of the corresponding holonomic recurrence equations are missed.

As observed with the previous computations, our implementation is written in the CAS Maxima whose internal command \textit{powerseries} dedicated to power series computations is rather limited. Indeed, this command is based on a pattern matching instead of algorithmic model. The syntax is \textit{powerseries(expr,$z$,$z_0$)} that calculates the power series formula of \textit{expr} with respect to the variable $z$ at the point of development $z_0$. Below are some examples showing certain arising issues with the command \textit{powerseries} that are solved by our implementation.

\begin{itemize}
	\item Power series written as a square of a power series.
	
	\noindent
	\begin{minipage}[t]{8ex}\color{red}\bf
		\begin{verbatim}
			(%i11) 
		\end{verbatim}
	\end{minipage}
	\begin{minipage}[t]{\textwidth}\color{blue}
		\begin{verbatim}
			powerseries(asin(z)^2,z,0);
		\end{verbatim}
	\end{minipage}
	\definecolor{labelcolor}{RGB}{100,0,0}
	\[\displaystyle
	\parbox{10ex}{$\color{labelcolor}\mathrm{\tt (\%o11) }\quad $}
	{{\left( \sum_{\mathit{i1}=0}^{\infty }\frac{\mathrm{genfact}\left( 2\cdot \mathit{i1}-1,\mathit{i1},2\right) \cdot {{z}^{1+2\cdot \mathit{i1}}}}{\left( 2\cdot \mathit{i1}+1\right) \cdot \mathrm{genfact}\left( 2\cdot \mathit{i1},\mathit{i1},2\right) }\right) }^{2}}\mbox{}
	\]
	
	\noindent
	\begin{minipage}[t]{8ex}\color{red}\bf
		\begin{verbatim}
			(%i12) 
		\end{verbatim}
	\end{minipage}
	\begin{minipage}[t]{\textwidth}\color{blue}
		\begin{verbatim}
			FPS(asin(z)^2,z,n);
		\end{verbatim}
	\end{minipage}
	\definecolor{labelcolor}{RGB}{100,0,0}
	\[\displaystyle
	\parbox{10ex}{$\color{labelcolor}\mathrm{\tt (\%o12) }\quad $}
	\sum_{n=0}^{\infty }\frac{{{4}^{n}}\cdot {{n!}^{2}}\cdot {{z}^{2+2\cdot n}}}{\left( n+1\right) \cdot \left( 1+2\cdot n\right) !}\mbox{}
	\]

	\item Non-classical power series not detected.
	
	\noindent
	\begin{minipage}[t]{8ex}\color{red}\bf
		\begin{verbatim}
			(%i13) 
		\end{verbatim}
	\end{minipage}
	\begin{minipage}[t]{\textwidth}\color{blue}
		\begin{verbatim}
			powerseries((1-sqrt(1-4*z))/2,z,0);
		\end{verbatim}
	\end{minipage}
	\definecolor{labelcolor}{RGB}{100,0,0}
	\[\displaystyle
	\parbox{10ex}{$\color{labelcolor}\mathrm{\tt (\%o13) }\quad $}
	\mathrm{powerseries}\left( \frac{1-\sqrt{1-4\cdot z}}{2},z,0\right) \mbox{}
	\]
	
	\noindent
	\begin{minipage}[t]{8ex}\color{red}\bf
		\begin{verbatim}
			(%i14) 
		\end{verbatim}
	\end{minipage}
	\begin{minipage}[t]{\textwidth}\color{blue}
		\begin{verbatim}
			FPS((1-sqrt(1-4*z))/2,z,n);
		\end{verbatim}
	\end{minipage}
	\definecolor{labelcolor}{RGB}{100,0,0}
	\[\displaystyle
	\parbox{10ex}{$\color{labelcolor}\mathrm{\tt (\%o14) }\quad $}
	\sum_{n=0}^{\infty }\frac{\left( 2\cdot n\right) !\cdot {{z}^{1+n}}}{\left( n+1\right) \cdot {{n!}^{2}}}\mbox{}
	\]

	\noindent
	\begin{minipage}[t]{8ex}\color{red}\bf
		\begin{verbatim}
			(%i15) 
		\end{verbatim}
	\end{minipage}
	\begin{minipage}[t]{\textwidth}\color{blue}
		\begin{verbatim}
			powerseries(asech(z),z,0);
		\end{verbatim}
	\end{minipage}
	\definecolor{labelcolor}{RGB}{100,0,0}
	\[\displaystyle
	\parbox{10ex}{$\color{labelcolor}\mathrm{\tt (\%o15) }\quad $}
	\sum_{\mathit{i1}=0}^{\infty }\frac{{{z}^{\mathit{i1}}}\cdot \left( \left. \frac{{{d}^{\mathit{i1}}}}{d\,{{z}^{\mathit{i1}}}}\cdot \mathrm{asech}\left( z\right) \right|_{z=0}\right) }{\mathit{i1}!}\mbox{}
	\]
	
	\noindent
	\begin{minipage}[t]{8ex}\color{red}\bf
		\begin{verbatim}
			(%i16) 
		\end{verbatim}
	\end{minipage}
	\begin{minipage}[t]{\textwidth}\color{blue}
		\begin{verbatim}
			FPS(asech(z),z,n);
		\end{verbatim}
	\end{minipage}
	\definecolor{labelcolor}{RGB}{100,0,0}
	\[\displaystyle
	\parbox{10ex}{$\color{labelcolor}\mathrm{\tt (\%o16) }\quad $}
	\left( \sum_{n=0}^{\infty }-\frac{{{4}^{-1-n}}\cdot \left( 1+2\cdot n\right) !\cdot {{z}^{2+2\cdot n}}}{{{\left( 1+n\right) }^{2}}\cdot {{n!}^{2}}}\right) -\mathrm{log}\left( z\right) +\mathrm{log}\left( 2\right) \mbox{}
	\]
	
	Observe that despite the general rule used for this latter example, the output given by \textit{powerseries} is wrong since the logarithmic term $\log(z)$ does not allow the computations of derivatives at $0$.
	
	\item Power series written as multiplication of two power series.
	
	\noindent
	\begin{minipage}[t]{8ex}\color{red}\bf
		\begin{verbatim}
			(%i17) 
		\end{verbatim}
	\end{minipage}
	\begin{minipage}[t]{\textwidth}\color{blue}
		\begin{verbatim}
			powerseries(exp(z)*cos(z),z,0);
		\end{verbatim}
	\end{minipage}
	\definecolor{labelcolor}{RGB}{100,0,0}
	\[\displaystyle
	\parbox{10ex}{$\color{labelcolor}\mathrm{\tt (\%o17) }\quad $}
	\left( \sum_{\mathit{i4}=0}^{\infty }\frac{{{z}^{\mathit{i4}}}}{\mathit{i4}!}\right) \cdot \sum_{\mathit{i4}=0}^{\infty }\frac{{{\left( -1\right) }^{\mathit{i4}}}\cdot {{z}^{2\cdot \mathit{i4}}}}{\left( 2\cdot \mathit{i4}\right) !}\mbox{}
	\]
	
	\noindent
	\begin{minipage}[t]{8ex}\color{red}\bf
		\begin{verbatim}
			(%i18) 
		\end{verbatim}
	\end{minipage}
	\begin{minipage}[t]{\textwidth}\color{blue}
		\begin{verbatim}
			FPS(exp(z)*cos(z),z,n);
		\end{verbatim}
	\end{minipage}
	\definecolor{labelcolor}{RGB}{100,0,0}
	\begin{multline*}
		\displaystyle
		\parbox{10ex}{$\color{labelcolor}\mathrm{\tt (\%o18) }\quad $}
		\left( \sum_{n=0}^{\infty }-\frac{{{\left( -1\right) }^{n}}\cdot {{4}^{n}}\cdot {{z}^{3+4\cdot n}}}{{{\left( \frac{1}{4}\right) }_{n}}\cdot {{\left( \frac{3}{4}\right) }_{n}}\cdot \left( 2\cdot n+1\right) \cdot \left( 4\cdot n+1\right) \cdot \left( 4\cdot n+3\right) \cdot {{64}^{n}}\cdot \left( 2\cdot n\right) !}\right)\\
		+\left( \sum_{n=0}^{\infty }\frac{{{\left( -1\right) }^{n}}\cdot {{4}^{n}}\cdot {{z}^{1+4\cdot n}}}{{{\left( \frac{1}{4}\right) }_{n}}\cdot {{\left( \frac{3}{4}\right) }_{n}}\cdot \left( 4\cdot n+1\right) \cdot {{64}^{n}}\cdot \left( 2\cdot n\right) !}\right)\\
		+\sum_{n=0}^{\infty }\frac{{{\left( -1\right) }^{n}}\cdot {{4}^{n}}\cdot {{z}^{4\cdot n}}}{{{\left( \frac{1}{4}\right) }_{n}}\cdot {{\left( \frac{3}{4}\right) }_{n}}\cdot {{64}^{n}}\cdot \left( 2\cdot n\right) !}\mbox{}
	\end{multline*}
	
	\item A bug due to the involvement of complex numbers in the expansion.
	
	\noindent
	\begin{minipage}[t]{8ex}\color{red}\bf
		\begin{verbatim}
			(%i19) 
		\end{verbatim}
	\end{minipage}
	\begin{minipage}[t]{\textwidth}\color{blue}
		\begin{verbatim}
			powerseries(log(1+z+z^2),z,0);
		\end{verbatim}
	\end{minipage}
	\definecolor{labelcolor}{RGB}{100,0,0}
	
	sign: argument cannot be imaginary; found \%\text{i}
	
	\textcolor{red}{-- an error. To debug this try: debugmode(true);}
	
	\noindent
	\begin{minipage}[t]{8ex}\color{red}\bf
		\begin{verbatim}
			(%i20) 
		\end{verbatim}
	\end{minipage}
	\begin{minipage}[t]{\textwidth}\color{blue}
		\begin{verbatim}
			FPS(log(1+z+z^2),z,n);
		\end{verbatim}
	\end{minipage}
	\definecolor{labelcolor}{RGB}{100,0,0}
	\[\displaystyle
	\parbox{10ex}{$\color{labelcolor}\mathrm{\tt (\%o20) }\quad $}
	\sum_{n=0}^{\infty }-\frac{2\cdot \mathrm{cos}\left( \frac{2\cdot \pi \cdot \left( 1+n\right) }{3}\right) \cdot {{z}^{1+n}}}{n+1}\mbox{}
	\]
	
	In this example the general coefficient is deduced as the real part of a hypergeometric term solution in $\mathbb{C}$ (extension field of $\mathbb{Q}$ involving $i$ and some irrational numbers) of the corresponding linear recurrence equation.
\end{itemize}

On the other hand, some expressions like $\tan(z), \sec(z), \csc(z), etc.$ do not lead to linear recurrence equations, although they are analytic in certain domains. Therefore, we should investigate their power series computation. For that purpose, in this paper we consider two approaches.

Our second approach is to follow the same procedure as Koepf, but this time, instead of looking for a linear differential equation, we look for quadratic ones. For example, for the tangent function, one can find the homogeneous differential equation

\begin{equation}
	\frac{{{d}^{2}}}{d\,{{z}^{2}}}\cdot \mathrm{F}\left( z\right) -2\cdot \mathrm{F}\left( z\right) \cdot \left( \frac{d}{d\,z}\cdot \mathrm{F}\left( z\right) \right) =0, \label{ref7}
\end{equation}

\noindent which after the use of the Cauchy product rule, will lead to the recurrence equation

\begin{equation}
	\left( 1+n\right) \cdot \left( 2+n\right) \cdot {{a}_{n+2}}-2\cdot \sum_{\mathit{k}=0}^{n}\left( \mathit{k}+1\right) \cdot {{a}_{\mathit{k}+1}}\cdot {{a}_{n-\mathit{k}}}=0 \label{ref8}
\end{equation}

\noindent for the corresponding Taylor coefficients. And finally, the power series can be given by a formula depending on two initial values. 

Note, however, that this approach often gives rather complicated outputs. For example in the $\sec(z)$ case, we will find the recurrence equation 
\begin{multline}
	-\sum_{k=0}^{n}\left( \left( 2-2\cdot {{k}^{2}}\right) \cdot {{a}_{k+1}}+\left( 2\cdot k+2\right) \cdot {{a}_{k+1}}\cdot n\right) \cdot {{a}_{n-k+1}}\\
	+\left( {{a}_{k}}+\left( -{{k}^{2}}-3\cdot k-2\right) \cdot {{a}_{k+2}}\right) \cdot {{a}_{n-k}}=0
\end{multline}

The best thing to do would definitely be to "solve" the recurrence equation, but despite the fact that solutions can still be unpractical for computing power series, we intend to algorithmically find simple recursive formulas for the general coefficient. Observe that the formulas

\begin{equation}
	\tan(z) =  \sum_{n \mathop = 1}^\infty \frac {\left({-1}\right)^{n - 1} 2^{2 n} \left({2^{2 n} - 1}\right) B_{2 n} \, z^{2 n - 1} } {\left({2 n}\right)!},\label{ref9}
\end{equation} 
\begin{equation}
	\sec(z) = \sum_{n=0}^{\infty}\frac{{{\left( -1\right) }^{\mathit{n}}} E_{2 n} {{z}^{2 \mathit{n}}}}{\left( 2 \mathit{n}\right) !}\label{ref10},
\end{equation}

\noindent are not explicit because of the unknowns $B_n$ and $E_n$  which represent, respectively, Bernoulli and Euler numbers. Those numbers themselves satisfy rather complicated non-holonomic recurrence equations.

In our third approach, we extend our algorithm of hypergeometric type series. Here we consider reciprocals of formal power series and build an algorithm which can compute reciprocals of power series of some analytic expressions. Using Cauchy's product rule, some other power series are also deduced. This is only available in our Maxima package.

\noindent
\begin{minipage}[t]{8ex}\color{red}\bf
	\begin{verbatim}
		(%i21) 
	\end{verbatim}
\end{minipage}
\begin{minipage}[t]{\textwidth}\color{blue}
	\begin{verbatim}
		FPS(tan(z),z,n);
	\end{verbatim}
\end{minipage}
\definecolor{labelcolor}{RGB}{100,0,0}
\[\displaystyle
\parbox{10ex}{$\color{labelcolor}\mathrm{\tt (\%o21) }\quad $}
\left[\sum_{n=0}^{\infty }\left( \sum_{k=0}^{n}\frac{{{A}_{k}}\cdot {{\left( -1\right) }^{n-k}}}{\left( 1-2\cdot k+2\cdot n\right) !}\right) \cdot {{z}^{1+2\cdot n}},{{A}_{k}}=\sum_{j=1}^{k}-\frac{{{\left( -1\right) }^{j}}\cdot {{A}_{k-j}}}{\left( 2\cdot j\right) !},{{A}_{0}}=1\right]\mbox{}
\]

\noindent
\begin{minipage}[t]{8ex}\color{red}\bf
	\begin{verbatim}
		(%i22) 
	\end{verbatim}
\end{minipage}
\begin{minipage}[t]{\textwidth}\color{blue}
	\begin{verbatim}
		FPS(sec(z),z,n);
	\end{verbatim}
\end{minipage}
\definecolor{labelcolor}{RGB}{100,0,0}
\[\displaystyle
\parbox{10ex}{$\color{labelcolor}\mathrm{\tt (\%o22) }\quad $}
\left[\sum_{n=0}^{\infty }{{A}_{n}}\cdot {{z}^{2\cdot n}},{{A}_{n}}=\sum_{k=1}^{n}-\frac{{{\left( -1\right) }^{k}}\cdot {{A}_{n-k}}}{\left( 2\cdot k\right) !},{{A}_{0}}=1\right]\mbox{}
\]

Furthermore, besides our main results, there are some other interesting ones involved in this work. Indeed, we have got some improvement towards the decision making on the equality of two analytic functions in a certain neighborhood, and the importance of such a study is well-known in computer algebra (\cite{petkovvsek1996b}). Using our approach based on finding quadratic differential equations to represent the power series of non-holonomic functions, we are able to automatically prove identities like
\begin{equation}
	\dfrac{1+\tan(z)}{1-\tan(z)} = \exp\left( 2\cdot \arctanh\left(\dfrac{\sin(2z)}{1+\cos(2z)}\right)\right),~ |z|<1, \label{identity1}
\end{equation}
\noindent which cannot be recognized without using non-trivial transformations (see \cite[Chapter 9]{koepf2006computeralgebra}). With our Maxima implementations, computing quadratic differential equations for both sides yields two compatible\footnote{Two differential equations are said to be compatible if every solution of the lower order DE is solution of the other.} differential equations as shown below.

\noindent
\begin{minipage}[t]{8ex}\color{red}\bf
	\begin{verbatim}
		(%i23) 
	\end{verbatim}
\end{minipage}
\begin{minipage}[t]{\textwidth}\color{blue}
	\begin{verbatim}
		DE1:QDE((1+tan(z))/(1-tan(z)),F(z),Inhomogeneous);
	\end{verbatim}
\end{minipage}
\definecolor{labelcolor}{RGB}{100,0,0}
\[\displaystyle
\parbox{10ex}{$\color{labelcolor}\mathrm{\tt (\%o23) }\quad $}
\frac{d}{d\,z}\cdot \mathrm{F}\left( z\right) -{{\mathrm{F}\left( z\right) }^{2}}-1=0\mbox{}
\]

\noindent
\begin{minipage}[t]{8ex}\color{red}\bf
	\begin{verbatim}
		(%i24) 
	\end{verbatim}
\end{minipage}
\begin{minipage}[t]{\textwidth}\color{blue}
	\begin{verbatim}
		DE2:QDE(exp(2*atanh(sin(2*z)/(1+cos(2*z)))),F(z));
	\end{verbatim}
\end{minipage}
\definecolor{labelcolor}{RGB}{100,0,0}
\[\displaystyle
\parbox{10ex}{$\color{labelcolor}\mathrm{\tt (\%o24) }\quad $}
\mathrm{F}\left( z\right) \cdot \left( \frac{{{d}^{3}}}{d\,{{z}^{3}}}\cdot \mathrm{F}\left( z\right) \right) -3\cdot \left( \frac{d}{d\,z}\cdot \mathrm{F}\left( z\right) \right) \cdot \left( \frac{{{d}^{2}}}{d\,{{z}^{2}}}\cdot \mathrm{F}\left( z\right) \right) +4\cdot \mathrm{F}\left( z\right) \cdot \left( \frac{d}{d\,z}\cdot \mathrm{F}\left( z\right) \right) =0\mbox{}
\]

\noindent
\begin{minipage}[t]{8ex}\color{red}\bf
	\begin{verbatim}
		(%i25) 
	\end{verbatim}
\end{minipage}
\begin{minipage}[t]{\textwidth}\color{blue}
	\begin{verbatim}
		CompatibleDE(DE1,DE2,F(z));
	\end{verbatim}
\end{minipage}
\definecolor{labelcolor}{RGB}{100,0,0}
\[\displaystyle \hspace{-1cm}\textit{The two differential equations are compatible}\]
\[\displaystyle
\parbox{10ex}{$\color{labelcolor}\mathrm{\tt (\%o25) }\quad $}
\textbf{true}\mbox{}
\]

Moreover, our FPS algorithm simplifies the difference to zero in a neighborhood of $0$. 

\noindent
\begin{minipage}[t]{8ex}\color{red}\bf
	\begin{verbatim}
		(%i26) 
	\end{verbatim}
\end{minipage}
\begin{minipage}[t]{\textwidth}\color{blue}
	\begin{verbatim}
		FPS((1+tan(z))/(1-tan(z))
		-exp(2*atanh(sin(2*z)/(1+cos(2*z)))),z,n);
	\end{verbatim}
\end{minipage}
\definecolor{labelcolor}{RGB}{100,0,0}
\[\displaystyle
\parbox{10ex}{$\color{labelcolor}\mathrm{\tt (\%o26) }\quad $}
0\mbox{}
\]
Things are a little simpler with Maple, but this is because Maple automatically applies some simplification on the right-hand side so that the same differential equation is found.
\begin{maplegroup}
	\begin{mapleinput}
		\mapleinline{active}{1d}{FPS[QDE](exp(2*arctanh(sin(2*z)/(1+cos(2*z)))),F(z))
		}{}
	\end{mapleinput}
	\mapleresult
	\begin{maplelatex}
		\[\displaystyle -2\, \left( {\frac {\rm d}{{\rm d}z}}F \left( z \right)  \right) F \left( z \right) +{\frac {{\rm d}^{2}}{{\rm d}{z}^{2}}}F \left( z \right) =0\]
	\end{maplelatex}
\end{maplegroup}

We have also obtained an algorithm for asymptotically fast computation of Taylor expansions of large order for holonomic functions. This is a result already observed in \cite[Section 10.27]{koepf2006computeralgebra}. We have implemented a Maxima function named \textit{Taylor} with the same syntax as \textit{taylor(f,z,$z_0$,d)} which computes the Taylor expansion of order $d$ of $f(z)$. And it turns out as expected that our \textit{Taylor} command is clearly asymptotically faster than \textit{taylor} for holonomic functions. As an example we have:

\noindent
\begin{minipage}[t]{8ex}\color{red}\bf
	\begin{verbatim}
		(%i27) 
	\end{verbatim}
\end{minipage}
\begin{minipage}[t]{\textwidth}\color{blue}
	\begin{verbatim}
		taylor(sin(z)^2,z,0,10);
	\end{verbatim}
\end{minipage}
\definecolor{labelcolor}{RGB}{100,0,0}
\[\displaystyle
\parbox{10ex}{$\color{labelcolor}\mathrm{\tt (\%o27)/T/ }\quad $}
{{z}^{2}}-\frac{{{z}^{4}}}{3}+\frac{2\cdot {{z}^{6}}}{45}-\frac{{{z}^{8}}}{315}+\frac{2\cdot {{z}^{10}}}{14175}+\cdots\mbox{}
\]

\noindent
\begin{minipage}[t]{8ex}\color{red}\bf
	\begin{verbatim}
		(%i28) 
	\end{verbatim}
\end{minipage}
\begin{minipage}[t]{\textwidth}\color{blue}
	\begin{verbatim}
		Taylor(sin(z)^2,z,0,10);
	\end{verbatim}
\end{minipage}
\definecolor{labelcolor}{RGB}{100,0,0}
\[\displaystyle
\parbox{10ex}{$\color{labelcolor}\mathrm{\tt (\%o28) }\quad $}
\frac{2\cdot {{z}^{10}}}{14175}-\frac{{{z}^{8}}}{315}+\frac{2\cdot {{z}^{6}}}{45}-\frac{{{z}^{4}}}{3}+{{z}^{2}}\mbox{}
\]
that illustrates the coincidence between both outputs. Testing the efficiency for large order gives:

\noindent
\begin{minipage}[t]{8ex}\color{red}\bf
	\begin{verbatim}
		(%i29) 
	\end{verbatim}
\end{minipage}
\begin{minipage}[t]{\textwidth}\color{blue}
	\begin{verbatim}
		taylor(sin(z)^2,z,0,1000)$
	\end{verbatim}
\end{minipage}
\definecolor{labelcolor}{RGB}{100,0,0}
\mbox{}\\\mbox{Evaluation took 15.8500 seconds (19.6100 elapsed)}

\noindent
\begin{minipage}[t]{8ex}\color{red}\bf
	\begin{verbatim}
		(%i30) 
	\end{verbatim}
\end{minipage}
\begin{minipage}[t]{\textwidth}\color{blue}
	\begin{verbatim}
		Taylor(sin(z)^2,z,0,1000)$
	\end{verbatim}
\end{minipage}
\definecolor{labelcolor}{RGB}{100,0,0}
\mbox{}\\\mbox{Evaluation took 1.8300 seconds (1.8900 elapsed)}

\noindent which shows that, asymptotically, our \textit{Taylor} command takes just about a fraction of Maxima's internal \textit{taylor} computation timing for $\sin(z)^2$.

As already observed, most of our computations will be presented with the CAS Maxima which was the main system used when our results were being developed (see \cite{BTphd}). The further sections are organized as follows.

The next section recalls the two first steps in Koepf's algorithm: computing holonomic differential equations and holonomic recurrence equations. In this section, we add some linear algebra tricks in order to gain more efficiency in the process of getting holonomic differential equations for hypergeometric type functions. This section ends with the description of our asymptotically fast algorithm for computing Taylor expansions of holonomic functions.

Section \ref{sec2} is devoted to our most important result, which is to present a complete algorithm to find all $m$-fold hypergeometric term solutions of linear recurrence equations with polynomial coefficients.

In Section \ref{sec3} we complete Koepf's algorithm with mfoldHyper. We will see in this section how our algorithm handles the starting points, the polynomial part, and the Puiseux numbers involved in a given hypergeometric type series expansion. 

Section \ref{sec4} describes our approach based on the computation of quadratic differential equations for representing non-holonomic power series. This part is another important contribution of our work. 


\section{Computing holonomic equations}\label{sec1}

This section is about computing holonomic differential equations from given expressions and use them to deduce holonomic recurrence equations of their Taylor or power series coefficients prior to computing $m$-fold hypergeometric term solutions or Taylor expansions. The first interest of this section is our strategy to implement the given algorithm in (\cite{Koepf1992}) for computing holonomic differential equations, which is slightly more efficient than its original version. Moreover, we give an algorithm to check whether two holonomic differential equations are compatible. Furthermore, we propose a formal algorithm that performs fast computation of larger order Taylor expansions for holonomic functions as discussed in \cite[Section 10.27]{koepf2006computeralgebra}. The algorithm to deduce the recurrence equations from holonomic differential equations will be given just because it contains a rewrite rule that plays an asset role in Section \ref{sec4}.

We recall Koepf's original approach from an introductory example. Let $f(z)=\exp(z)+\cos(z)$. We want to find a holonomic differential equation (DE) with coefficients in $\mathbb{Q}[z]$ satisfied by $f(z)$ and deduce a holonomic recurrence equation (RE) with coefficients in $\mathbb{Q}[n]$ satisfied by the Taylor coefficients $a_n$ of $f$.

\textbf{Search of a holonomic DE:} $f'(z)=\exp(z)-\sin(z)$, and therefore there is no $A_0(z)\in \mathbb{Q}(z)$ such that $f'(z)+A_0(z)f(z)=0$ because $A_0(z)$ should be $-(\exp(z)-\sin(z))/(\exp(z)+\cos(z))$ which is not rational. Therefore we move to the second order.
We search for $A_0(z),A_1(z)\in\mathbb{Q}(z)$ such that
\begin{equation*}
	f''(z)+A_1(z)f'(z)+A_0(z)f(z)=0.
\end{equation*}
Note that at this step $f'(z)$ is computed again. We write the sum in terms of linearly independent parts and we obtain
\begin{equation*}
	\left(1+A_0(z)+A_1(z)\right) \exp(z) + \left(A_0(z)-1\right)\cos(z)-A_1(z)\sin(z)=0,
\end{equation*}
and we get the linear system
\begin{equation*}
	\begin{cases}
		A_0(z)-1=0\\
		A_1(z)=0\\
		A_0(z)+A_1(z)+1=0
	\end{cases},
\end{equation*}
which has no solution. However, for the third order, the relation 
\begin{equation*}
	f^{(3)}(z)+A_2(z)f^{(2)}(z)+A_1(z)f^{(1)}(z)+A_0(z)f(z)=0,
\end{equation*}
with $f^{(3)}(z)=\exp(z)+\sin(z)$, leads to a solvable system. By writing the sum in terms of linearly independent parts, we get
\begin{equation}
	\left(1+A_0(z)+A_1(z)+A_2(z)\right)\exp(z)+\left(A_0(z)-A_2(z)\right)\cos(z)+\left(1-A_1(z)\right)\sin(z)=0, \label{holoFirstcase1}
\end{equation}
so 
\begin{equation}
	\begin{cases}1-A_1(z)=0\\A_0(z)-A_2(z)=0\\1+A_0(z)+A_1(z)+A_2(z)=0\end{cases} \iff \begin{cases}A_0(z)=A_2(z)=-1\\A_1(z)=1\end{cases}\label{coeffholo}.
\end{equation} 
Hence the corresponding holonomic DE 
\begin{equation}
	\frac{{{d}^{3}}}{d\,{{z}^{3}}}\cdot \mathrm{F}\left( z\right) -\frac{{{d}^{2}}}{d\,{{z}^{2}}}\cdot \mathrm{F}\left( z\right) +\frac{d}{d\,z}\cdot \mathrm{F}\left( z\right) -\mathrm{F}\left( z\right) =0. \label{holoFirstcase}
\end{equation}
for $f(z)$ is valid.

Notice that if some of the coefficients found in $(\ref{coeffholo})$ did have polynomials different from $1$ as their denominators, then a further step would be the multiplication of the resulting holonomic DE with the least common multiple of the denominators.

\textbf{Conversion of $(\ref{holoFirstcase})$ into its corresponding RE:}  We set 
\begin{equation*}
	f(z)=\sum_{n=0}^{\infty}a_nz^n,
\end{equation*}
so
\begin{align}
	&f'(z)=\sum_{n=1}^{\infty}na_nz^{n-1}=\sum_{n=0}^{\infty}(n+1)a_{n+1}z^n,\\
	&f''(z)=\sum_{n=2}^{\infty}n(n-1)a_nz^{n-2}=\sum_{n=0}^{\infty}(n+2)(n+1)a_{n+2}z^n,\\
	&f^{(3)}(z)= \sum_{n=3}^{\infty}n(n-1)(n-2)a_nz^{n-3}=\sum_{n=0}^{\infty}(n+3)(n+2)(n+1)a_{n+3}z^n.
\end{align}
By substitution of these identities in $(\ref{holoFirstcase})$ for $f$, we get
\begin{eqnarray}
	0 &=& f^{(3)}(z)-f''(z)+f'(z)+f(z) \nonumber\\
	&=& \sum_{n=0}^{\infty}(n+3)(n+2)(n+1)a_{n+3}z^n-\sum_{n=0}^{\infty}(n+2)(n+1)a_{n+2}z^n \nonumber\\
	&\phantom{=}& \phantom{\sum_{n=0}^{\infty}(n+3)(n+2)(n+1)a_{n+3}z^n}+\sum_{n=0}^{\infty}(n+1)a_{n+1}z^n-\sum_{n=0}^{\infty}a_nz^n\nonumber\\
	&=& \sum_{n=0}^{\infty}\left[(n+3)(n+2)(n+1)a_{n+3}-(n+2)(n+1)a_{n+2}+(n+1)a_{n+1}-a_n\right]z^n,\nonumber
\end{eqnarray}
hence by equating the coefficients we find the holonomic RE
\begin{equation}
	(n+3)(n+2)(n+1)a_{n+3}-(n+2)(n+1)a_{n+2}+(n+1)a_{n+1}-a_n = 0,~n=0,1,2,\ldots~. \label{firsexRE}
\end{equation}
for $a_n$. For more details about this procedure see (\cite{Koepf1992,SC93}).

\subsection{Computing holonomic differential equations}\label{mymethodDE}

Note that the algorithm we consider here often finds the holonomic DE of lowest order. In this paragraph we look at the algorithm in a different way. Indeed, if we consider $f(z)=\cos(z)+\sin(z)$, it is clear that the differential equation that the algorithm will find is a null linear combination of the derivatives of $f(z)$ expanded in the basis $\left(\cos(z),\sin(z)\right)$. Thus, we can save the time spent by computing all the derivatives at each iteration of the original algorithm, by trying to write each derivative in the same basis. Therefore the computations of the original algorithm that are done for all the derivatives at each iteration are reduced to only one derivative.

\subsubsection{Idea of the method}
$\mathbb{K}$ denotes a field of characteristic zero. Let $\left(A_0,A_1,\ldots,A_{N-1}\right)\in\mathbb{K}(z)^N,$ $N\in \mathbb{N}$ such that an analytic expression $f$ satisfies
\begin{equation}
	\mathcal{F}\left(f,f',\ldots,f^{(N-1)},f^{(N)}\right) = f^{(N)} + A_{N-1}\cdot f^{(N-1)} + \cdots + A_1\cdot f + A_0 f = 0.
\end{equation}

We consider a basis $(e_1,e_2,\ldots,e_l)$ of the linear span of all linearly independent summands  over $\mathbb{K}(z)$ that appear in the complete expansions of the derivatives $f,f',\ldots, f^{(N)}$. For example, assume for $0\leqslant i\neq j \leqslant N$, that
\begin{eqnarray}
	f^{(i)} &=& e_{i,1} + \cdots + e_{i,k_i},\nonumber\\
	f^{(j)} &=& e_{j,1} + \cdots + e_{j,k_j},\nonumber
\end{eqnarray}
for some positive integers $k_i$ and $k_j$, such that $e_{i,u}/e_{i,v}\notin \mathbb{K}(z)$ for all $u,v\in\llbracket1,k_i\rrbracket$ \footnote{For $l,k\in\mathbb{Z}, l<k$ we define $\llbracket l, k \rrbracket := \{l, l+1,\ldots, k\}$} and $e_{j,u}/e_{j,v}\notin \mathbb{K}(z)$ for all $u,v\in\llbracket1,k_j\rrbracket$. Then for $f^{(i)}$ and $f^{(j)}$ we consider a basis of the linear span of $\{e_{i,1},\ldots,e_{i,k_i},e_{j,1},\ldots,e_{j,k_j}\}$ which may have less elements since some $e_{i,u},$ $u\in\llbracket1,k_i\rrbracket$ and $e_{j,v},$ $v\in\llbracket1,k_j\rrbracket$ can be linearly dependent.

Thus each derivative $f^{(j)}, j\in\mathbb{N}_{\geqslant 0}$ $\left(f^{(0)}=f\right)$ can be seen as a vector in the linear space $\langle e_1,e_2,\ldots,e_l\rangle$.

\noindent Since
\begin{equation}
	\mathcal{F}\left(f,f',\ldots,f^{(N-1)},f^{(N)}\right)=0 \iff -f^{(N)} = A_0\cdot f + A_1\cdot f' + \cdots + A_{N-1}\cdot f^{(N-1)},
\end{equation}
we can write in a matrix representation
\begin{equation}
	-f^{(N)} = \left[f,f',\ldots, f^{(N-1)}\right]_{\left(e_1,e_2,\ldots,e_l\right)} \left(A_0,A_1,\ldots,A_{N-1}\right)^T.
\end{equation}

Therefore, one sees that seeking for a holonomic DE of order $N$ satisfied by a given expression $f(z)$ is equivalent to find a basis in a $\mathbb{K}(z)$-linear space where the system
$$\left(f^{(N)}(z),f^{(N-1)}(z),\ldots,f'(z),f(z)\right)$$ 
is linearly dependent. The idea of the method described in this section is to construct such a basis while computing each derivative of $f(z)$ and their components. Thus, in each iteration $N$, if all the $N+1$ derivatives are expanded in the same basis, then we try to solve the resulting linear system.

\subsubsection{Description of the method}
Now let us present the algorithm in general. Consider an expression $f(z)$ which is not identically zero with $l_0$ linearly independent sub-terms over $\mathbb{K}(z)$. Then we can write
\begin{equation}
	f(z) = f_1(z) + f_2(z) + \cdots + f_{l_0}(z) \label{lind}
\end{equation}	
with $f_i(z)/f_j(z) \notin \mathbb{K}(z),~1\leqslant i\neq j\leqslant l_0$. $f(z)$ is seen as a vector in the basis $E_0=\left(e_1,e_2,\ldots,e_{l_0}\right)$ where $e_i=f_i$. Then we compute the first derivative of $f(z)$, and we get the following two possibilities:
\begin{itemize}
	\item either $f'(z)$ is expressed in $E_0$, which means that there exist $\alpha_{1,i} = \alpha_{1,i}(z)\in\mathbb{K}(z), i=1,\ldots,l_0$ such that
	\begin{equation}
		f'(z) = \alpha_{1,1} e_1 + \alpha_{1,2} e_2 + \ldots + \alpha_{1,l_0} e_{l_0}. \label{fprimbasis}
	\end{equation}
	Here in the worst case, $f'(z)$ and $f(z)$ are linearly independent, but then we know that all derivatives can be expanded in $E_0$.
	\item Or $f'(z)$ is not expanded in $E_0$, which means that $E_0$ has to be augmented and there exist $\alpha_{1,i}\in\mathbb{K}(z), i=1,\ldots,l_0$ and an integer $l_1 > l_0$ such that
	\begin{equation}
		f'(z) = \alpha_{1,1} e_1 + \alpha_{1,2} e_2 + \ldots + \alpha_{1,l_0} e_{l_0}+ e_{l_0+1}+\ldots + e_{l_1}. \label{fprimbasis1}
	\end{equation}
	Observe here that the new basis is $E_1=\left(e_1,\ldots,e_{l_1}\right)$ with $e_{l_0+1},\ldots,e_{l_1}$ corresponding to independent terms brought by $f'(z)$. And also $\alpha_{1,i}$, $i\leqslant l_0$ could be zero.
\end{itemize}

Actually in the first case we may find the DE sought, but in order to present a general view of the algorithm, let us assume that $f(z)$ satisfies a DE of order $N\geqslant 1$. It follows that the process will lead to the representation
\begin{align}
	&f(z)= e_1+\ldots+e_{l_0}\\
	&f'(z)= \alpha_{1,1} e_1 + \cdots + \alpha_{1,l_0} e_{l_0}+ e_{l_0+1}+\cdots + e_{l_1}\\
	&f''(z)= \alpha_{2,1} e_1 + \cdots + \alpha_{2,l_0} e_{l_0}+ \alpha_{2,l_0+1} e_{l_0+1}+\cdots + \alpha_{2,l_1}e_{l_1}+e_{l_1+1}+\ldots+e_{l_2}\\
	&\cdots\\
	&f^{(N-1)}(z)=\alpha_{N-1,1} e_1 + \cdots + \alpha_{N-1,l_{N-2}} e_{l_{N-2}}+ e_{l_{N-2}+1}+\cdots + e_{l_{N-1}}\\
	&f^{(N)} = \alpha_{N,1} e_1 +\cdots + \alpha_{N,l_{N-1}} e_{l_{N-1}},
\end{align}
with positive integers $l_0\leqslant l_1 \leqslant \ldots \leqslant l_{N-1} $,  and $\alpha_{i,j}\in\mathbb{K}(z),~i=1,\ldots,N,~j=1,\ldots,l_{i-1}.$

Note, however, that in this step $N$ only $f^{(N)}(z)$ is computed by differentiating $f^{(N-1)}(z)$. In each step, the algorithm keeps the coefficients $\alpha_{N,i}$, the augmented basis and the current derivative.

It is straightforward to see that the final basis considered is $E_{N-1}=\left(e_1,\ldots,e_{l_{N-1}}\right).$ The algorithm keeps information in a matrix form, say $H$, and at the $N^{\text{th}}$ iteration we have

\begin{small}
	\begin{align}
		&H=
		\begin{bmatrix}
			1&\cdots&1&0&\cdots&0&0&\cdots&\cdots&0&\cdots&0\\
			\alpha_{1,1}&\cdots&\alpha_{1,l_0}&1&\cdots&1&0&\cdots&\cdots&0&\cdots&0\\
			\alpha_{2,1}&\cdots&\alpha_{2,l_0}&\cdots&\cdots&\alpha_{2,l_1}&1&\cdots&\cdots&0&\cdots&0\\
			\vdots&\vdots&\vdots&\vdots&\vdots&\vdots&\vdots&\vdots&\vdots&\vdots&\vdots&\vdots\\
			\alpha_{N-1,1}&\cdots&\cdots&\cdots&\cdots&\cdots&\cdots&\cdots&\alpha_{N-1,l_{N-2}}&1&\cdots&1\\
			\alpha_{N,1}&\cdots&\cdots&\cdots&\cdots&\cdots&\cdots&\cdots&\alpha_{N,l_{N-2}}&\alpha_{N,l_{N-2}+1}&\cdots&\alpha_{N,l_{N-1}}
		\end{bmatrix}. \nonumber\\
		\label{holoMat}
	\end{align}	
\end{small}
\noindent $H$ is a $(N+1)\times l_{N-1}$ matrix in $\mathbb{K}(z)$, and it contains all information that we need to find the holonomic DE sought. Indeed, one can easily show that the coefficients $A_i(z)\in\mathbb{K}(z),i=0,\ldots,N-1$ computed in the original approach constitute the rational components of the unique vector solution of the matrix system
\begin{equation}
	A\cdot v=b,
\end{equation}
with
\begin{align}
	&A= \begin{bmatrix}
		1     &\alpha_{1,1}   &\alpha_{2,1}    &\ldots&\alpha_{N-1,1}\\
		\vdots&\vdots         &\vdots	        &\vdots&\vdots\\
		1	   &\alpha_{1,l_0} &\alpha_{2,l_0}  &\ldots&\alpha_{N-1,l_0}\\
		0	   &1			   &\alpha_{2,l_0+1}&\ldots&\alpha_{N-1,l_0+1}\\
		\vdots&\vdots         &\vdots	        &\vdots&\vdots\\
		0     &0              &0               &\ldots&1
	\end{bmatrix}\\
	&\text{and}\nonumber\\
	&b=-\begin{bmatrix}\alpha_{N,1}\\\alpha_{N,2}\\\vdots\\\alpha_{N,l_{N-1}}\end{bmatrix}.
\end{align}		
Observe that $b$ is the negative (note the minus in front) of the transpose of the last row of $H$, and $A$ is the transpose of $H$ deprived of its last row. The above linear system has $l_{N-1}$ linear equations and $N$ unknowns. 

Let us see how this algorithm works on some examples.
\begin{example}\item
	
	\begin{itemize}
		\item $f(z)=\sin(z)+z\cos(z)$. We have two linearly independent terms over $\mathbb{Q}(z)$, and we can write
		$$f(z)= e_1 + e_2,$$
		with $e_1=\sin(z)$ and $e_2=z\cos(z)$. Computing the first derivative, we get
		$$f'(z)=-z\sin(z)+2\cos(z)=-z\cdot e_1+\dfrac{2}{z}\cdot e_2.$$
		At this step we have 
		$$H=\begin{bmatrix}
			1&1\\
			-z&\frac{2}{z}
		\end{bmatrix},$$
		and we get the system $\begin{bmatrix}1\\1\end{bmatrix}v = \begin{bmatrix}z\\-\frac{2}{z}\end{bmatrix}$, which has no solution $v\in\mathbb{Q}(z)$ (seen as a one-dimensional vector space). Now we compute the second derivative, and we get
		$$f''(z)=-3\sin(z)-z\cos(z)=-3\cdot e_1-e_2.$$
		$H$ becomes
		$$H=\begin{bmatrix}
			1&1\\
			-z&\frac{2}{z}\\
			-3&-1
		\end{bmatrix},$$
		which gives the system 
		$$
		\begin{bmatrix}
			1&-z\\
			1&\frac{2}{z}
		\end{bmatrix} v = \begin{bmatrix}3\\1\end{bmatrix},~v\in\mathbb{Q}(z)^2,
		$$
		and we get the solution
		\begin{equation}
			\left\lbrace\left(\dfrac{6+z^2}{2+z^2},\dfrac{-2z}{2+z^2}\right)\right\rbrace \label{sol1}.
		\end{equation}
		The differential equation sought is therefore
		\begin{equation}
			(2+z^2)f''(z) - 2zf'(z) + (6+z^2)f=0.
		\end{equation}
		\item $f(z)=\arctan(z)$. We have only one term so $e_1=\arctan(z)$. For the first derivative
		$$f'(z)=\dfrac{1}{1+z^2}=0\cdot e_1+e_2,$$
		where $e_2=1/(1+z^2)$. Since the basis has been augmented there is no system to be solved, and at this step we have
		$$
		H=\begin{bmatrix}
			1&0\\0&1
		\end{bmatrix}.
		$$
		The second derivative gives
		$$f''(z)=-\dfrac{2z}{(1+z^2)^2}=0\cdot e_1-\dfrac{2z}{1+z^2}\cdot e_2,$$
		and we get
		$$H=\begin{bmatrix}
			1&0\\0&1\\0&-\frac{2z}{1+z^2}
		\end{bmatrix}$$	
		which produces the system 
		$$
		\begin{bmatrix}
			1&0\\0&1
		\end{bmatrix} v = \begin{bmatrix}0\\\frac{2z}{1+z^2}\end{bmatrix},~v\in\mathbb{Q}(z)^2.
		$$
		We get $v=\left(0,2z/(1+z^2)\right)$, hence the holonomic DE
		\begin{equation}
			(1+z^2)f''(z) + 2zf'(z)=0.
		\end{equation}
		\item $f(z)=\exp(z)+\log(1+z)=e_1 + e_2$, with $e_1=\exp(z)$ and $e_2=\log(1+z)$. The first derivative yields
		$$f'(z)=\exp(z)+\dfrac{1}{1+z}=e_1+0\cdot e_2 + e_3,$$
		with $e_3=1/(1+z)$. Since a new term is added to the basis, the next step is to compute the second derivative
		$$f''(z) = \exp(z)-\dfrac{1}{(1+z)^2}=e_1 + 0\cdot e_2 -\dfrac{1}{(1+z)}\cdot e_3.$$
		No term is added to the basis. We try to solve the resulting system. At this stage
		$$H=\begin{bmatrix}
			1&1&0\\1&0&1\\1&0&-\frac{1}{1+z}
		\end{bmatrix},$$
		and we get the system
		$$
		\begin{bmatrix}
			1&1\\1&0\\0&1
		\end{bmatrix} v = \begin{bmatrix}-1\\0\\\frac{1}{1+z}\end{bmatrix},~v\in\mathbb{Q}(z)^2,
		$$
		which has no solution. We move on and compute the third derivative
		$$f^{(3)}(z)=\exp(z)+\dfrac{2}{(1+z)^3}=e_1+0\cdot e_2 + \dfrac{2}{(1+z)^2}\cdot e_3.$$
		Thus
		$$H=\begin{bmatrix}
			1&1&0\\1&0&1\\1&0&-\frac{1}{1+z}\\1&0&\frac{2}{(1+z)^2}
		\end{bmatrix},$$
		and we obtain the system
		$$
		\begin{bmatrix}
			1&1&1\\1&0&0\\0&1&-\frac{1}{1+z}
		\end{bmatrix} v = \begin{bmatrix}-1\\0\\-\frac{2}{(1+z)^2}\end{bmatrix},
		$$
		whose solution in $\mathbb{Q}(z)^3$ is
		\begin{equation}
			\left\lbrace\left(0,-\dfrac{3+z}{(1+z)(2+z)},-\dfrac{-1+2z+z^2}{(1+z)(2+z)}\right)\right\rbrace \label{sol2}.
		\end{equation}
		Therefore we get the holonomic DE
		\begin{equation}
			(1+z)(2+z)f^{(3)}(z) - (-1+2z+z^2) f^{''}(z) - (3+z)f'(z)=0.
		\end{equation}
	\end{itemize}
\end{example}

We implemented this algorithm in Maple and Maxima as \textit{HolonomicDE(f,F(z))} to compute a holonomic DE with the indeterminate \textit{F(z)} for an expression $f$ of the variable $z$. In Maxima the package contains a global variable \textit{Nmax} which can be changed in order to look for higher order differential equations. Here are some examples.

\noindent
\begin{minipage}[t]{8ex}\color{red}\bf
	\begin{verbatim}
		(%i1) 
	\end{verbatim}
\end{minipage}
\begin{minipage}[t]{\textwidth}\color{blue}
	\begin{verbatim}
		HolonomicDE(asin(z),F(z));
	\end{verbatim}
\end{minipage}
\definecolor{labelcolor}{RGB}{100,0,0}
\[\displaystyle
\parbox{10ex}{$\color{labelcolor}\mathrm{\tt (\%o1) }\quad $}
\left( z-1\right) \cdot \left( 1+z\right) \cdot \left( \frac{{{d}^{2}}}{d\,{{z}^{2}}}\cdot \mathrm{F}\left( z\right) \right) +z\cdot \left( \frac{d}{d\,z}\cdot \mathrm{F}\left( z\right) \right) =0\mbox{}
\]

\noindent
\begin{minipage}[t]{8ex}\color{red}\bf
	\begin{verbatim}
		(%i2) 
	\end{verbatim}
\end{minipage}
\begin{minipage}[t]{\textwidth}\color{blue}
	\begin{verbatim}
		HolonomicDE(cos(z)*log(1+z),F(z));
	\end{verbatim}
\end{minipage}
\definecolor{labelcolor}{RGB}{100,0,0}
\begin{multline*}
	\parbox{10ex}{$\color{labelcolor}\mathrm{\tt (\%o2) }\quad $}
	{{\left( 1+z\right) }^{2}}\cdot \left( 1+2\cdot z\right) \cdot \left( 3+2\cdot z\right) \cdot \left( \frac{{{d}^{4}}}{d\,{{z}^{4}}}\cdot \mathrm{F}\left( z\right) \right) \\
	+4\cdot \left( 1+z\right) \cdot \left( 1+4\cdot z+2\cdot {{z}^{2}}\right) \cdot \left( \frac{{{d}^{3}}}{d\,{{z}^{3}}}\cdot \mathrm{F}\left( z\right) \right)+2\cdot z\cdot \left( 2+z\right) \cdot \left( 5+8\cdot z+4\cdot {{z}^{2}}\right) \cdot \left( \frac{{{d}^{2}}}{d\,{{z}^{2}}}\cdot \mathrm{F}\left( z\right) \right) \\+4\cdot \left( 1+z\right) \cdot \left( 1+4\cdot z+2\cdot {{z}^{2}}\right) \cdot \left( \frac{d}{d\,z}\cdot \mathrm{F}\left( z\right) \right) +\left( -3+6\cdot z+19\cdot {{z}^{2}}+16\cdot {{z}^{3}}+4\cdot {{z}^{4}}\right) \cdot \mathrm{F}\left( z\right) =0\mbox{}
\end{multline*}

\noindent
\begin{minipage}[t]{8ex}\color{red}\bf
	\begin{verbatim}
		(%i3) 
	\end{verbatim}
\end{minipage}
\begin{minipage}[t]{\textwidth}\color{blue}
	\begin{verbatim}
		HolonomicDE(sin(z)^4*asin(z),F(z));
	\end{verbatim}
\end{minipage}
\definecolor{labelcolor}{RGB}{100,0,0}
\[\displaystyle
\parbox{10ex}{$\color{labelcolor}\mathrm{\tt (\%o7) }\quad $}
\text{false}\mbox{}
\]

\noindent
\begin{minipage}[t]{8ex}\color{red}\bf
	\begin{verbatim}
		(%i4) 
	\end{verbatim}
\end{minipage}
\begin{minipage}[t]{\textwidth}\color{blue}
	\begin{verbatim}
		Nmax:10$
	\end{verbatim}
\end{minipage}
\noindent
\begin{minipage}[t]{8ex}\color{red}\bf
	\begin{verbatim}
		(%i5) 
	\end{verbatim}
\end{minipage}
\begin{minipage}[t]{\textwidth}\color{blue}
	\begin{verbatim}
		HolonomicDE(sin(z)^4*asin(z),F(z))$
	\end{verbatim}
\end{minipage}
\definecolor{labelcolor}{RGB}{100,0,0}
\[\displaystyle \text{Evaluation took 1.7500 seconds (1.7570 elapsed) using 858.342 MB.}\mbox{}
\]

The latter is a big differential equation of order $10$ (default value) $>$ \textit{Nmax}, that is why the value of \textit{Nmax} was changed to $10$. The following table shows the efficiency gain of this method on the original one implemented in Maple under the commands \textit{DEtools[FindODE]} (Approach 2) and \textit{FormalPowerSeries[HolonomicDE]} (Approach 3). The difference between these two Maple commands is that \textit{DEtools[FindODE]} uses some simplification in order to consider some special functions. Our Maple implementation is accessible through \textit{FPS[HolonomicDE]} (Approach 1).

\begin{table}[h!]
	\caption{Comparison of timings for computing holonomic DEs}\label{compareHDE}
	\begin{center}
			\begin{tabular}{|l|l|l|l|}
			\hline
			\multirow{2}{*}{$f(z)$}    & \multicolumn{3}{l|}{CPU time}
			\\ \cline{2-4} 
			& Approach 1 & Approach 2 & Approach 3 \\ \hline
			$\sin(z)^4\arcsin(z)$                         & $0.703$   & $1.578$   & $0.734$     \\ \hline
			$\sin(z)^6 \arcsin(z)^3$                      & $116.406$ & $259.250$ & $254.046$   \\ \hline
			$\arctan(z)^2+\sin(z)^4+\log(1+z)^5$          & $1.422$   & $11.859$  & $11.594$    \\ \hline
			\begin{tabular}[c]{@{}l@{}}$\exp(z^{11}+3) \cos(z)+\log(1+z^7)$\\$+\cos(z)^3 \sinh(z)^5$\end{tabular}                      & $189.968$ & $1466.234$& $1373.515$  \\ \hline
			$\arccos(z)^{13}+\sinh(z)^{19}$               & $97.125$  & $318.563$ & $418.859$   \\ \hline
			\begin{tabular}[c]{@{}l@{}}$(3z+5z^7+11z^{13})\log(1+z^3+z^7)$\\$+\arctanh(z)\cos(z)^5$\end{tabular}                                                      & $37.171$  & $332.953$ & $365.922$   \\ \hline
		\end{tabular}
	\end{center}
\end{table}
	
As expected, we realized that our approach gives better timings for expressions whose derivatives yield many linearly independent sub-expressions. This is an important aspect for the computation of hypergeometric type power series since we will consider linear combinations of expressions. In some other cases the timings of both algorithms get closer as the order of derivatives increases because of the use of memory in our approach, but this rather rarely happens.

\subsection{Computing holonomic recurrence equations}

After expanding a holonomic differential equation, one easily establishes the following rewrite rule needed to find the corresponding recurrence equation of the underlying power series coefficients (see \cite{Koepf1992}).
\begin{equation}
		z^l\frac{d^j}{dz}f    \longrightarrow  (n+1-l)_j\cdot a_{n+j-l}.\label{Algo3corr}
\end{equation}
We used the so-called Pochhammer symbol or shifted factorial defined as $(x)_0=1$ and $(x)_n=x\cdot(x+1)\cdots(x+n-1)$ for a constant $x$ and a non-negative integer $n$. The final holonomic RE sought is obtained after collecting similar terms. To get simpler results, we finally factorize the coefficients.

Our Maxima package contains the function \textit{DEtoRE(DE,F(z),a[n])} which converts the holonomic differential equation DE depending on the variable \textit{z} into its corresponding recurrence equation for the coefficients \textit{a[n]}.
\begin{example}\label{eg42}\item
	
	\noindent
	\begin{minipage}[t]{8ex}\color{red}\bf
		\begin{verbatim}
			(%i1) 
		\end{verbatim}
	\end{minipage}
	\begin{minipage}[t]{\textwidth}\color{blue}
		\begin{verbatim}
			DE:HolonomicDE(asin(z),F(z))$
		\end{verbatim}
	\end{minipage}
	
	\noindent
	\begin{minipage}[t]{8ex}\color{red}\bf
		\begin{verbatim}
			(%i2) 
		\end{verbatim}
	\end{minipage}
	\begin{minipage}[t]{\textwidth}\color{blue}
		\begin{verbatim}
			DEtoRE(DE,F(z),a[n]);
		\end{verbatim}
	\end{minipage}
	\definecolor{labelcolor}{RGB}{100,0,0}
	\[\displaystyle
	\parbox{10ex}{$\color{labelcolor}\mathrm{\tt (\%o2) }\quad $}
	{{n}^{2}}\cdot {{a}_{n}}-\left( 1+n\right) \cdot \left( 2+n\right) \cdot {{a}_{n+2}}=0\mbox{}
	\]

    \textnormal{One can also directly compute these recurrence equations with our Maxima function\linebreak \textit{FindRE(f,z,a[n])}}.
	
	\noindent
	\begin{minipage}[t]{8ex}\color{red}\bf
		\begin{verbatim}
			(%i3) 
		\end{verbatim}
	\end{minipage}
	\begin{minipage}[t]{\textwidth}\color{blue}
		\begin{verbatim}
			FindRE(cos(z)+sin(z),z,a[n]);
		\end{verbatim}
	\end{minipage}
	\definecolor{labelcolor}{RGB}{100,0,0}
	\[\displaystyle
	\parbox{10ex}{$\color{labelcolor}\mathrm{\tt (\%o6) }\quad $}
	\left( 1+n\right) \cdot \left( 2+n\right) \cdot {{a}_{n+2}}+{{a}_{n}}=0\mbox{}
	\]
\end{example}

\subsection{Computing larger-order Taylor expansions of holonomic functions}

Hypergeometric type functions are strictly contained in the family of holonomic functions. Indeed, it is proved that linear combinations and products of holonomic functions are also holonomic (\cite{koepf1997algebra, stanley1980differentiably}). Although power series expansions of linear combinations of hypergeometric type functions remain accessible through the algorithm of Section \ref{sec3}, it is not generally the case with their products. This is because algorithm mfoldHyper cannot find explicit formulas for the coefficients of power series expansions of certain holonomic functions. Nevertheless, as we are able to find recurrence equations for the coefficients, the use of enough initial values coupled with their corresponding holonomic REs uniquely characterizes their Taylor coefficients in a certain neighborhood. It is thanks to this observation that Koepf proceeded in computing Taylor polynomials of holonomic expressions by using the outputs of \textit{FindRE} (see \cite[Chapter 10]{koepf2006computeralgebra}). We are going to use \textit{FindRE} to develop an algorithm to compute Taylor polynomials of holonomic functions and compare the result with Maxima's internal command \textit{taylor}. First, we give some particular normal forms for holonomic functions.

\subsubsection{On normal forms of holonomic functions}\label{normalform}

By an application of the well-known Cauchy-Lipschitz (also called Picard-Lindelöf) theorem (see \cite[Theorem 2.2]{teschl2012ordinary}) for uniqueness, the holonomic differential equation of lowest order and enough initial values corresponding to a holonomic function can be used for identification purposes. Therefore, such a representation constitutes a normal form (see \cite[Chapter 3]{geddes1992algorithms}). However, our procedure to compute holonomic DEs reduces this normal form definition of functions to expressions, because it might happen that equivalent expressions have two different representations. Consider the Chebyshev polynomials for example. For $|x|<1$, the following two differential equations define the same function.

\noindent
\begin{minipage}[t]{8ex}\color{red}\bf
	\begin{verbatim}
		(%i1) 
	\end{verbatim}
\end{minipage}
\begin{minipage}[t]{\textwidth}\color{blue}
	\begin{verbatim}
		DE1:HolonomicDE(cos(4*acos(x)),F(x));
	\end{verbatim}
\end{minipage}
\definecolor{labelcolor}{RGB}{100,0,0}
\[\displaystyle
\parbox{10ex}{$\color{labelcolor}\mathrm{\tt (\%o1) }\quad $}
\left( x-1\right) \cdot \left( 1+x\right) \cdot \left( \frac{{{d}^{2}}}{d\,{{x}^{2}}}\cdot \mathrm{F}\left( x\right) \right) +x\cdot \left( \frac{d}{d\,x}\cdot \mathrm{F}\left( x\right) \right) -16\cdot \mathrm{F}\left( x\right) =0\mbox{}
\]

\noindent
\begin{minipage}[t]{8ex}\color{red}\bf
	\begin{verbatim}
		(%i2) 
	\end{verbatim}
\end{minipage}
\begin{minipage}[t]{\textwidth}\color{blue}
	\begin{verbatim}
		DE2:HolonomicDE(8*x^4-8*x^2+1,F(x));
	\end{verbatim}
\end{minipage}
\definecolor{labelcolor}{RGB}{100,0,0}
\[\displaystyle
\parbox{10ex}{$\color{labelcolor}\mathrm{\tt (\%o2) }\quad $}
\left( 1-8\cdot {{x}^{2}}+8\cdot {{x}^{4}}\right) \cdot \left( \frac{d}{d\,x}\cdot \mathrm{F}\left( x\right) \right) -16\cdot x\cdot \left( 2\cdot {{x}^{2}}-1\right) \cdot \mathrm{F}\left( x\right) =0\mbox{}
\]

Note that this happens because \textit{HolonomicDE} does not use simplifications on its input expressions. However, one can easily prove that these two differential equations are compatible by substituting the lower order differential equation into the larger one. We implemented such a procedure in Maxima as \textit{CompatibleDE(DE1,DE2,F(x))}.

\noindent
\begin{minipage}[t]{4.000000em}\color{red}\bfseries
\begin{verbatim}
	(%i3) 
\end{verbatim}
\end{minipage}
\begin{minipage}[t]{\textwidth}\color{blue}
\begin{verbatim}
	CompatibleDE(DE1,DE2,F(x)); 
\end{verbatim}
\end{minipage}
\mbox{}\\The two differential equations are compatible
\[\parbox{10ex}{$\color{labelcolor}\mathrm{\tt (\%o3) }\quad $}
\text{true}\mbox{}
\]

Next we give the representation that we use as a first step towards computing Taylor expansions. Given an analytic expression $f(z)$ at $z_0$ whose Taylor coefficients satisfy a holonomic recurrence equation of the form
\begin{equation}
	P_d(n)\cdot a_{n+d} + P_{d-1}(n)\cdot P_{n+d-1}+\cdots+P_0(n)\cdot a_n =0,~ n\in\mathbb{Z},~d\in\mathbb{N}\label{simpleRE}
\end{equation}
with $P_0(n)\cdot P_d(n)\neq 0, \forall n \geqslant n_0$, $f(z)$ is identified to
\begin{equation}
	\sum_{n=0}^{\infty} a_{n+n_0} (z-z_0)^{n+n_0},~\text{ with }
	\begin{cases}
		a_{n+d} = \dfrac{P_{d-1}(n)\cdot a_{n+d-1}+\ldots+P_0(n)\cdot a_n}{P_d(n)},~n\geqslant n_0\\[3mm]
		a_j = \lim_{z\rightarrow z_0} \dfrac{\left(\frac{{{d}^{j}}}{d\,{{z}^{j}}}\cdot f\right)(z)}{j!},~j=n_0,n_0+1,\ldots,n_0+d-1
	\end{cases}. \label{formalRec}
\end{equation}

The value of $n_0$ is deduced by the property $P_0(n)\cdot P_d(n)\neq 0, \forall n \geqslant n_0$, that is
\begin{equation}
	n_0 = 1 + \max \lbrace n\in\mathbb{Z},~P_0(n)\cdot P_d(n) = 0 \rbrace. \label{n_0}
\end{equation}
Notice that $n_0$ is computed before any cancellation of common factors in $(\ref{simpleRE})$, which guarantees that $n_0$ does exist in general for any output of \textit{FindRE} of order $d\geqslant 1$. Indeed, the rewrite rule $(\ref{Algo3corr})$ allows to remark that the differential equation terms with derivative order greater than $1$ lead to recurrence equation terms with non-constant polynomial coefficients. The determination of $n_0$ is crucial to extract parts in the series expansion that are not involved in the summation formula. Much details about the computation of power series extra parts are given in Section \ref{sec4}.

\subsubsection{Taylor expansions of holonomic functions}\label{section432}

Let $f(z)$ be a holonomic function. The Taylor expansion of $f(z)$ at $z_0$ is computed as the one of $g(z)=f(z+z_0)$ at 0 if $z_0$ is a constant, of $g(z)=f\left(-1/z\right)$ if $z_0=-\infty$, and of $g(z)=f\left(1/z\right)$ if $z_0=\infty$. This algorithm is an immediate use of $(\ref{formalRec})$ with the following steps.

\begin{algorithm}[h!]
	\caption{Computing Taylor polynomials of holonomic functions at $z_0\in\mathbb{C}\cup\{-\infty,\infty\}$}
	\label{Algo3}
	\begin{algorithmic}[3]
		\Require A holonomic expression $f(z)$, a point $z_0$, and an integer $N$.
		\Ensure Taylor polynomial of degree $N$ of $f(z)$.
		\begin{enumerate}
			\item If $z_0\in\mathbb{C}$, set $g(z):=f(z+z_0)$, else if $z_0=-\infty$, set $g(z):=f\left(-\frac{1}{z}\right)$, else set $g(z):=f\left(\frac{1}{z}\right)$.
			\item Use \textit{FindRE} to compute a holonomic recurrence equation satisfied by the Taylor coefficients of $g(z)$ and write it in the form
			\begin{equation}
				P_d(n)\cdot a_{n+d} + P_{d-1}(n)\cdot a_{n+d-1}+\cdots+P_0(n)\cdot a_n =0,~ n\in\mathbb{Z},~d\in\mathbb{N}.\label{simpleREAlgo3}
			\end{equation}
			\item \label{step1Algo3} If $d=0$ then return the Taylor expansion of order $N$ with the internal command of Taylor expansions, say $taylor(f(z),z_0,N)$.
			\item Compute
			\begin{equation}
				n_0 = 1 + \max \lbrace n\in\mathbb{Z},~P_0(n)\cdot P_d(n) = 0 \rbrace. \label{n_0Algo3}
			\end{equation}
			\item \label{step2Algo3} If $N\leqslant n_0 + d - 1$ then stop and return $taylor(f(z),z_0,N)$.
			\item \label{Tlaststep} If $N>n_0 + d - 1$ then 
			\begin{equation}
				\mathcal{T} := taylor(f(z),z_0,n_0+d-1).
			\end{equation}
			\begin{itemize}
				\item[\ref{Tlaststep}-1-1] If $z_0\in\mathbb{C}$ then compute
				\begin{equation}
					a_j := \text{coeff}\left(T,z-z_0,j\right),~j=n_0,n_0+1,\ldots,n_0+d-1,
				\end{equation}
			\end{itemize}
		\end{enumerate}
		\algstore{pause2}
	\end{algorithmic}
\end{algorithm}
\clearpage 

\begin{algorithm}
	\ContinuedFloat
	\caption{Computing Taylor polynomials of holonomic functions at $z_0\in\mathbb{C}\cup\{-\infty,\infty\}$}
	\begin{algorithmic}
		\algrestore{pause2}	
		\State
		\begin{enumerate}
			\setcounter{enumi}{5}
			\item (resume) where $\text{coeff}(\mathcal{T},z-z_0,j)$ collects the coefficient of $(z-z_0)^j$ in $\mathcal{T}$. The Maxima syntax is adopted.
			\begin{itemize}
				\item[\ref{Tlaststep}-1-2] For $j=n_0, n_0+1,\ldots, N-d$, compute
				\begin{eqnarray}
					a_{j+d} &=& \dfrac{P_{d-1}(j)\cdot a_{j+d-1}+\ldots+P_0(j)\cdot a_j}{P_d(j)} \label{linearAlgo3}\\[3mm]
					\mathcal{T} 		&=& \mathcal{T} + a_{j+d} \cdot (z-z_0)^{j+d}
				\end{eqnarray}
			\end{itemize}
			\begin{itemize}
				\item[\ref{Tlaststep}-2-1] Else if $|z_0|=\infty$ then compute
				\begin{equation}
					a_j := \text{coeff}\left(\mathcal{T},1/z,j\right),~j=n_0,n_0+1,\ldots,n_0+d-1,
				\end{equation}
				\item[\ref{Tlaststep}-2-2] For $j=n_0, n_0+1,\ldots, N-d$, compute
				\begin{eqnarray}
					a_{j+d} &=& \dfrac{P_{d-1}(j)\cdot a_{j+d-1}+\ldots+P_0(j)\cdot a_j}{P_d(j)} \label{linearAlgo31}\\[3mm]
					\mathcal{T} 		&=& \mathcal{T} + a_{j+d} \cdot \left(\frac{1}{z}\right)^{j+d}
				\end{eqnarray}
			\end{itemize}
			
			\item Return $\mathcal{T}$.
		\end{enumerate}
	\end{algorithmic}
\end{algorithm}

\begin{remark}\item
\begin{itemize}
	\item The relation $(\ref{linearAlgo3})$ shows that the coefficients are computed in the same finite number of operations. Therefore the complexity is linear.
	\item As we are interested by the asymptotic complexity of this algorithm, there is no issue of comparison when the internal command is called in step $\ref{step1Algo3}$ and $\ref{step2Algo3}$. And moreover, this helps to extract the part of the expansion which cannot be deduced from the recurrence equation used. An example is $\arcsech(z)$ for which Maxima's command \textit{taylor} gives the following expansion of order $4$ at $0$.
	
	\noindent
	\begin{minipage}[t]{8ex}\color{red}\bf
		\begin{verbatim}
			(%i1) 
		\end{verbatim}
	\end{minipage}
	\begin{minipage}[t]{\textwidth}\color{blue}
		\begin{verbatim}
			taylor(asech(z),z,0,4);
		\end{verbatim}
	\end{minipage}
	\definecolor{labelcolor}{RGB}{100,0,0}
	\[\displaystyle
	\parbox{10ex}{$\color{labelcolor}\mathrm{\tt (\%o1)/T/ }\quad $}
	-\mathrm{log}\left( z\right) +\mathrm{log}\left( 2\right) + \cdots
	-\frac{{{z}^{2}}}{4}-\frac{3\cdot {{z}^{4}}}{32}+ \cdots \mbox{}
	\]
	
	\item In order to treat certain interesting non-analytic cases like $\arcsech(z)$ in Maxima, instead of the limit command which can generate errors due to singularities, the internal Maxima command \textit{taylor} is used. The initial values are then the coefficients of $(z-z_0)^j,~j=n_0,\ldots,n_0+d-1$.
\end{itemize}
\end{remark}

We implemented Algorithm \ref{Algo3} as \textit{Taylor} with the same syntax as the internal command \textit{taylor}. Let us present some examples.
\clearpage
\begin{example}\item
	
	\noindent
	\begin{minipage}[t]{8ex}\color{red}\bf
		\begin{verbatim}
			(%i1) 
		\end{verbatim}
	\end{minipage}
	\begin{minipage}[t]{\textwidth}\color{blue}
		\begin{verbatim}
			Taylor(asech(z),z,0,7);
		\end{verbatim}
	\end{minipage}
	\definecolor{labelcolor}{RGB}{100,0,0}
	\[\displaystyle
	\parbox{10ex}{$\color{labelcolor}\mathrm{\tt (\%o1) }\quad $}
	-\mathrm{log}\left( z\right) -\frac{5\cdot {{z}^{6}}}{96}-\frac{3\cdot {{z}^{4}}}{32}-\frac{{{z}^{2}}}{4}+\mathrm{log}\left( 2\right) \mbox{}
	\]
	
	\noindent
	\begin{minipage}[t]{8ex}\color{red}\bf
		\begin{verbatim}
			(%i2) 
		\end{verbatim}
	\end{minipage}
	\begin{minipage}[t]{\textwidth}\color{blue}
		\begin{verbatim}
			Taylor(atan(z),z,inf,7);
		\end{verbatim}
	\end{minipage}
	\definecolor{labelcolor}{RGB}{100,0,0}
	\[\displaystyle
	\parbox{10ex}{$\color{labelcolor}\mathrm{\tt (\%o2) }\quad $}
	-\frac{1}{z}+\frac{1}{3\cdot {{z}^{3}}}-\frac{1}{5\cdot {{z}^{5}}}+\frac{1}{7\cdot {{z}^{7}}}+\frac{\pi }{2}\mbox{}
	\]
	
	\textnormal{Next we evaluate the timings for larger orders. We mention that when the given expression is a classical one like $\sin(z),\arctan(z), \exp(z), etc$, Maxima seems to use the power series formula and has very good asymptotic timings. Therefore, for tests we rather use expressions for which the internal Maxima command} \textit{powerseries} \textnormal{cannot find the power series formulas.}

\noindent
\begin{minipage}[t]{4.000000em}\color{red}\bfseries
\begin{verbatim}
	(%i3) 
\end{verbatim}
\end{minipage}
\begin{minipage}[t]{\textwidth}\color{blue}
\begin{verbatim}
	Taylor(atan(z)*exp(z),z,0,1000)$
\end{verbatim}
\end{minipage}
\[\displaystyle \text{Evaluation took 2.3120 seconds (2.3140 elapsed) using 834.089 MB.}\mbox{}
\]

\noindent
\begin{minipage}[t]{4.000000em}\color{red}\bfseries
\begin{verbatim}
	(%i4) 
\end{verbatim}
\end{minipage}
\begin{minipage}[t]{\textwidth}\color{blue}
\begin{verbatim}
	taylor(atan(z)*exp(z),z,0,1000)$
\end{verbatim}
\end{minipage}
\[\displaystyle \text{Evaluation took 34.2350 seconds (34.2220 elapsed) using 5800.949 MB.}\mbox{}
\]
	
\textnormal{The gap between the two computations gets larger as we increase the order.}
\end{example}

\section{Algorithm mfoldHyper}\label{sec2}

This section presents a new result in solving holonomic recurrence equations, that is the computation of their $m$-fold hypergeometric term solutions, for positive integers $m$. Several proposals have been given to compute such solutions, among the most recent work in this direction one could cite (\cite{horn2012m}), which is a revisited and improved approach of the one described in (\cite{petkovvsek1993finding}). In the latter, a key step of the proposed algorithm relies on the determination of the linear operator's right factors of the given holonomic RE. Such a factorization is not unique in general because the factors do not commute. In (\cite{horn2012m}), the authors adapted van Hoeij's approach as explained in (\cite{cluzeau2006computing}) and define a concept like the $m$-Newton polygon for $m$-fold hypergeometric term solutions of a given holonomic RE. This approach computes special types of right factors corresponding to $m$-fold hypergeometric term solutions using the shift operator of order $m$ with the hypothesis that no rational solution exists. With similar approaches, $m$-fold hypergeometric terms were referred to as $m$-hypergeometric sequences in (\cite{petkovvsek1993finding}), $m$-interlacings of hypergeometric sequences (see the conclusion of \cite{van1999finite}), or Liouvillian sequences (see \cite{hendricks1999solving}). Having considered all these developments, we propose to attack the problem from another point of view. We will see that contrary to all these previous methods which try to find solutions in higher dimensions (defined by the solution space) encoded by right factors of a given RE's operator, our method only focus on one dimension to deduce a basis of the subspace of $m$-fold hypergeometric term solutions. However, before getting in the description of mfoldHyper, we would like to highlight its importance based in hypergeometric summation.

\subsection{Scope of the algorithm}

Computations of infinite series were connected to holonomic REs by Celine Fasenmyer. To find hypergeometric term representations of hypergeometric series, she proposed the following approach (see \cite[Chapter 4]{WolfBook}).

Given a sum $\sum_{k=0}^{n}F(n,k)$, write\footnote{The sum is taken over $\mathbb{Z}$ because the summation term vanishes outside a finite set, we say that it has a finite support.}
\begin{equation}
	s_n = \sum_{k=-\infty}^{\infty} F(n,k),
\end{equation}
and search for polynomials $p_{i,j}=p_{i,j}(n), i=0,\ldots,b,~j=0,\ldots,d$ with respect to $n$ and do not depend on $k$, such that
\begin{equation}
	\sum_{i=0}^{b}\sum_{j=0}^{d} p_{i,j} F(n+j,k+i) =0.
\end{equation}
If such polynomials are found, then one deduces a holonomic recurrence equation of order at most $d$ for $s_n$ as follows
\begin{multline*}
0  = \sum_{k=-\infty}^{\infty}\sum_{i=0}^{b}\sum_{j=0}^{d}p_{i,j}F(n+j,k+i)=\sum_{i=0}^{b}\sum_{j=0}^{d}p_{i,j}\cdot \left(\sum_{k=-\infty}^{\infty} F(n+j,k+i)\right)\\
=\sum_{i=0}^{b}\sum_{j=0}^{d}p_{i,j} s_{n+j}=\sum_{j=0}^{d}\left(\sum_{i=0}^{b}p_{i,j}\right)s_{n+j},
\end{multline*}
where of course we use the fact that $p_{i,j}$ do not depend on $k$ and the advantage of working with bilateral sums: their value is invariant with respect to shifts of the summation variable. In the 1940s, Fasenmyer's method could be used to compute explicit formulas of hypergeometric series only when the obtained holonomic RE was of first order or a two-term recurrence relation.
\begin{example}Applied to $s_n=\sum_{k=0}^{n}k\binom{n}{k}$, Fasenmyer's method leads to the recurrence equation
	$$ns_{n+1}-2 (n+1) s_n =0,$$
	which after use of the initial value $s_1=\sum_{k=0}^{1}k\binom{1}{k}=1$, yields $s_n=n2^{n-1}$.
\end{example}

For the definite summation case, Gosper (see \cite[Chapter 5]{WolfBook}) proposed an algorithm which deals with the question of how to find a (forward) anti-difference $s_k$ for a given $a_k$, that is a sequence $s_k$ such that 
\begin{equation}
	a_k = \Delta s_k = s_{k+1} - s_k, \label{Gosperdiff}
\end{equation}
in the particular case that $s_k$ is a hypergeometric term. Thus, once a hypergeometric anti-difference $s_k$ of $a_k$ is computed, by telescoping definite summation yields
\begin{equation*}
	\sum_{k=n_0}^{n}a_k = (s_{n+1}-s_n) + (s_n - s_{n-1}) + \cdots + (s_{n_0+1}-s_{n_0}) = s_{n+1} - s_{n_0},
\end{equation*}
by an evaluation at the limits of summation. Gosper's idea is based on the representation 
\begin{equation}
	\dfrac{a_{k+1}}{a_k} = \dfrac{p_{k+1}}{p_k}\cdot \dfrac{q_{k+1}}{r_{k+1}}, \label{Gosper1}
\end{equation}
with the property
\begin{equation}
	\gcd(q_k,r_{k+j})=1,~\forall j\in\mathbb{N}_{\geqslant 0} \label{Gosper2}
\end{equation}
that can be algorithmically generated (see \cite[Lemma 5.1 and Algorithm]{WolfBook}). Using $(\ref{Gosper1})$ and $(\ref{Gosper2})$, one proves that the function
\begin{equation}
	f_k := \dfrac{s_{k+1}}{a_{k+1}}\cdot \dfrac{p_{k+1}}{r_{k+1}} \label{Gosperf}
\end{equation}
must be a polynomial for a hypergeometric term anti-difference $a_k$ to exist. Thus using $(\ref{Gosperdiff})$, $(\ref{Gosperf})$ and $(\ref{Gosper1})$ it follows that $f_k$ satisfies the inhomogeneous recurrence equation
\begin{equation}
	p_k = q_{k+1} f_{k+1} - r_k f_{k-1}.
\end{equation}
Gosper gives an upper bound for the degree of $f_k$ in terms of $p_k,$ $q_k,$ and $r_k$ which yields a method for calculating $f_k$ by introducing the appropriate generic polynomial, equating coefficients, and solving the corresponding linear system so that we finally find
\begin{equation}
	s_k = \dfrac{r_k}{p_k}f_{k-1} a_k.
\end{equation}

Gosper implemented his algorithm in Maxima as \textit{nusum} with the same syntax as the Maxima \textit{sum} command.
\begin{example}\item
	
	\noindent
	\begin{minipage}[t]{8ex}\color{red}\bf
		\begin{verbatim}
			(%i1) 
		\end{verbatim}
	\end{minipage}
	\begin{minipage}[t]{\textwidth}\color{blue}
		\begin{verbatim}
			nusum(k*k!,k,0,n);
		\end{verbatim}
	\end{minipage}
	\definecolor{labelcolor}{RGB}{100,0,0}
	\mbox{}\\\mbox{solve: dependent equations eliminated: (1)}
	\[\displaystyle
	\parbox{10ex}{$\color{labelcolor}\mathrm{\tt (\%o1) }\quad $}
	\left( 1+n\right) !-1\mbox{}
	\]
	
	\noindent
	\begin{minipage}[t]{8ex}\color{red}\bf
		\begin{verbatim}
			(%i2) 
		\end{verbatim}
	\end{minipage}
	\begin{minipage}[t]{\textwidth}\color{blue}
		\begin{verbatim}
			nusum(k^3,k,0,n);
		\end{verbatim}
	\end{minipage}
	\definecolor{labelcolor}{RGB}{100,0,0}
	\[\displaystyle
	\parbox{10ex}{$\color{labelcolor}\mathrm{\tt (\%o2) }\quad $}
	\frac{{{n}^{2}}\cdot {{\left( 1+n\right) }^{2}}}{4}\mbox{}
	\]
\end{example}

Gosper's algorithm is the essential tool of the so called Wilf-Zeilberger (often named WZ) method (see \cite[Chapter 6]{WolfBook}). That is a clever application of Gosper's algorithm to prove identities of the form
\begin{equation}
	s_n:=\sum_{k=-\infty}^{\infty} F(n,k) = 1 \label{wzeq}
\end{equation}
where $F(n,k)$ is a hypergeometric term with respect to both $n$ and $k$ with finite support. For this purpose, one applies Gosper's algorithm to the expression
\begin{equation}
	a_k:= F(n+1,k)-F(n,k)
\end{equation}
with respect to the variable $k$. If successful, this generates $G(n,k)$ with
\begin{equation}
	a_k=F(n+1,k)-F(n,k)=G(n,k+1)-G(n,k),
\end{equation}
and summing over $\mathbb{Z}$ yields
\begin{equation}
	s_{n+1}-s_n = \sum_{k=-\infty}^{\infty} F(n+1,k)-F(n,k) = \sum_{k=-\infty}^{\infty} G(n,k+1) - G(n,k) =0 \label{wz1}
\end{equation}
since the right-hand side is telescoping. Therefore, $s_n$ is constant, $s_n=s_0$, and it only remains to prove that $s_0=1$. In practice, once the function $G(n,k)$ is computed one uses the rational function 
\begin{equation}
	R(n,k):=\dfrac{G(n,k)}{F(n,k)}
\end{equation}
called the WZ certificate of $F(n,k)$, to establish $(\ref{wzeq})$ by proving the rational identity
\begin{equation}
	\dfrac{F(n+1,k)}{F(n,k)} - 1 + R(n,k) - R(n,k+1)\dfrac{F(n,k+1)}{F(n,k)} = 0 \label{idwz}
\end{equation}
which is deduced from $(\ref{wz1})$ after division by $F(n,k)$.
\begin{example} For $s_n:=\sum_{k=0}^{n}F(n,k)=\sum_{k=0}^{n}\dfrac{1}{2^n}\binom{n}{k}=1$, the WZ certificate is 
	\begin{equation*}
		R(n,k) = -\dfrac{k}{2(n+1-k)}.
	\end{equation*} 
	Therefore the corresponding left hand side of identity $(\ref{idwz})$ is
	\begin{equation*}
		\dfrac{n+1}{2(n+1-k)} - 1 - \dfrac{k}{2(n+1-k)} + \dfrac{k+1}{(2(n-k))}\cdot\dfrac{n-k}{k+1}
	\end{equation*}
	which trivially yields zero.
\end{example}

Although Gosper's algorithm applies to finite summation, it constitutes a useful tool in discovering a method for infinite sums. This is observable in Zeilberger's algorithm (see \cite[Chapter 7]{WolfBook}). Zeilberger brings back the computation of a holonomic recurrence equation for $s_n:=\sum_{k=-\infty}^{\infty} F(n,k)$. The idea is to apply Gosper's algorithm in the following way: For suitable $d=1,2,\ldots$ set
\begin{equation}
	a_k:= F(n,k) + \sum_{j=1}^{d}\sigma_j(n) F(n+j,k)
\end{equation}
where $\sigma_j$ is supposed to be a rational function depending on $n$ and not on $k$. Zeilberger's main observation is that the computation of the polynomial $f_k$ defined in $(\ref{Gosperf})$ yields a linear system not only for the unknown coefficients of $f_k$, but also for the rational functions $\sigma_j, j=1,\ldots,d$.

Thus in a successful case, one obtains an anti-difference $G(n,k)$ of $a_k$ and rational functions $\sigma_j(n),j=1,\ldots,d$ such that
\begin{equation}
	a_k = G(n,k+1)-G(n,k) = F(n,k) + \sum_{j=1}^{d} \sigma_j(n) F(n+j,k).
\end{equation}
Hence, by summation
\begin{eqnarray}
	0=\sum_{k=-\infty}^{\infty} G(n,k+1)-G(n,k)&=& \sum_{k=-\infty}^{\infty} \left(F(n,k) + \sum_{j=1}^{d}\sigma_j(n)F(n+j,k) \right)\nonumber\\
	&=& s_n + \sum_{j=1}^{d}\sigma_j(n) s_{n+j}.
\end{eqnarray}
After multiplication by the common denominator one gets the holonomic recurrence equation sought.

Zeilberger's algorithm gives a much better possibility of computing identities since it computes holonomic recurrence equations generally of lowest order (iteration on the order $d$) for a given hypergeometric series. Moreover, this approach is also used to show the coincidence of two sums provided their initial values. This constitutes a normal form again, as we have seen in Section $\ref{normalform}$ (see \cite[Chapter 3]{geddes1992algorithms}).
\begin{example} The sums $\sum_{k=0}^{n}\binom{n}{k}^3$ and $\sum_{k=0}^{n}\binom{n}{k}^2\binom{2k}{n}$ are proved to coincide by Zeilberger's algorithm as they lead to the same holonomic recurrence equation
	\begin{equation}
		-(n+2)^2 s_{n+2} + (7n^2+21n+16)s_{n+1} + 8(n+1)^2s_n =0,
	\end{equation}
	and have the same initial values 
	\[\sum_{k=0}^{0}\binom{0}{k}^3=\sum_{k=0}^{0}\binom{0}{k}^2\binom{2k}{0}=1 \text{ and }  ~\sum_{k=0}^{1}\binom{1}{k}^3=\sum_{k=0}^{1}\binom{1}{k}^2\binom{2k}{1}=2\]
\end{example}

Note, however, that Fasenmyer's, Gosper's, Wilf-Zeilberger's and Zeilberger's methods are only reduced to the hypergeometric case. To use Gosper's algorithm, one has to check whether the given $a_k$ is a hypergeometric term with respect to the variable $k$. When this is not the case, Gosper's, WZ's and Zeilberger's methods cannot be applied. From his algorithm \cite[Algorithm 2.2]{WolfBook}, Koepf observed that Gosper's method could miss some results when rational-linear $\Gamma$ inputs are considered rather than only integer-linear ones. It turns out that this observation constitutes the connection to the general case of $m$-fold hypergeometric terms (see \cite[Chapter 8]{WolfBook}).
\begin{example} To the Watson's function
	\begin{equation*}
		_3F_2\left(\begin{matrix}-n& b& c\\ \frac{-n+b+1}{2}& 2c\end{matrix}~\bigg|~ 1 \right),
	\end{equation*}
	Zeilberger's algorithm does not directly apply. However, using Koepf's extended version yields the recurrence equation
	\begin{equation}
		(b-2c-n-1)(n+1)a_n - (b-n-1)(2c+n+1) a_{n+2} = 0.
	\end{equation}
	Note that the computation of this holonomic RE is made possible after application of \cite[Algorithm 8.4]{WolfBook} for finding the corresponding $m$ to use. In this example one finds $m=2$.
\end{example}

The importance of finding $m$-fold hypergeometric term solutions of holonomic recurrence equations could not easily be seen from the known hypergeometric database (see \cite[Chapter 3]{WolfBook}). This might be linked to the influence of the combinatorial interpretation often present in the use of hypergeometric summations in the last century. From this point of view, the $m$-fold hypergeometric case ($m\in\mathbb{N}_{\geqslant 2}$) is particularly hidden and Zeilberger's, Petkov{\v{s}}ek's and van Hoeij's algorithms or their modifications for any hypergeometric situation (multivariate for example) remain the best approach possible. However, as pointed out in \cite{koepf1995algorithms}, the general $m$-fold hypergeometric case might be the source of a wider family of hypergeometric identities. This is shown in particular in the computation of power series where some computed holonomic recurrence equations do not have hypergeometric term solutions but only $m$-fold hypergeometric ones ($m\in\mathbb{N}_{\geqslant 2}$). Note that for the definite summation case the availability of an algorithm which computes $m$-fold hypergeometric term solutions of holonomic REs can be combined with Koepf's extension of Zeilberger's algorithm to generate new identities. We will not deal with definite summation, instead as the importance of $m$-fold hypergeometric terms for hypergeometric series is already shown, we focus on power series computations where the need of such terms is essential for the goal of this paper. There are many limitations of the currently used approach to compute power series, we presented some in Section \ref{sec0}; more can be found in (\cite{BTphd}). 

\subsection{Description of the algorithm}

Let $\mathbb{K}$ be a field of characteristic zero. We consider the generic holonomic recurrence equation
\begin{equation}
	P_d(n) a_{n+d} + P_{d-1}(n)a_{n+d-1}+\cdots+ P_0(n)a_n=0, \label{REmyalgo}
\end{equation}
$P_d(n),\ldots,P_0(n)\in\mathbb{K}[n], P_d(n)\cdot P_0(n)\neq 0$.

By definition, a sequence $a_n$ is said to be $m$-fold hypergeometric, $m\in\mathbb{N}$, if there exists a fixed rational function $r(n)\in\mathbb{K}(n)$ such that
\begin{equation}
	r(n) = \dfrac{a_{n+m}}{a_n}.\label{mhypdef}
\end{equation}

At first glance, one should remark that $m$-fold hypergeometric sequences have rational functions as the ratio of terms with index difference equal to $m$. Consequently, if we can find a way to transform this property to the simple one of hypergeometric sequence then iteratively up to the order of the given holonomic RE, by van Hoeij's algorithm we are done.

From the characterization $(\ref{mhypdef})$ one can deduce that for $0\leqslant j \leqslant m-1$ the following is valid
\begin{equation}
	r(m\cdot n+j) = \dfrac{a_{m\cdot (n+1)+j}}{a_{m\cdot n+j}}. \label{mjhypdef}
\end{equation}
Therefore instead of considering the representation $(\ref{mhypdef})$ one could rather see an $m$-fold hypergeometric term with $m$ related rational functions as defined in $(\ref{mjhypdef})$. This latter representation is the one used to find a "simple" formula of an $m$-fold hypergeometric term as shown in (\cite{BThyper}). Moreover, for fixed $m\in\mathbb{N}$ and $j\in\mathbb{N}_{\geqslant 0}$, if we compute an $m$-fold hypergeometric term solution of $(\ref{REmyalgo})$ with ratio $r(m\cdot n + j)$ for some rational function $r$, then this gives the information that there are $m-1$ other similar $m$-fold hypergeometric term solutions of $(\ref{REmyalgo})$. There appears the particularity of our approach. Indeed, $r(m\cdot n + j)$ is the term of a sub-sequence of $(r(n))_n$. Unlike previous approaches that try to find $r(n)$ directly, we rather compute the sub-terms $r(m\cdot n + j)$ for fixed $j\in\llbracket 0, m-1\rrbracket$ so that $r(n)$ is constructed as a linear combination of these sub-terms. 

Thus for every positive integer $m$, the computation of an $m$-fold hypergeometric term solution of  $(\ref{REmyalgo})$ with representation $(\ref{mhypdef})$ reduces to the computation of an $m$-fold hypergeometric term solution of  $(\ref{REmyalgo})$ with representation $(\ref{mjhypdef})$ for a fixed $j\in\llbracket 0, m-1\rrbracket$ since the other representations can be similarly computed. When given for a specific value of $j$, we will say that the solution is given in an incomplete form. The complete form of the solution is then given with all the representations of $(\ref{mjhypdef})$ for all $m$-fold hypergeometric term solutions. By default in our algorithm we choose $j=0$ for the incomplete form of the solution to be given as output.

In certain cases, depending on the field $\mathbb{K}$ or the index variable subset of $\mathbb{Z}$, holonomic recurrence equations can have $m$-fold hypergeometric term solutions that cannot be computed over $\mathbb{K}$, but in an extension field instead. This situation occurs with the power series of $\exp(z)\sin(z)$ for $\mathbb{K}=\mathbb{Q}$. Our implementation \textit{HolonomicDE(f,F(z),[destep])} has an optional variable \textit{destep} whose default value is $1$. This number represents the minimum positive difference possible between the derivatives of \textit{F(z)} in the holonomic differential equation sought. Our program \textit{FindRE} is also adapted for such computations. This particular tool turns out to be important in few cases. Let us examine the situation with $\exp(z)\sin(z)$.

\noindent
\begin{minipage}[t]{8ex}\color{red}\bf
	\begin{verbatim}
		(%i1) 
	\end{verbatim}
\end{minipage}
\begin{minipage}[t]{\textwidth}\color{blue}
	\begin{verbatim}
		DE1:HolonomicDE(exp(z)*sin(z),F(z));
	\end{verbatim}
\end{minipage}
\definecolor{labelcolor}{RGB}{100,0,0}
\[\displaystyle
\parbox{10ex}{$\color{labelcolor}\mathrm{\tt (\%o1) }\quad $}
\frac{{{d}^{2}}}{d\,{{z}^{2}}}\cdot \mathrm{F}\left( z\right) -2\cdot \left( \frac{d}{d\,z}\cdot \mathrm{F}\left( z\right) \right) +2\cdot \mathrm{F}\left( z\right) =0\mbox{}
\]

\noindent
\begin{minipage}[t]{8ex}\color{red}\bf
	\begin{verbatim}
		(%i2) 
	\end{verbatim}
\end{minipage}
\begin{minipage}[t]{\textwidth}\color{blue}
	\begin{verbatim}
		DE2:HolonomicDE(exp(z)*sin(z),F(z),2);
	\end{verbatim}
\end{minipage}
\definecolor{labelcolor}{RGB}{100,0,0}
\[\displaystyle
\parbox{10ex}{$\color{labelcolor}\mathrm{\tt (\%o2) }\quad $}
\frac{{{d}^{4}}}{d\,{{z}^{4}}}\cdot \mathrm{F}\left( z\right) +4\cdot \mathrm{F}\left( z\right) =0\mbox{}
\]
The compatibility of these two differential equations is shown below (see Section $\ref{normalform}$).

\noindent
\begin{minipage}[t]{8ex}\color{red}\bf
	\begin{verbatim}
		(%i3) 
	\end{verbatim}
\end{minipage}
\begin{minipage}[t]{\textwidth}\color{blue}
	\begin{verbatim}
		CompatibleDE(DE1,DE2,F(z));
	\end{verbatim}
\end{minipage}
\definecolor{labelcolor}{RGB}{100,0,0}
\mbox{}\\The two differential equations are compatible
\[\displaystyle
\parbox{10ex}{$\color{labelcolor}\mathrm{\tt (\%o3) }\quad $}
\text{true}\mbox{}
\]
Let us now compute the corresponding holonomic REs.

\noindent
\begin{minipage}[t]{8ex}\color{red}\bf
	\begin{verbatim}
		(%i4) 
	\end{verbatim}
\end{minipage}
\begin{minipage}[t]{\textwidth}\color{blue}
	\begin{verbatim}
		RE1:FindRE(exp(z)*sin(z),z,a[n]);
	\end{verbatim}
\end{minipage}
\definecolor{labelcolor}{RGB}{100,0,0}
\[\displaystyle
\parbox{10ex}{$\color{labelcolor}\mathrm{\tt (\%o4) }\quad $}
\left( 1+n\right) \cdot \left( 2+n\right) \cdot {{a}_{n+2}}-2\cdot \left( 1+n\right) \cdot {{a}_{n+1}}+2\cdot {{a}_{n}}=0\mbox{}
\]

\noindent
\begin{minipage}[t]{8ex}\color{red}\bf
	\begin{verbatim}
		(%i5) 
	\end{verbatim}
\end{minipage}
\begin{minipage}[t]{\textwidth}\color{blue}
	\begin{verbatim}
		RE2: FindRE(exp(z)*sin(z),z,a[n],2);
	\end{verbatim}
\end{minipage}
\definecolor{labelcolor}{RGB}{100,0,0}
\[\displaystyle
\parbox{10ex}{$\color{labelcolor}\mathrm{\tt (\%o5) }\quad $}
\left( 1+n\right) \cdot \left( 2+n\right) \cdot \left( 3+n\right) \cdot \left( 4+n\right) \cdot {{a}_{n+4}}+4\cdot {{a}_{n}}=0\mbox{}
\]

Observe that for no solution of \textit{RE2} a relationship can be deduced between two of its terms whose index difference is not a multiple of $4$. Therefore considering indices in $4\mathbb{N}=\{0,4,8,\ldots\}$ might be more appropriate. On the other hand, it is trivial that \textit{RE2} is a characteristic holonomic recurrence equation of a $4$-fold hypergeometric term. However, this $4$-fold hypergeometric term is completely hidden in \textit{RE1} when looking for solutions over $\mathbb{Q}$. Indeed, since the corresponding $4$-fold symmetric terms are linearly independent over $\mathbb{Q}$, a substitution in the left-hand side of \textit{RE1} does not yield zero. Note, however, that \textit{RE1} and \textit{RE2} have hypergeometric term solutions over $\mathbb{C}$. We compute these solutions using the first author's variant of van Hoeij's algorithm (\cite{BThyper}) implemented in Maxima as \textit{HypervanHoeij}.

\noindent
\begin{minipage}[t]{8ex}\color{red}\bf
	\begin{verbatim}
		(%i6) 
	\end{verbatim}
\end{minipage}
\begin{minipage}[t]{\textwidth}\color{blue}
	\begin{verbatim}
		HypervanHoeij(RE1,a[n],C);
	\end{verbatim}
\end{minipage}
\definecolor{labelcolor}{RGB}{100,0,0}
\[\displaystyle
\parbox{10ex}{$\color{labelcolor}\mathrm{\tt (\%o6) }\quad $}
\left\{\frac{{{\left( 1-i\right) }^{n}}}{n!},\frac{{{\left( 1+i\right) }^{n}}}{n!}\right\}\mbox{}
\]

\noindent
\begin{minipage}[t]{8ex}\color{red}\bf
	\begin{verbatim}
		(%i7) 
	\end{verbatim}
\end{minipage}
\begin{minipage}[t]{\textwidth}\color{blue}
	\begin{verbatim}
		HypervanHoeij(RE2,a[n],C);
	\end{verbatim}
\end{minipage}
\definecolor{labelcolor}{RGB}{100,0,0}
\[\displaystyle
\parbox{10ex}{$\color{labelcolor}\mathrm{\tt (\%o7) }\quad $}
\left\{\frac{{{\left( -{{\left( -1\right) }^{\frac{1}{4}}}\cdot \sqrt{2}\right) }^{n}}}{n!},\frac{{{\left( {{\left( -1\right) }^{\frac{1}{4}}}\cdot \sqrt{2}\right) }^{n}}}{n!},\frac{{{\left( -{{\left( -1\right) }^{\frac{1}{4}}}\cdot \sqrt{2}\cdot i\right) }^{n}}}{n!},\frac{{{\left( {{\left( -1\right) }^{\frac{1}{4}}}\cdot \sqrt{2}\cdot i\right) }^{n}}}{n!}\right\}\mbox{}
\]

The basis of hypergeometric term solutions of \textit{RE1} spans a sub-space of the space of solutions of \textit{RE2}. The corresponding $4$-fold hypergeometric term solutions over $\mathbb{Q}$ can be written as a linear combination of the above bases of hypergeometric terms over $\mathbb{C}$. The main thing that we point out from this example is that our approach to compute $m$-fold hypergeometric term solutions of a given holonomic RE depends on the shifts between indices of the indeterminate ($a_n$) in that RE and the field considered.

Next, we introduce some properties and definitions that clarify this situation and help to compute $m$-fold hypergeometric term solutions of $(\ref{REmyalgo})$ in a given field $\mathbb{K}$ which we want to be the smallest algebraic extension field possible of $\mathbb{Q}$ in terms of inclusion.

The following lemma gives a condition on the order of a given holonomic RE for its $m$-fold hypergeometric term solutions to be computable over a given field $\mathbb{K}$. An equivalent statement was given in \cite[Theorem 5.1]{hendricks1999solving} but with a different perspective as shown by the two proofs.

\begin{lemma}\label{lemmamfold1} Let $h_n$ be an $m$-fold hypergeometric term, $m\in\mathbb{N}$. Assume
	\begin{equation}
		\forall u \in\mathbb{N},~ u < m, ~\text{ there is no rational function } r_u(n)\in\mathbb{K}(n) ~:~ h_{u+n} = r_u(n) h_n. \label{asumlem}
	\end{equation}
	Then there is no holonomic recurrence equation over $\mathbb{K}$ of order less than $m$ satisfied by $h_n$.
\end{lemma}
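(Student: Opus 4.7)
My approach is proof by contradiction. Suppose $h_n$ satisfies a nontrivial holonomic recurrence
\[
\sum_{j=0}^d P_j(n)\, h_{n+j} = 0, \qquad P_d\,P_0 \not\equiv 0,
\]
of order $d<m$, and take $d$ to be minimal with this property. Since $h_n$ is eventually nonzero, dividing by $h_n$ rewrites the recurrence as a $\mathbb{K}(n)$-linear dependence
\[
P_0(n) + P_1(n)\,q_1(n) + \cdots + P_d(n)\,q_d(n) = 0
\]
among the ratios $q_j(n) := h_{n+j}/h_n$. The hypothesis gives $q_u \notin \mathbb{K}(n)$ for $1 \le u < m$, while the $m$-fold property supplies the key identity $q_{j+m}(n) = R(n+j)\,q_j(n)$ with $R$ the rational ratio $h_{n+m}/h_n$.

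The main step is to produce a second relation by shifting the recurrence: replacing $n$ by $n+m$ and using $h_{n+j+m} = R(n+j)\,h_{n+j}$ yields
\[
\sum_{j=0}^d P_j(n+m)\, R(n+j)\, q_j(n) = 0.
\]
When the two coefficient vectors $(P_j(n))_{j=0}^d$ and $(P_j(n+m)\,R(n+j))_{j=0}^d$ are $\mathbb{K}(n)$-linearly independent, I can eliminate $q_d$ to obtain a strictly shorter $\mathbb{K}(n)$-relation among $q_0,\ldots,q_{d-1}$, which translates back to a recurrence of order $<d$ for $h_n$---contradicting the minimality of $d$. Iterating this reduction, the order drops by one at each step until I arrive at an order-$1$ relation $A(n)\,q_1(n) + B(n) = 0$ with $A\not\equiv 0$, forcing $q_1(n) = -B(n)/A(n) \in \mathbb{K}(n)$, which directly contradicts the hypothesis at $u=1$.

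The main obstacle is the degenerate case in which the two coefficient vectors turn out to be $\mathbb{K}(n)$-proportional at some stage---equivalently $P_j(n+m)\,R(n+j) = \mu(n)\,P_j(n)$ for all $j$ and some $\mu\in\mathbb{K}(n)$---so the immediate elimination fails. I would handle this by iterating the shift by $2m, 3m, \ldots$ to gather an infinite family of relations; under the proportionality constraint their coefficient vectors satisfy a rigid multiplicative rule, and one argues that either linear independence is eventually restored at some higher shift (closing the reduction as above) or the rule is restrictive enough to force a nontrivial two-term recurrence $h_{n+u} - r_u(n)\,h_n = 0$ with $1\le u \le d < m$. The latter exhibits $q_u(n) = r_u(n) \in \mathbb{K}(n)$ outright, once again contradicting the assumption and completing the proof.
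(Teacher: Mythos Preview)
Your reduction-by-elimination strategy is natural and the non-degenerate branch is fine: when the two coefficient vectors $(P_j(n))_j$ and $(P_j(n+m)R(n+j))_j$ are $\mathbb{K}(n)$-independent, eliminating $q_d$ indeed yields a nontrivial relation among $h_n,\ldots,h_{n+d-1}$, contradicting the minimality of $d$ (and the base case $d=1$ is immediate). The problem is entirely in your handling of the degenerate case, and there the sketch does not go through.

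First, iterating the shift never restores independence. If $P_j(n+m)R(n+j)=\mu(n)P_j(n)$ for all $j$, then applying the same identity at $n+m$ gives $P_j(n+2m)R(n+m+j)=\mu(n+m)P_j(n+m)$, and multiplying by $R(n+j)$ shows that the $2m$-shift coefficients satisfy $P_j(n+2m)R(n+j)R(n+m+j)=\mu(n)\mu(n+m)P_j(n)$. By induction the $km$-shift vector is $\bigl(\prod_{\ell=0}^{k-1}\mu(n+\ell m)\bigr)(P_j(n))_j$ for every $k$, so all higher shifts remain proportional to the original. Your ``either independence is eventually restored'' alternative simply does not occur.

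Second, the fallback ``the rule is restrictive enough to force a two-term recurrence $h_{n+u}=r_u(n)h_n$ with $u<m$'' is not established, and the proportionality constraint is too weak for it as stated. From $P_j(n+m)R(n+j)=\mu(n)P_j(n)$ one only extracts, for any two indices $i<j$ with $P_iP_j\neq 0$, that $R(n+j)/R(n+i)\in\mathbb{K}(n)$; equivalently, with $u=j-i<m$ and $q_u(n)=h_{n+u}/h_n$, that $q_u(n+m)/q_u(n)\in\mathbb{K}(n)$. That says $q_u$ is itself $m$-fold hypergeometric, not that $q_u\in\mathbb{K}(n)$, so no contradiction with the hypothesis follows. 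You would need an additional argument (for instance exploiting shifts by $1,\ldots,m-1$ rather than by multiples of $m$) to close this case, and that argument is missing.

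For comparison, the paper does not attempt an order-reduction at all. It substitutes $n\mapsto mn$ in the putative recurrence to obtain a $\mathbb{K}(n)$-linear relation among the $m$ subsequences $h_{mn},h_{mn+1},\ldots,h_{mn+m-1}$, and then argues that these are linearly independent over $\mathbb{K}(n)$ (as the $m$ ``symmetric'' solutions of the defining two-term recurrence), forcing all $P_j(mn)$---hence all $P_j$---to vanish. That route sidesteps the degenerate-elimination issue entirely.
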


\begin{proof}
	Let $h_n$ be an $m$-fold hypergeometric term such that 
	\begin{equation}
		h_{n+m} = r(n)\cdot h_n \iff Q_m(n)\cdot h_{n+m} + Q_0(n)\cdot h_n = 0, \label{defhn}
	\end{equation}
	where $Q_m(n),Q_0(n)\in\mathbb{K}[n]$ and $r(n)=-Q_0(n)/Q_m(n)\in\mathbb{K}(n)$.
	
	Suppose that $h_n$ satisfies a holonomic recurrence equation of order less than $m$. Then there exists an equation of the form
	\begin{equation}
		P_{m-1}a_{n+m-1} + P_{m-2} a_{n+m-2}+\cdots + P_{1} a_{n+1} + P_0 a_n =0, \label{hypholo} 
	\end{equation}
	with polynomials $P_j=P_j(n) \in\mathbb{K}[n], j\in\llbracket 0, m-1 \rrbracket$, and $P_0(n)\neq 0$, satisfied by $h_n$.
	\begin{itemize}
		\item If $P_0$ is the only non-zero polynomial in the equation then $h_n$ must be zero, which is a contradiction by definition.
		\item We assume that at least one other polynomial factor in the equation is non-zero. Then $h_n$ satisfying $(\ref{hypholo})$ yields the following equation after substitution of $n$ by $m\cdot n$
		\begin{equation}
			P_{m-1}(mn)h_{mn+m-1} + P_{m-2}(mn) h_{mn+m-2}+\cdots + P_{1}(mn) h_{mn+1} + P_0(mn) h_{mn} =0. \label{hypholo2} 
		\end{equation}
		By assumption $(\ref{asumlem})$, we know that $\forall u \in\mathbb{N},~ u < m$, $h_n$ is not a $u$-fold hypergeometric term. So the holonomic recurrence equation of lowest order over $\mathbb{K}$ satisfied by $h_n$ is
		\begin{equation*}
			Q_m(n)\cdot a_{n+m} + Q_0(n)\cdot a_n = 0,
		\end{equation*}
		which is a two-term recurrence relation whose subspace of $m$-fold hypergeometric term ($m$ is fixed) solutions can be represented by the basis
		\begin{equation}
			\left(h_{mn+m-1}, h_{mn+m-2},\ldots,h_{mn+1},h_{mn}\right). \label{linind}
		\end{equation}
		Therefore $(\ref{hypholo2})$ cannot hold since the left-hand side is a linear combination of linearly independent terms with respect to $\mathbb{K}(n)$, which implies that all the polynomial coefficients must be zero. Therefore we get a contradiction.
	\end{itemize}
\end{proof}


More generally, any shift of a holonomic recurrence equation of order less than $m$ does not have $m$-fold hypergeometric term solutions.

Remark that checking the hypothesis of Lemma $\ref{lemmamfold1}$ is an important task for the algorithm. Fortunately, this can be done iteratively. Once the field $\mathbb{K}$ is fixed, if we have already looked for $u$-fold hypergeometric term solutions for integers $u<m$, then we can safely proceed to the computation of $m$-fold hypergeometric term solutions knowing that $m$ is less than the order of that recurrence equation.

Thus, we now know that all the $m$-fold hypergeometric term solutions of $(\ref{REmyalgo})$ have $m\leqslant d$. Furthermore, we can extend this view of $m$-fold hypergeometric sequences in order to determine which type of terms can appear in a holonomic recurrence equation that they satisfy. For that purpose, let us first introduce the following definition.

\begin{definition}[$m$-fold holonomic recurrence equation] A holonomic recurrence equation is said to be $m$-fold holonomic, $m\in\mathbb{N}$, if it has at least two non-zero polynomial coefficients and the difference between indices of two appearing terms of the indeterminate sequence in that equation is a multiple of $m$. Choosing $0$ as the trailing term order gives the general form
	\begin{equation}
		P_d(n)\cdot a_{n+md} + P_{d-1}(n)\cdot a_{n+m(d-1)} + \cdots + P_1(n)\cdot a_{n+m} + P_0(n)\cdot a_{n}=0, \label{mhypre}.
	\end{equation} 
	so that $P_d\cdot P_0\neq 0$.
\end{definition}

Assume an $m$-fold holonomic RE with representation $(\ref{mhypre})$ is given. We are going to present how to compute a basis of all $m$-fold hypergeometric term solutions of $(\ref{mhypre})$ with representation $(\ref{mjhypdef})$ for $j=0$. And by a similar reasoning we will show how to deduce the other bases of $m$-fold hypergeometric solutions for $j\in\llbracket 0,m-1\rrbracket$.

Observe that if we have an $m$-fold hypergeometric sequence $a_n$ starting with $a_0$ (by shift it is always possible to define the initial term by $a_0$ ), then using the representation $(\ref{mhypdef})$ the next term which can be computed from $a_{0}$ is $a_{m}$, and afterwards $a_{2m}$ \ldots, $a_{km},\ldots$. Thus if we set $s_{n}=a_{mn}$ then all terms computed from $s_0$ have their indices corresponding to multiples of $m$ for $a_n$. Moreover, since $a_{m\cdot(n+1)}/a_{m\cdot n}=s_{n+1}/s_{n}\in\mathbb{K}(n)$, $s_n$ is a hypergeometric term whose formula is the same as that of $a_{mn}$. Therefore we can update $(\ref{mhypre})$ accordingly so the algorithm in (\cite{BThyper}) can be applied to compute a basis of all hypergeometric term solution $s_n$ of an updated version of $(\ref{mhypre})$, which is nothing but the basis of all $m$-fold hypergeometric term solutions of $(\ref{mhypre})$ with representation $(\ref{mjhypdef})$ for $j=0$. 

This view of $m$-fold hypergeometric terms is our main idea. As explained, a basis of all $m$-fold hypergeometric term solutions of the $m$-fold holonomic RE $(\ref{mhypre})$ can be found by van Hoeij's algorithm provided the following crucial change of variable:

\begin{equation}
	\begin{cases}
		m\cdot k = n \\
		s_{k}  = a_{m\cdot k}
	\end{cases}.\label{changeofvar}
\end{equation}
This leads to a $1$-fold holonomic RE for $s_k$ which has hypergeometric term solutions because
\begin{equation}
	\dfrac{s_{k+1}}{s_k} = \dfrac{a_{mk+m}}{a_{mk}}=r(mk).
\end{equation}
The resulting RE is
\begin{equation}
	P_d(mk)\cdot s_{k+d} + P_{d-1}(mk)\cdot s_{k+(d-1)} + \cdots + P_1(mk)\cdot s_{k+1} + P_0(mk)\cdot s_{k}=0 \label{changeRE}.
\end{equation}

In the general case of $j\in\llbracket 0,m-1\rrbracket$, the bases of all $m$-fold hypergeometric term solutions of $(\ref{mhypre})$ with representation $(\ref{mjhypdef})$ are computed by the algorithm in (\cite{BThyper}) after the application of the change of variable 
\begin{equation}
	\begin{cases}
		m\cdot k + j = n, \\
		s_{k}  = a_{m\cdot k + j}
	\end{cases}~~ (0\leqslant j\leqslant m-1).\label{changeofvargen}
\end{equation}
This is because in $(\ref{mjhypdef})$, the $m$-fold hypergeometric term indices can always be seen as $m\cdot n + j$, $j\in\llbracket 0,m-1\rrbracket$. 

Let us apply this to an example. We consider the two-term recurrence relation of the Taylor coefficient of $\exp(z)\sin(z)$ which is a $4$-fold holonomic RE, so we are going to compute $4$-fold hypergeometric term solutions.

\noindent
\begin{minipage}[t]{8ex}\color{red}\bf
	\begin{verbatim}
		(%i1) 
	\end{verbatim}
\end{minipage}
\begin{minipage}[t]{\textwidth}\color{blue}
	\begin{verbatim}
		RE:FindRE(exp(z)*sin(z),z,a[n],2);
	\end{verbatim}
\end{minipage}
\definecolor{labelcolor}{RGB}{100,0,0}
\begin{equation}
	\displaystyle
	\parbox{10ex}{$\color{labelcolor}\mathrm{\tt (\%o1) }\quad $}
	\left( 1+n\right) \cdot \left( 2+n\right) \cdot \left( 3+n\right) \cdot \left( 4+n\right) \cdot {{a}_{n+4}}+4\cdot {{a}_{n}}=0\mbox{} \label{expsinRE}
\end{equation}

\noindent
\begin{minipage}[t]{8ex}\color{red}\bf
	\begin{verbatim}
		(%i2) 
	\end{verbatim}
\end{minipage}
\begin{minipage}[t]{\textwidth}\color{blue}
	\begin{verbatim}
		RE:subst(4*n,n,RE);
	\end{verbatim}
\end{minipage}
\definecolor{labelcolor}{RGB}{100,0,0}
\[\displaystyle
\parbox{10ex}{$\color{labelcolor}\mathrm{\tt (\%o2) }\quad $}
\left( 1+4\cdot n\right) \cdot \left( 2+4\cdot n\right) \cdot \left( 3+4\cdot n\right) \cdot \left( 4+4\cdot n\right) \cdot {{a}_{4\cdot n+4}}+4\cdot {{a}_{4\cdot n}}=0\mbox{}
\]

\noindent
\begin{minipage}[t]{8ex}\color{red}\bf
	\begin{verbatim}
		(%i3) 
	\end{verbatim}
\end{minipage}
\begin{minipage}[t]{\textwidth}\color{blue}
	\begin{verbatim}
		RE:subst([a[4*n]=s[n],a[4*n+4]=s[n+1]],RE);
	\end{verbatim}
\end{minipage}
\definecolor{labelcolor}{RGB}{100,0,0}
\[\displaystyle
\parbox{10ex}{$\color{labelcolor}\mathrm{\tt (\%o3) }\quad $}
\left( 1+4\cdot n\right) \cdot \left( 2+4\cdot n\right) \cdot \left( 3+4\cdot n\right) \cdot \left( 4+4\cdot n\right) \cdot {{s}_{n+1}}+4\cdot {{s}_{n}}=0\mbox{}
\]

\noindent
\begin{minipage}[t]{8ex}\color{red}\bf
	\begin{verbatim}
		(%i4) 
	\end{verbatim}
\end{minipage}
\begin{minipage}[t]{\textwidth}\color{blue}
	\begin{verbatim}
		HypervanHoeij(RE,s[n]);
	\end{verbatim}
\end{minipage}
\definecolor{labelcolor}{RGB}{100,0,0}
\[\displaystyle
\parbox{10ex}{$\color{labelcolor}\mathrm{\tt (\%o4) }\quad $}
\left\{\frac{{{\left( -1\right) }^{n}}\cdot {{4}^{n}}}{{{\left( \frac{1}{4}\right) }_{n}}\cdot {{\left( \frac{3}{4}\right) }_{n}}\cdot {{64}^{n}}\cdot \left( 2\cdot n\right) !}\right\}\mbox{}
\]
This set is a basis of all $4$-fold hypergeometric term solutions $(\ref{expsinRE})$ for $j=0$ in the representation $(\ref{mjhypdef})$. Similarly for the case $j=3$ we get the analogous basis

\noindent
\begin{minipage}[t]{8ex}\color{red}\bf
	\begin{verbatim}
		(%i5) 
	\end{verbatim}
\end{minipage}
\begin{minipage}[t]{\textwidth}\color{blue}
	\begin{verbatim}
		RE:FindRE(exp(z)*sin(z),z,a[n],2)$
	\end{verbatim}
\end{minipage}

\noindent
\begin{minipage}[t]{8ex}\color{red}\bf
	\begin{verbatim}
		(%i6) 
	\end{verbatim}
\end{minipage}
\begin{minipage}[t]{\textwidth}\color{blue}
	\begin{verbatim}
		RE:subst(4*n+3,n,RE);
	\end{verbatim}
\end{minipage}
\definecolor{labelcolor}{RGB}{100,0,0}
\[\displaystyle
\parbox{10ex}{$\color{labelcolor}\mathrm{\tt (\%o6) }\quad $}
\left( 4+4\cdot n\right) \cdot \left( 5+4\cdot n\right) \cdot \left( 6+4\cdot n\right) \cdot \left( 7+4\cdot n\right) \cdot {{a}_{4\cdot n+7}}+4\cdot {{a}_{4\cdot n+3}}=0\mbox{}
\]

\noindent
\begin{minipage}[t]{8ex}\color{red}\bf
	\begin{verbatim}
		(%i7) 
	\end{verbatim}
\end{minipage}
\begin{minipage}[t]{\textwidth}\color{blue}
	\begin{verbatim}
		RE:subst([a[4*n+3]=s[n],a[4*n+7]=s[n+1]],RE);
	\end{verbatim}
\end{minipage}
\definecolor{labelcolor}{RGB}{100,0,0}
\[\displaystyle
\parbox{10ex}{$\color{labelcolor}\mathrm{\tt (\%o7) }\quad $}
\left( 4+4\cdot n\right) \cdot \left( 5+4\cdot n\right) \cdot \left( 6+4\cdot n\right) \cdot \left( 7+4\cdot n\right) \cdot {{s}_{n+1}}+4\cdot {{s}_{n}}=0\mbox{}
\]

\noindent
\begin{minipage}[t]{8ex}\color{red}\bf
	\begin{verbatim}
		(%i8) 
	\end{verbatim}
\end{minipage}
\begin{minipage}[t]{\textwidth}\color{blue}
	\begin{verbatim}
		HypervanHoeij(RE,s[n]);
	\end{verbatim}
\end{minipage}
\definecolor{labelcolor}{RGB}{100,0,0}
\[\displaystyle
\parbox{10ex}{$\color{labelcolor}\mathrm{\tt (\%o8) }\quad $}
\left\{\frac{{{\left( -1\right) }^{n}}\cdot {{4}^{n}}}{{{\left( \frac{1}{4}\right) }_{n}}\cdot {{\left( \frac{3}{4}\right) }_{n}}\cdot \left( 32\cdot {{n}^{3}}+48\cdot {{n}^{2}}+22\cdot n+3\right) \cdot {{64}^{n}}\cdot \left( 2\cdot n\right) !}\right\}\mbox{}
\]

On the other hand, note that the $m$-fold holonomic RE case is the easiest part for the whole algorithm. Indeed, an arbitrary holonomic recurrence equation is not necessarily $m$-fold holonomic, $m\in\mathbb{N}_{\geqslant 2}$. It could have $m_1$-fold and $m_2$-fold hypergeometric term solutions with positive integers $m_1\neq m_2$. Therefore we should define what to do in the general case. 

Observe that without the shift that transforms an $m$-fold holonomic recurrence equation in the form $(\ref{mhypre})$, its general representation is given by
\begin{equation}
	P_{d} a_{n+k+md} + P_{d-1} a_{n+k+m(d-1)} + \cdots + P_{0} a_{n+k}=0, \label{mreg}
\end{equation}
where $k\in\llbracket 0, m-1\rrbracket$.

Let us consider the three following $3$-fold holonomic REs
\begin{align}
	& RE1:~{{P}_{1,3}}\cdot {{a}_{n+7}}+{{P}_{1,2}}\cdot {{a}_{n+4}}+{{P}_{1,1}}\cdot {{a}_{n+1}}=0,\nonumber\\
	& RE2:~{{P}_{2,4}}\cdot {{a}_{n+11}}+{{P}_{2,3}}\cdot {{a}_{n+8}}+{{P}_{2,2}}\cdot {{a}_{n+5}}+{{P}_{2,1}}\cdot {{a}_{n+2}}=0,\nonumber\\
	& RE3:~{{P}_{3,4}}\cdot {{a}_{n+13}}+{{P}_{3,3}}\cdot {{a}_{n+10}}+{{P}_{3,2}}\cdot {{a}_{n+7}}+{{P}_{3,1}}\cdot {{a}_{n+4}}=0.  \label{3foldh}
\end{align}

\begin{itemize}
	\item The difference between an index of the indeterminate ($a_n$) in $RE1$ and another of its index taken in $RE2$ is always not divisible $3$. In this case we say that $RE1$ and $RE2$ are $3$-fold distinct.
	\item The difference between an index of the indeterminate ($a_n$) in $RE1$ and another of its index taken in $RE2$ is always a multiple of $3$. that $RE1$ and $RE3$ are $3$-fold equivalent.
\end{itemize}
More generally we have the following definitions.
\begin{definition}
	Let $m\in\mathbb{N}$,
	\begin{equation}
		RE_1:~	P_{d_1} a_{n+k_1+md_1} + P_{d_1-1} a_{n+k_1+m(d_1-1)} + \cdots + P_{0_1} a_{n+k_1}=0, \label{mregi}
	\end{equation}
	and 
	\begin{equation}
		RE_2:~	P_{d_2} a_{n+k_2+md_2} + P_{d_2-1} a_{n+k_2+m(d_2-1)} + \cdots + P_{0_2} a_{n+k_2}=0 \label{mregj}
	\end{equation}
	be two $m$-fold holonomic recurrence equations.
	\begin{itemize}
		\item We say that $RE_1$ and $RE_2$ are $m$-fold distinct holonomic equations if $k_2-k_1$ is not divisible by $m$.
		\item We say that $RE_1$ and $RE_2$ are $m$-fold equivalent holonomic equations if $k_2-k_1$ is divisible by $m$.
	\end{itemize}
\end{definition}

An immediate consequence of these definitions is that linear combinations of $m$-fold equivalent holonomic REs always give $m$-fold holonomic recurrence equations whereas linear combinations of $m$-fold distinct holonomic REs are never $m$-fold holonomic. For example, let us sum $RE1$ and $RE3$ from $(\ref{3foldh})$. This yields the following $3$-fold holonomic RE
\begin{equation}
	RE1 + RE3 :~ {{P}_{3,4}}\cdot {{a}_{n+13}}+{{P}_{3,3}}\cdot {{a}_{n+10}}+\left({{P}_{1,3}}+{{P}_{3,2}}\right)\cdot {{a}_{n+7}}+\left({{P}_{1,2}} +{{P}_{3,1}}\right)\cdot {{a}_{n+4}}+{{P}_{1,1}}\cdot {{a}_{n+1}}=0.
\end{equation}

The whole algorithm is based on the following fundamental theorem from which the general approach to compute $m$-fold hypergeometric term solutions of any given holonomic recurrence equation is deduced.

\begin{theorem}[Structure of holonomic REs having $m$-fold hypergeometric term solutions]\label{mfoldtheo} Let $m\in\mathbb{N}$, $\mathbb{K}$ a field of characteristic zero, and $h_n$ be an $m$-fold hypergeometric term which is not $u$-fold hypergeometric over $\mathbb{K}$ for all positive integers $u<m$. Then $h_n$ is a solution of a given holonomic recurrence equation, if that equation can be written as a linear combination of $m$-fold holonomic recurrence equations. When this is the case, $h_n$ is moreover solution of each of the $m$-fold distinct holonomic recurrence equations of that linear combination.
\end{theorem}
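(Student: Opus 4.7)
The plan is to exploit the fact that because $h_n$ is not $u$-fold hypergeometric over $\mathbb{K}$ for any positive integer $u<m$, the shifted sequences $h_n, h_{n+1}, \ldots, h_{n+m-1}$ are linearly independent over $\mathbb{K}(n)$. Indeed, a non-trivial rational relation $\sum_{i=0}^{m-1} c_i(n)\, h_{n+i} = 0$ with $c_i \in \mathbb{K}(n)$ would, after clearing denominators, produce a holonomic recurrence of order at most $m-1$ satisfied by $h_n$, contradicting Lemma~\ref{lemmamfold1}. This independence is the linchpin of the whole argument.

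Given any holonomic recurrence $\sum_{i=0}^{N} P_i(n)\, a_{n+i}=0$ solved by $h_n$, I would partition the indices according to their residue modulo $m$ and, for each $j \in \{0,\ldots,m-1\}$, assemble the sub-equation $RE_j : \sum_{i \equiv j \pmod{m}} P_i(n)\, a_{n+i} = 0$. By construction each $RE_j$ is either identically zero or $m$-fold holonomic, and for $j\neq j'$ the pieces $RE_j$ and $RE_{j'}$ are $m$-fold distinct; the original equation is literally the sum of these pieces, which already establishes the ``can be written as a linear combination of $m$-fold holonomic REs'' assertion. The core step is then to show that $h_n$ actually solves each $RE_j$ individually. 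Iterating $h_{n+m} = r(n)\, h_n$ gives $h_{n+km} = S_k(n)\, h_n$ with $S_k(n) = r(n)\, r(n+m) \cdots r(n+(k-1)m) \in \mathbb{K}(n)$, so for every index $i = km+j$ lying in residue class $j$ one has $h_{n+i} = S_k(n+j)\, h_{n+j}$. Factoring $h_{n+j}$ out of each sub-equation, the substitution of $h_n$ into the original RE becomes
\[
0 \;=\; \sum_{i=0}^{N} P_i(n)\, h_{n+i} \;=\; \sum_{j=0}^{m-1} \Lambda_j(n)\, h_{n+j}, \qquad \Lambda_j \in \mathbb{K}(n),
\]
where $\Lambda_j$ collects precisely the contribution of $RE_j$. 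By the independence established above, each $\Lambda_j(n) \equiv 0$, which is exactly the statement that $h_n$ solves $RE_j$. The converse direction (if $h_n$ solves each $m$-fold distinct piece then it solves the sum) is immediate by linearity.

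The main obstacle I foresee is translating Lemma~\ref{lemmamfold1}, phrased as ``no holonomic RE of order less than $m$ is satisfied by $h_n$'', into the $\mathbb{K}(n)$-linear independence of $(h_{n+i})_{0\le i \le m-1}$. The clear-denominators step above is routine, but I would want to be explicit about the genericity in $n$: the vanishing $\sum_j \Lambda_j(n)\, h_{n+j}=0$ holds for all sufficiently large integers $n$, and a non-zero rational function cannot vanish on a cofinite subset of $\mathbb{Z}$, so each $\Lambda_j$ is forced to be the zero rational function. Once that subtlety is dispatched, the remainder is bookkeeping with residue classes.
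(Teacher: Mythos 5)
Your argument is correct, and it takes a genuinely different route from the paper's. The paper proves both assertions by contradiction and by a rationality dichotomy: for the decomposition it supposes some summand $P_j a_{n+j}$ has no partner at index distance a multiple of $m$, divides by $h_{n+j}$, and observes that a sum of terms lying outside $\mathbb{K}(n)$ cannot equal the polynomial $-P_j$; for the second assertion it supposes $h_n$ fails on one $m$-fold distinct piece, argues that a second distinct piece must then also fail, and repeats the same dichotomy after splitting the sum into a rational and an irrational part. You instead distill Lemma~\ref{lemmamfold1} into a single statement --- the $\mathbb{K}(n)$-linear independence of $h_n,\dots,h_{n+m-1}$ --- and then make one direct computation: writing $i=km+j$ and $h_{n+i}=S_k(n+j)\,h_{n+j}$ with $S_k$ a nonzero rational function, the substituted recurrence collapses to $\sum_{j=0}^{m-1}\Lambda_j(n)\,h_{n+j}=0$, and independence kills every $\Lambda_j$ at once. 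This yields both conclusions simultaneously, and it also repairs the one spot where you are slightly too quick: you assert ``by construction'' that each residue-class piece $RE_j$ is $m$-fold holonomic, but the paper's definition requires at least two nonzero coefficients, so a class containing exactly one nonzero $P_i$ would not qualify; your own identity $\Lambda_j=P_i\,S_k(n+j)\equiv 0$ with $S_k\neq 0$ shows such a class cannot occur, so that claim should be drawn after, not before, the independence step. On balance your route is tighter --- it localizes all the work in one independence lemma and avoids the paper's somewhat delicate reduction to ``only one other distinct piece with $RE_{j_2}(h_n)\neq 0$'' --- while the paper's version keeps the rational-versus-irrational summand argument explicit, a pattern it reuses informally elsewhere. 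Your closing remark that a nonzero rational function cannot vanish on a cofinite subset of $\mathbb{Z}$ is exactly the right way to pass from the pointwise relation to an identity of rational functions.
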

\begin{proof}
	Let $h_n$ be an $m$-fold hypergeometric term solution of the recurrence equation
	\begin{equation}
		P_d a_{n+d} + P_{d-1}a_{n+d-1} + \cdots + P_0 a_n = 0,~~ d>m, ~P_d\cdot P_0\neq 0. \label{RE2}
	\end{equation}
	It suffices to show that for any non-zero term $P_ja_{n+j}$ in $(\ref{RE2})$, there exists another summand, say $P_ia_{n+i}$, such that $m$ divides $j-i$. Indeed, by summing $m$-fold holonomic REs we are sure that for each summand appearing on the left-hand side of the sum there must exist another summand whose index differs from the one of that summand by a multiple of $m$. 
	
	We proceed by contradiction. Assume there exists a non-zero term $P_ja_{n+j}$ in $(\ref{RE2})$ such that any other summand $P_ia_{n+i}$, $i\neq j$ does not verify that $m$ divides $j-i$. Since $h_n$ is a non-zero solution, we can divide the equation by $h_{n+j}$ and write
	\begin{equation}
		- P_j = \sum_{\underset{i\neq j}{i=0}}^{d} P_i \cdot \dfrac{h_{n+i}}{h_{n+j}}. \label{rhsproof}
	\end{equation}
	
	The situation is now two-fold:
	\begin{itemize}
		\item For $i$ verifying $|i-j|<m$, for each corresponding summand $P_i \cdot h_{n+i}/h_{n+j}$ on the right-hand side of $(\ref{rhsproof})$, the fact that $m$ does not divide $j-i$ implies that $ h_{n+i}/h_{n+j}\notin\mathbb{K}(n)$ since by assumption $h_n$ is an $m$-fold hypergeometric term over $\mathbb{K}$ that is not $u$-fold hypergeometric for all integers $u<m$. Therefore the whole term $P_i \cdot  h_{n+i}/h_{n+j}\notin\mathbb{K}(n)$.
		\item For $i$ verifying $|i-j|>m$, for each corresponding summand $P_i \cdot  h_{n+i}/h_{n+j}$ on the right-hand side of $(\ref{rhsproof})$, we have two possibilities:
		\begin{itemize}
			\item either $ h_{n+i}/h_{n+j}\notin\mathbb{K}(n)$ and we have the same conclusion as in the previous case;
			\item or $ h_{n+i}/h_{n+j}\in\mathbb{K}(n)$, but in this case since $m$ does not divide $j-i$, this implies that $h_n$ is not an $m$-fold hypergeometric term and we get a contradiction.
		\end{itemize}
	\end{itemize}
	Thus the identity $(\ref{rhsproof})$ is valid only if all the summands on its right-hand side do not belong to $\mathbb{K}(n)$. Therefore $(\ref{rhsproof})$ holds if and only if
	$$ \sum_{\underset{i\neq j}{i=0}}^{d} P_i \cdot \dfrac{h_{n+i}}{h_{n+j}} \notin \mathbb{K}(n),$$
	since similarly as in the proof of Lemma \ref{lemmamfold1} we know that the non-rationality is caused by the linear independency. However, the left-hand side $P_j(n)\in\mathbb{K}[n]\subset \mathbb{K}(n)$. Hence we obtain a contradiction.
	
	Let us now prove the second part of the theorem. Since the multiplication of a holonomic recurrence equation by a polynomial does not affect the computation of its $m$-fold hypergeometric term solutions, the linear combination of $m$-fold holonomic REs can always be considered as a sum of $m$-fold holonomic REs. Therefore it is enough to show that an $m$-fold hypergeometric term solution of a sum of $m$-fold holonomic recurrence equations is a solution of each of the involved $m$-fold distinct holonomic recurrences.
	
	The sum of $M$ $m$-fold holonomic recurrence equations, $M\in\mathbb{N}$, can be written as
	\begin{equation}
		\sum_{j=1}^{M}RE_j(a_n)=\sum_{j=1}^{M} \left( P_{d_j} a_{n+k_j+md_j} + P_{d_j-1} a_{n+k_j+m(d_j-1)} + \cdots + P_{0_j} a_{n+k_j} \right)=0, \label{summholo}
	\end{equation}
	where $k_j\in\llbracket 0, m-1\rrbracket$, and $P_{d_j}\cdot P_{0_j}\neq 0, j\in\llbracket 1, M\rrbracket$.
	
	If $M=1$, then $(\ref{summholo})$ is an $m$-fold holonomic recurrence equation and $h_n$ is an $m$-fold hypergeometric term solution of it.
	
	We assume now that $M\geqslant 2$ and that there are at least two $m$-fold distinct holonomic recurrence equations in $(\ref{summholo})$. Note that if the $M$ $m$-fold holonomic REs are $m$-fold equivalent then the situation is similar to the case $M=1$ since every linear combination of $m$-fold equivalent holonomic REs is an $m$-fold holonomic RE. 
	
	Now suppose that $h_n$ is not solution of $RE_{j_1}$ in $(\ref{summholo})$, $j_1\in\llbracket 1, M \rrbracket$, then given that $\sum_{j=1}^{M}RE_j(h_n)=0$, there must be at least one second $m$-fold holonomic recurrence equation $RE_{j_2}$, $j_2\in\llbracket 1, M \rrbracket$, $m$-fold distinct with $RE_{j_1}$ such that $RE_{j_2}(h_n)\neq 0$. Without loss of generality, we consider that $RE_{j_2}$ is the only second $m$-fold holonomic RE with these properties. Of course, if $RE_{j_1}(h_n)\neq 0$ and $RE_{j_1}(h_n) +RE_{j_2}(h_n)=0$ then $RE_{j_2}(h_n)\neq 0$. Thus, we have
	\begin{equation}
		\begin{cases}
			RE_{j_1}(h_n)\neq 0\\
			RE_{j_2}(h_n)\neq 0\\
			RE_{j_1}(h_n) + RE_{j_2}(h_n) = 0
		\end{cases}  .\label{EiEj}
	\end{equation}
	The fact that the $m$-fold holonomic recurrence equations $RE_{j_1}$ and $RE_{j_2}$ are $m$-fold distinct implies that $k_{j_1} - k_{j_2}$ is not a multiple of $m$.  
	
	Using $(\ref{EiEj})$, after substitution of $h_n$ in the sum of the equations and division by $h_{n+k_{j_1}+d_{j_1}m}$, we deduce that
	\begin{equation}
		-P_{d_{j_1}} = \sum_{e_{j_1}=0_{j_1}}^{d_{j_1}-1} P_{e_{j_1}}\dfrac{h_{n+k_{j_1}+e_{j_1}m}}{h_{n+k_{j_1}+d_{j_1}m}}  + \sum_{e_{j_2}=0_{j_2}}^{d_{j_2}} P_{e_{j_2}}\dfrac{h_{n+k_{j_2}+e_{j_2}m}}{h_{n+k_{j_1}+d_{j_1}m}} = S_{j_1} + S_{j_1,j_2},
	\end{equation}
	which is equivalent to
	\begin{equation}
		- P_{d_{j_1}} - S_{j_1} = S_{j_1,j_2}. \label{endproof}
	\end{equation}
	All the summands of $S_{j_1}$ belong to $\mathbb{K}(n)$ since $h_n$ is an $m$-fold hypergeometric sequence and the corresponding index differences 
	$$n+k_{j_1}+e_{j_1}m - (n+k_{j_1}+d_{j_1}m) = m\cdot (e_{j_1}-d_{j_1})$$
	are multiples of $m$. However, for $S_{j_1,j_2}$ the index differences
	$$ n+k_{j_2}+e_{j_2}m - (n+k_{j_1}+d_{j_1}m) = k_{j_2}-k_{j_1} + m\cdot (e_{j_2}-d_{j_1}) $$
	are not multiples of $m$. Therefore by the same argument used in the first part of the proof we deduce that $S_{j_1,j_2}\notin \mathbb{K}(n)$.
	Thus $(\ref{endproof})$ holds if and only if $- P_{d_{j_1}} - S_{j_1} \in\mathbb{K}(n)$ and $S_{j_1,j_2}\notin \mathbb{K}(n)$. Therefore we get a contradiction.
\end{proof}

From this theorem, given $m\in\mathbb{N}$, we are now sure to compute a basis of all $m$-fold hypergeometric term solutions of a given holonomic recurrence equation by splitting it into the sum of $m$-fold distinct holonomic recurrence equations and use van Hoeij's algorithm or the algorithm in (\cite{BThyper}) to solve these holonomic REs provided the change of variable $(\ref{changeofvargen})$.

Note that since we compute $m$-fold hypergeometric terms as elements of a basis of all $m$-fold hypergeometric term solutions of holonomic REs, an $m$-fold hypergeometric term is solution of two given holonomic REs if it is linearly dependent to an element of the basis of all $m$-fold hypergeometric term solutions of each of these holonomic REs. Therefore, the solutions sought are built by all the linearly dependent $m$-fold hypergeometric term solutions of each involved $m$-fold distinct holonomic REs. Note that the computation of $m$-fold hypergeometric term solutions with representation $(\ref{mjhypdef})$ for $j=0$ of each $m$-fold holonomic RE
\begin{equation}
	P_{d_{i}} a_{n+k_{i}+md_{i}} + P_{d_{i}-1} a_{n+k_{i}+m(d_{i}-1)} + \cdots + P_{0_{i}} a_{n+k_{i}}=0, \label{rei}
\end{equation}
is done after writing it in the form $(\ref{mhypre})$. Thus $(\ref{rei})$ is transformed as
\begin{equation}
	P_{d_{i}}(n-k_i) a_{n+md_{i}} + P_{d_{i}-1}(n-k_i) a_{n+m(d_{i}-1)} + \cdots + P_{0_{i}}(n-k_i) a_{n}=0.
\end{equation}

Let us take as an example the holonomic RE satisfied by the Taylor coefficients of \linebreak $\exp(z)+\cos(z)$.

\noindent
\begin{minipage}[t]{8ex}\color{red}\bf
	\begin{verbatim}
		(%i1) 
	\end{verbatim}
\end{minipage}
\begin{minipage}[t]{\textwidth}\color{blue}
	\begin{verbatim}
		FindRE(cos(z)+exp(z),z,a[n]);
	\end{verbatim}
\end{minipage}
\definecolor{labelcolor}{RGB}{100,0,0}
\[\displaystyle
\parbox{10ex}{$\color{labelcolor}\mathrm{\tt (\%o1) }\quad $}\hspace{-0.4cm}
\left( 1+n\right) \cdot \left( 2+n\right) \cdot \left( 3+n\right) \cdot {{a}_{n+3}}-\left( 1+n\right) \cdot \left( 2+n\right) \cdot {{a}_{n+2}}+\left( 1+n\right) \cdot {{a}_{n+1}}-{{a}_{n}}=0\mbox{}
\]
This is a linear combination of two $2$-fold distinct holonomic REs, namely
$$RE1:~ \left( 1+n\right) \cdot \left( 2+n\right) \cdot \left( 3+n\right) \cdot {{a}_{n+3}}+\left( 1+n\right) \cdot {{a}_{n+1}}=0,  $$
and 
$$RE2:~\left( -1-n\right) \cdot \left( 2+n\right) \cdot {{a}_{n+2}}-{{a}_{n}}=0.$$
Only $RE1$ has to be transformed as its trailing term is not of order $0$. This yields
$$RE11:~ n\cdot \left( 1+n\right) \cdot \left( 2+n\right) \cdot {{a}_{n+2}}+n\cdot {{a}_{n}}=0.$$
From this one easily sees that the given holonomic RE has $2$-fold hypergeometric term solutions since we get two two-term recurrence relations that are linearly dependent: 
$$-n\cdot RE2 = RE11.$$

Remember that there is no need to use all the $m$ changes of variable of $(\ref{changeofvargen})$ because as we explained earlier, once one succeeds in computing a basis of $m$-fold hypergeometric term solutions corresponding to the representation $(\ref{mjhypdef})$ for a fixed $j\in\llbracket 0, m-1\rrbracket$, the other ones can be computed in a similar way. This will be used for power series computations in order to consider all possible linear combinations of hypergeometric type series of type $m$.

This result is a consequence of observing $m$-fold hypergeometric terms as sequences whose indices are taken in $m\cdot \mathbb{Z}+j$, $j\in\llbracket0, m-1 \rrbracket$. Commonly this notion is used according to its definition for the set of integers $\mathbb{Z}$ which is generally the chosen set of indices. In this case two terms of a sequence are said to be consecutive if their index difference is $1$ or $-1$. Such a definition is more useful for hypergeometric terms since it allows van Hoeij's algorithm to look for hypergeometric term solutions of holonomic REs in such a way that the ratio of two consecutive terms is a rational function over the considered field. For the $m$-fold case, however, one rather needs to consider $m\mathbb{Z}$ as the set of indices so that the computation of $m$-fold hypergeometric term ($m\geqslant 2$) solutions of holonomic REs is done analogously to the one of hypergeometric terms. In this situation one could say that two terms of an $m$-fold sequence are consecutive if the difference of their indices is $m$ or $-m$.

To compute the basis of all $m$-fold hypergeometric term solutions of a given holonomic RE, the algorithm proceeds by iteration up to the order of the RE. Note, however, that more often the number of cases to be considered is much smaller than the order of the given RE. For example, the recurrence equation

\noindent
\begin{minipage}[t]{8ex}\color{red}\bf
	\begin{verbatim}
		(%i1) 
	\end{verbatim}
\end{minipage}
\begin{minipage}[t]{\textwidth}\color{blue}
	\begin{verbatim}
		RE:FindRE(sin(z^3)^3,z,a[n]);
	\end{verbatim}
\end{minipage}
\definecolor{labelcolor}{RGB}{100,0,0}
\[\displaystyle
\parbox{10ex}{$\color{labelcolor}\mathrm{\tt (\%o1) }\quad $}
\left( n-8\right) \cdot \left( n-5\right) \cdot \left( n-2\right) \cdot \left( 1+n\right) \cdot {{a}_{n+1}}+90\cdot \left( n-8\right) \cdot \left( n-5\right) \cdot {{a}_{n-5}}+729\cdot {{a}_{n-11}}=0\mbox{}
\]
is a $2$-fold, $3$-fold and $6$-fold holonomic RE of order $12$. It is straightforward to see that all the other cases do not lead to a solution since the recurrence equation cannot be written as a sum of $m$-fold distinct holonomic REs for $m\notin\{1,2,3,6\}$. 

Our algorithm to compute $m$-fold hypergeometric term solutions of a given holonomic RE, called mfoldHyper, is the following.

\begin{algorithm}[h!]
	\caption{mfoldHyper: $m$-fold hypergeometric term solutions of holonomic recurrence equation of order $d\in\mathbb{N}$}\label{mymfold}
	\begin{algorithmic}[3]
		\Require A holonomic recurrence equation
		\begin{equation}
			P_d a_{n+d} + P_{d-1}a_{n+d-1} + \cdots + P_0 a_n = 0,~~ d>m, ~P_d\cdot P_0\neq 0 \label{RE3}
		\end{equation}
		\Ensure A basis (incomplete form) of all $m$-fold hypergeometric term solutions of $(\ref{RE3})$.
		
		\begin{enumerate}
			\item Set $H=\{\}$.
			\item Use the algorithm in (\cite{BThyper}) to find the basis, say $H_1$, of all hypergeometric term solutions of $(\ref{RE3})$. If $H_1\neq\emptyset$, then add $[1,H_1]$ to $H$.
			\item For $2\leqslant m \leqslant d$ do:
			\begin{itemize}
				\item[(a)] Extract the following $m$-fold holonomic recurrence equations from $(\ref{RE3})$ and construct the system
			\end{itemize}
	    \end{enumerate}
		
		 	\algstore{pause4}
		 \end{algorithmic}
		\end{algorithm}
		\clearpage 
		
		\begin{algorithm}[H]
		\ContinuedFloat
		\caption{mfoldHyper}
		\begin{algorithmic}[3]
			\algrestore{pause4}	
			\State     	
		\begin{enumerate}
			\setcounter{enumi}{2}
			\item
			\begin{itemize}
				\item[(a)]
				\begin{equation}
					\hspace{-1.5cm}
					\begin{cases}
						P_0(n)\cdot a_n + P_m(n)\cdot a_{n+m}+ \cdots + P_{m\cdot \lfloor \frac{d}{m} \rfloor}(n)\cdot a_{n+m\cdot \lfloor \frac{d}{m} \rfloor}=0\\
						P_1(n)\cdot a_{n+1} + P_{m+1}(n)\cdot a_{n+m+1}+ \cdots + P_{m\cdot \lfloor \frac{d}{m} \rfloor + 1}(n)\cdot a_{n+m\cdot \lfloor \frac{d}{m} \rfloor+1}=0\\
						\ldots\\
						P_{m-1}(n)\cdot a_{n+m-1} + P_{2m-1}(n)\cdot a_{n+2m-1}+ \cdots + P_{m\cdot \lfloor \frac{d}{m} \rfloor+m-1}(n)\cdot a_{n+m\cdot \lfloor \frac{d}{m} \rfloor + m-1}=0\\
						\\
					\end{cases} \label{mhypeq},
				\end{equation}
				assuming $P_j(n)=0$ for $j>d$.
				\item[(b)] If there exists a holonomic RE with only one non-zero polynomial coefficient in $(\ref{mhypeq})$, then stop and go back to step $3$.(a) for $m+1$.
				\item[(c)] Shift all the $m$-fold holonomic recurrence equations in $(\ref{mhypeq})$ so that the order of the trailing term equals $0$.
				\item[(d)] Apply the change of variable $(\ref{changeofvar})$ for each $m$-fold holonomic recurrence equation.
				\item[(e)] Compute a basis of all hypergeometric term solutions $s_k$ as defined in $(\ref{changeofvar})$ for $(\ref{changeRE})$ of each resulting holonomic recurrence equation with the algorithm in (\cite{BThyper}).
				\item[(f)] Construct the set $H_m$ of hypergeometric terms which are each linearly dependent to one term in each of the $m$ computed bases in step $3$.(d).
				\item[(g)] If $H_m\neq\emptyset$ then add  $[m, H_m]$ in $H$.
			\end{itemize}
		   \item Return $H$.
	\end{enumerate}
	\end{algorithmic}
\end{algorithm}

We implemented mfoldHyper in Maxima as \textit{mfoldHyper(RE,a[n],[m,j])}, by default \textit{[m,j]} is an empty list. In that default case each list of $m$-fold hypergeometric term solutions, say $[m,[h_{1,m},h_{2,m},\ldots]]$, contains "simple formulas" of hypergeometric terms corresponding to $j=0$ in $(\ref{mjhypdef})$. Once we know that there are some $m$-fold hypergeometric term solutions for particular $m\in\mathbb{N}$, the algorithm can be called as \textit{mfoldHyper(RE,a[n],m,j)} for $0\leqslant j < m$ to get the solutions in their other representations.

Let us now apply the algorithm to some examples. We hide the recurrence equations for space saving purposes. All these computations can be done with our package FPS currently available as third-party Maxima package on Github.

\noindent
\begin{minipage}[t]{8ex}\color{red}\bf
	\begin{verbatim}
		(%i1) 
	\end{verbatim}
\end{minipage}
\begin{minipage}[t]{\textwidth}\color{blue}
	\begin{verbatim}
		RE:FindRE(atan(z)+exp(z),z,a[n])$
	\end{verbatim}
\end{minipage}

\noindent
\begin{minipage}[t]{8ex}\color{red}\bf
	\begin{verbatim}
		(%i2) 
	\end{verbatim}
\end{minipage}
\begin{minipage}[t]{\textwidth}\color{blue}
	\begin{verbatim}
		mfoldHyper(RE,a[n]);
	\end{verbatim}
\end{minipage}
\definecolor{labelcolor}{RGB}{100,0,0}
\[\displaystyle
\parbox{10ex}{$\color{labelcolor}\mathrm{\tt (\%o2) }\quad $}
\left[\left[1,\left\{\frac{1}{n!}\right\}\right],\left[2,\left\{\frac{{{\left( -1\right) }^{n}}}{n}\right\}\right]\right]\mbox{}
\]

Sometimes computations may involve algebraic extension fields of $\mathbb{Q}$, the syntax is \textit{mfoldHyper(RE,a[n],[K])} for the two possible values \textit{K=C} or \textit{K=Q} (default value). To ask for specific $m$-fold hypergeometric term solutions the syntax is \textit{mfoldHyper(RE,a[n],[K,m,j])}.

\noindent
\begin{minipage}[t]{8ex}\color{red}\bf
	\begin{verbatim}
		(%i3) 
	\end{verbatim}
\end{minipage}
\begin{minipage}[t]{\textwidth}\color{blue}
	\begin{verbatim}
		RE:FindRE(log(1+z+z^2)+cos(z),z,a[n])$
	\end{verbatim}
\end{minipage}

\noindent
\begin{minipage}[t]{8ex}\color{red}\bf
	\begin{verbatim}
		(%i4) 
	\end{verbatim}
\end{minipage}
\begin{minipage}[t]{\textwidth}\color{blue}
	\begin{verbatim}
		mfoldHyper(RE,a[n],C);
	\end{verbatim}
\end{minipage}
\definecolor{labelcolor}{RGB}{100,0,0}
\[\displaystyle
\parbox{10ex}{$\color{labelcolor}\mathrm{\tt (\%o4) }\quad $}
\left[\left[1,\left\{\frac{{{\left( \frac{-1-\sqrt{3}\cdot i}{2}\right) }^{n}}}{n},\frac{{{\left( \frac{\sqrt{3}\cdot i-1}{2}\right) }^{n}}}{n},\frac{{{\left( -i\right) }^{n}}}{n!},\frac{{{\left( -1\right) }^{\frac{n}{2}}}}{n!}\right\}\right],\left[2,\left\{\frac{{{\left( -1\right) }^{n}}}{\left( 2\cdot n\right) !}\right\}\right]\right]\mbox{}
\]
where the obtained $2$-fold hypergeometric term is the coefficient of the hypergeometric type series of $\cos(z)$.

\noindent
\begin{minipage}[t]{8ex}\color{red}\bf
	\begin{verbatim}
		(%i5) 
	\end{verbatim}
\end{minipage}
\begin{minipage}[t]{\textwidth}\color{blue}
	\begin{verbatim}
		declare(q1,constant)$
	\end{verbatim}
\end{minipage}

\noindent
\begin{minipage}[t]{8ex}\color{red}\bf
	\begin{verbatim}
		(%i6) 
	\end{verbatim}
\end{minipage}
\begin{minipage}[t]{\textwidth}\color{blue}
	\begin{verbatim}
		declare(q2,constant)$
	\end{verbatim}
\end{minipage}

\noindent
\begin{minipage}[t]{8ex}\color{red}\bf
	\begin{verbatim}
		(%i7) 
	\end{verbatim}
\end{minipage}
\begin{minipage}[t]{\textwidth}\color{blue}
	\begin{verbatim}
		RE:FindRE(1/((q1-z^2)*(q2-z^3)),z,a[n])$
	\end{verbatim}
\end{minipage}

\noindent
\begin{minipage}[t]{8ex}\color{red}\bf
	\begin{verbatim}
		(%i8) 
	\end{verbatim}
\end{minipage}
\begin{minipage}[t]{\textwidth}\color{blue}
	\begin{verbatim}
		mfoldHyper(RE,a[n],C);
	\end{verbatim}
\end{minipage}
\definecolor{labelcolor}{RGB}{100,0,0}
\[\displaystyle
\parbox{10ex}{$\color{labelcolor}\mathrm{\tt (\%o8) }\quad $}\hspace{-0.7cm}
\left[\left[1,\left\{{{\left( -\sqrt{\frac{1}{\mathit{q1}}}\right) }^{n}},{{\left( \frac{1}{\mathit{q1}}\right) }^{\frac{n}{2}}},{{\left( \frac{\sqrt{3}\cdot i\cdot {{\left( \frac{1}{\mathit{q2}}\right) }^{\frac{1}{3}}}-{{\left( \frac{1}{\mathit{q2}}\right) }^{\frac{1}{3}}}}{2}\right) }^{n}},{{\left( \frac{1}{\mathit{q2}}\right) }^{\frac{n}{3}}}\right\}\right],\left[2,\left\{{{\left( \frac{1}{\mathit{q1}}\right) }^{n}}\right\}\right],\left[3,\left\{{{\left( \frac{1}{\mathit{q2}}\right) }^{n}}\right\}\right]\right]\mbox{}\]

For these previous examples, the current Maple \textit{convert/FormalPowerSeries} yields complicated power series representations because the above $m$-fold hypergeometric terms, $m\geqslant 2$, are not found. Next we compute the power series coefficients of some expressions for which \textit{convert/FormalPowerSeries} does not find representations.

\noindent
\begin{minipage}[t]{8ex}\color{red}\bf
	\begin{verbatim}
		(%i9) 
	\end{verbatim}
\end{minipage}
\begin{minipage}[t]{\textwidth}\color{blue}
	\begin{verbatim}
		RE:FindRE(exp(z^2)+cos(z^2),z,a[n])$
	\end{verbatim}
\end{minipage}

\noindent
\begin{minipage}[t]{8ex}\color{red}\bf
	\begin{verbatim}
		(%i10) 
	\end{verbatim}
\end{minipage}
\begin{minipage}[t]{\textwidth}\color{blue}
	\begin{verbatim}
		mfoldHyper(RE,a[n]);
	\end{verbatim}
\end{minipage}
\definecolor{labelcolor}{RGB}{100,0,0}
\[\displaystyle
\parbox{10ex}{$\color{labelcolor}\mathrm{\tt (\%o10) }\quad $}
\left[\left[2,\left\{\frac{1}{n!}\right\}\right],\left[4,\left\{\frac{{{\left( -1\right) }^{n}}}{\left( 2\cdot n\right) !}\right\}\right]\right]\mbox{}
\]

\noindent
\begin{minipage}[t]{8ex}\color{red}\bf
	\begin{verbatim}
		(%i11) 
	\end{verbatim}
\end{minipage}
\begin{minipage}[t]{\textwidth}\color{blue}
	\begin{verbatim}
		RE:FindRE(cosh(z^3)+sin(z^2),z,a[n])$
	\end{verbatim}
\end{minipage}

\noindent
\begin{minipage}[t]{8ex}\color{red}\bf
	\begin{verbatim}
		(%i12) 
	\end{verbatim}
\end{minipage}
\begin{minipage}[t]{\textwidth}\color{blue}
	\begin{verbatim}
		mfoldHyper(RE,a[n]);
	\end{verbatim}
\end{minipage}
\definecolor{labelcolor}{RGB}{100,0,0}
\[\displaystyle
\parbox{10ex}{$\color{labelcolor}\mathrm{\tt (\%o12) }\quad $}
\left[\left[3,\left\{\frac{1}{n!},\frac{{{\left( -1\right) }^{n}}}{n!}\right\}\right],\left[4,\left\{\frac{{{\left( -1\right) }^{n}}}{\left( 2\cdot n\right) !}\right\}\right],\left[6,\left\{\frac{1}{\left( 2\cdot n\right) !}\right\}\right]\right]\mbox{}
\]

\noindent
\begin{minipage}[t]{8ex}\color{red}\bf
	\begin{verbatim}
		(%i13) 
	\end{verbatim}
\end{minipage}
\begin{minipage}[t]{\textwidth}\color{blue}
	\begin{verbatim}
		RE:FindRE(asin(z^2)^2+acos(z),z,a[n])$
	\end{verbatim}
\end{minipage}

\noindent
\begin{minipage}[t]{8ex}\color{red}\bf
	\begin{verbatim}
		(%i13) 
	\end{verbatim}
\end{minipage}
\begin{minipage}[t]{\textwidth}\color{blue}
	\begin{verbatim}
		mfoldHyper(RE,a[n]);
	\end{verbatim}
\end{minipage}
\definecolor{labelcolor}{RGB}{100,0,0}
\[\displaystyle
\parbox{10ex}{$\color{labelcolor}\mathrm{\tt (\%o13) }\quad $}
\left[\left[2,\left\{\frac{{{4}^{n}}\cdot {{n!}^{2}}}{{{n}^{2}}\cdot \left( 2\cdot n\right) !}\right\}\right],\left[4,\left\{\frac{{{4}^{n}}\cdot {{n!}^{2}}}{{{n}^{2}}\cdot \left( 2\cdot n\right) !}\right\}\right]\right]\mbox{}
\]

\noindent
\begin{minipage}[t]{8ex}\color{red}\bf
	\begin{verbatim}
		(%i14) 
	\end{verbatim}
\end{minipage}
\begin{minipage}[t]{\textwidth}\color{blue}
	\begin{verbatim}
		RE:FindRE(sqrt(sqrt(8*z^3+1)-1)+sqrt(7+13*z^4),z,a[n])$
	\end{verbatim}
\end{minipage}

\noindent
\begin{minipage}[t]{8ex}\color{red}\bf
	\begin{verbatim}
		(%i15) 
	\end{verbatim}
\end{minipage}
\begin{minipage}[t]{\textwidth}\color{blue}
	\begin{verbatim}
		mfoldHyper(RE,a[n]);
	\end{verbatim}
\end{minipage}
\definecolor{labelcolor}{RGB}{100,0,0}
\[\displaystyle
\parbox{10ex}{$\color{labelcolor}\mathrm{\tt (\%o15) }\quad $}
\left[\left[3,\left\{\frac{{{\left( \frac{1}{4}\right) }_{n}}\cdot {{\left( \frac{3}{4}\right) }_{n}}\cdot {{\left( -8\right) }^{n}}\cdot {{4}^{n}}}{\left( 4\cdot n-1\right) \cdot \left( 2\cdot n\right) !}\right\}\right],\left[4,\left\{ \frac{{{4}^{-4-n}}\cdot {{\left( -13\right) }^{n}}\cdot \left( 2\cdot n\right) !}{\left( 2\cdot n-1\right) \cdot {{7}^{n}}\cdot {{n!}^{2}}}\right\}\right]\right]\mbox{}
\]

\noindent
\begin{minipage}[t]{8ex}\color{red}\bf
	\begin{verbatim}
		(%i16) 
	\end{verbatim}
\end{minipage}
\begin{minipage}[t]{\textwidth}\color{blue}
	\begin{verbatim}
		RE:FindRE(sin(z^3)^3,z,a[n])$
	\end{verbatim}
\end{minipage}

\noindent
\begin{minipage}[t]{8ex}\color{red}\bf
	\begin{verbatim}
		(%i17) 
	\end{verbatim}
\end{minipage}
\begin{minipage}[t]{\textwidth}\color{blue}
	\begin{verbatim}
		mfoldHyper(RE,a[n]);
	\end{verbatim}
\end{minipage}
\definecolor{labelcolor}{RGB}{100,0,0}
\[\displaystyle
\parbox{10ex}{$\color{labelcolor}\mathrm{\tt (\%o17) }\quad $}
\left[\left[6,\left\{\frac{{{\left( -9\right) }^{n}}}{\left( 2\cdot n\right) !},\frac{{{\left( -1\right) }^{n}}}{\left( 2\cdot n\right) !}\right\}\right]\right]\mbox{}
\]

Let us now use our implementation for the computation of a specific representation of $m$-fold hypergeometric term solutions. In this case the user has to specify a value for $m$ and $j$ with $j\in\llbracket 0,m-1\rrbracket$.

\noindent
\begin{minipage}[t]{8ex}\color{red}\bf
	\begin{verbatim}
		(%i18) 
	\end{verbatim}
\end{minipage}
\begin{minipage}[t]{\textwidth}\color{blue}
	\begin{verbatim}
		RE:FindRE(asin(z)^2+log(1+z^5),z,a[n])$
	\end{verbatim}
\end{minipage}

\noindent
\begin{minipage}[t]{8ex}\color{red}\bf
	\begin{verbatim}
		(%i19) 
	\end{verbatim}
\end{minipage}
\begin{minipage}[t]{\textwidth}\color{blue}
	\begin{verbatim}
		mfoldHyper(RE,a[n],5,0);
	\end{verbatim}
\end{minipage}
\definecolor{labelcolor}{RGB}{100,0,0}
\[\displaystyle
\parbox{10ex}{$\color{labelcolor}\mathrm{\tt (\%o19) }\quad $}
\left\{\frac{{{\left( -1\right) }^{n}}}{2\cdot n}\right\}\mbox{}
\]

\noindent
\begin{minipage}[t]{8ex}\color{red}\bf
	\begin{verbatim}
		(%i20) 
	\end{verbatim}
\end{minipage}
\begin{minipage}[t]{\textwidth}\color{blue}
	\begin{verbatim}
		mfoldHyper(RE,a[n],5,3);
	\end{verbatim}
\end{minipage}
\definecolor{labelcolor}{RGB}{100,0,0}
\[\displaystyle
\parbox{10ex}{$\color{labelcolor}\mathrm{\tt (\%o20) }\quad $}
\left\{\frac{{{\left( -1\right) }^{n}}}{2\cdot \left( 5\cdot n+3\right) }\right\}\mbox{}
\]

\noindent
\begin{minipage}[t]{8ex}\color{red}\bf
	\begin{verbatim}
		(%i21) 
	\end{verbatim}
\end{minipage}
\begin{minipage}[t]{\textwidth}\color{blue}
	\begin{verbatim}
		mfoldHyper(RE,a[n],2,1);
	\end{verbatim}
\end{minipage}
\definecolor{labelcolor}{RGB}{100,0,0}
\[\displaystyle
\parbox{10ex}{$\color{labelcolor}\mathrm{\tt (\%o21) }\quad $}
\left\{\frac{\left( 2\cdot n\right) !}{\left( 2\cdot n+1\right) \cdot {{4}^{n}}\cdot {{n!}^{2}}}\right\}\mbox{}
\]

Eventually, note that the existence of $m$-fold hypergeometric term solutions of a holonomic recurrence equation satisfied by the Taylor coefficients of a given expression does not necessarily guarantee that this expression represents a hypergeometric type function. For example, $\arctan(z)\cdot \cos(z)$ yields a recurrence equations satisfied by the coefficients of $\cos(z)$.

\noindent
\begin{minipage}[t]{8ex}\color{red}\bf
	\begin{verbatim}
	(%i22)
	\end{verbatim}
\end{minipage}
\begin{minipage}[t]{\textwidth}\color{blue}
	\begin{verbatim}
		RE:FindRE(atan(z)*cos(z),z,a[n])$
	\end{verbatim}
\end{minipage}

\noindent
\begin{minipage}[t]{8ex}\color{red}\bf
	\begin{verbatim}
	(%i23) 
\end{verbatim}
\end{minipage}
\begin{minipage}[t]{\textwidth}\color{blue}
	\begin{verbatim}
	mfoldHyper(RE,a[n]);
   \end{verbatim}
\end{minipage}
\[\displaystyle \parbox{10ex}{$\color{labelcolor}\mathrm{\tt (\%o23) }\quad $}
\left[\left[2\operatorname{,}\left\{\frac{{{\left( -1\right) }^{n}}}{\left( 2 n\right) \operatorname{!}}\right\}\right]\right]\mbox{}
\]
However, we know that the coefficient must be different. In the next section, by finding the linear combination of hypergeometric type power series we will be able to decide using some initial values whether a potential coefficient is the correct one.

\section{Hypergeometric type power series}\label{sec3}

The novelty of the results in this section could not be well understood without a precise definition of what we consider as hypergeometric type power series. 

\begin{definition}[Hypergeometric type power series]\label{def1}
Let $\mathbb{K}$ be a field of characteristic zero. For an expansion around $z_0\in\mathbb{K}$, a series $s(z)$ is said to be of hypergeometric type if it can be written as
   \begin{equation}
		s(z) := T(z) + \sum_{j=1}^{J} s_j(z),~ s_j=\sum_{n=n_{j,0}}^{\infty} a_{j,n}(z-z_0)^{n/k_j}\label{eq1}
	\end{equation}
	where $n$ is the summation variable, $T(z)\in\mathbb{K}[z,1/z,\ln(z)]$, $n_0\in\mathbb{Z}$, $J, k_j\in\mathbb{N}$, and $a_{j,n}$ is an $m_j$-fold hypergeometric term, $m_j\in\mathbb{N}$.
	
	Thus a hypergeometric type power series is a linear combination of Laurent-Puiseux series whose coefficients are $m$-fold hypergeometric terms. A hypergeometric function is a function that can be expanded as a hypergeometric type power series. $T$ is called the Laurent polynomial part of the expansion, and the $k_j$'s are its Puiseux numbers.
\end{definition}
The presence of $\ln(z)$ in a hypergeometric type expansion is justified by the solution of the underlying holonomic differential equation (see \cite{kauers2011}). The definition in (\cite{Koepf1992}) reduces to the case $T=0$ and $J\leqslant m$, where $m$ is the unique type\footnote{Originally the type was used to denote the value of $m$ for an $m$-fold hypergeometric term coefficient.} encountered in Definition \ref{def1}. With this new definition, we can define the type of the series $(\ref{eq1})$ as the tuple $(m_1,m_2,\ldots,m_J)$. Note, however, that we do not compute the coefficients as they appear in $(\ref{eq1})$, but instead for powers of the form $z^{m_j\cdot n + i}$, $0\leqslant i<m_j$ which is more suitable for the coefficients computed using mfoldHyper.

According to the general algorithm described in (\cite{Koepf1992}), we are at the final step of the power series computation procedure. Let us first recall what these steps are. For a given expression $f$, we compute its power series in the following way:
\begin{enumerate}
	\item Find a holonomic differential equation for $f$ using the algorithm in Subsection \ref{mymethodDE};
	\item Convert that holonomic DE into a holonomic recurrence equation satisfied by the power series coefficients of $f$ (see $(\ref{Algo3corr})$);
	\item Solve the obtained holonomic RE which in our case reduces to compute a basis of all the $m$-fold hypergeometric term solutions of that RE using algorithm mfoldHyper;
	\item If there are solutions, use initial values to find the linear combination of the resulting hypergeometric type power series that corresponds to the power series expansion of $f$, if such a linear combination is valid.
\end{enumerate}

Regarding Puiseux series, we will give a generalization of an idea in \cite[Section 5]{SC93}. We will see that computing Puiseux numbers $k_j$'s appearing in $(\ref{eq1})$ reduces to finding a number $k$ which can be defined as the Puiseux number of the corresponding hypergeometric type series. Once $k$ is found, we use the substitution $h(z)=f(z^k)$ to bring the situation to the Laurent series one, and finally divide the general power of the indeterminate $z$ in the obtained power series representation of $h(z)$ by $k$ to get the expansion sought. This is an intermediate step between the second and the third step above.

Our goal is to compute a representation of the form
\begin{equation}
	f(z) = T(z) + F(z),
\end{equation}
where $T(z)\in\mathbb{K}[z,\frac{1}{z},\log(z)]$ is a Laurent polynomial in the variable $z$ with coefficients in $\mathbb{K}[\log(z)]$, and $F(z)$ is a linear combination of hypergeometric type series. We mention that $T(z)$ is not uniquely determined but its determination will be made more precise by Lemma $\ref{polypart}$ and Algorithm $\ref{PolyPart}$.

\section{Finding the Puiseux number}\label{Puiseuxnbr}

Assume we are looking for a representation of the form

\begin{equation}
	f(z) = T(z) + F(z) := T(z) + \sum_{i=1}^{I} \sum_{n=0}^{\infty} s_{i_{n}} z^{(m_i\cdot n + j_i)/k_i} \label{Grep}
\end{equation}
where $m_i,k_i\in\mathbb{N}$, $j_i\in\llbracket 0,m_i-1\rrbracket$, $s_{i_n}$ is an $m_i$-fold hypergeometric term corresponding to $j=j_i$ in the representation $(\ref{mjhypdef})$, and $T(z)$ is an extra term whose computation will be explained in the next subsection.

It is enough to suppose that $F(z)$ in $(\ref{Grep})$ is the sum of two hypergeometric type series of type $m_1$ and $m_2$ since our development works similarly in the general situation. We have
\begin{equation}
	F(z) := \sum_{n=0}^{\infty} s_{1_n}z^{(m_1\cdot n + j_1)/k_1} + \sum_{n=0}^{\infty} s_{2_n}z^{(m_2\cdot n + j_2)/k_2}, \label{F2}
\end{equation}
with the same definitions in $(\ref{Grep})$ for $I=2$. For simplicity, we also assume that $k_1$ and $k_2$ are co-prime. This is to avoid the use of more variables since in particular this assumption implies that the least common multiple of $k_1$ and $k_2$ is $\lcm(k_1,k_2)=k_1\cdot k_2$. Substituting $z$ by $z^{\lcm(k_1,k_2)}$ in $(\ref{F2})$ gives
\begin{eqnarray}
	F(z^{\lcm(k_1,k_2)}) &=& \sum_{n=0}^{\infty} s_{1_n}z^{(m_1\cdot n + j_1)\cdot k_2} + \sum_{n=0}^{\infty} s_{2_n}z^{(m_2\cdot n + j_2)\cdot k_1}\label{F21} \\
	&=& \sum_{n\in k_2\cdot \left( m_1\cdot \mathbb{N}_{\geqslant 0} + j_1\right)}^{} a_{1_{\frac{n}{k_2}}}z^{n} + \sum_{n\in k_1\cdot \left( m_2\cdot \mathbb{N}_{\geqslant 0} + j_2\right)}^{} a_{2_{\frac{n}{k_1}}}z^{n},  \label{F22}
\end{eqnarray}
where $a_{i_n}$ is obtained from $s_{i_n}$ by the change of variable $(\ref{changeofvargen})$, $i\in\{1,2\}$. 

Observe that in $(\ref{F21})$ the powers of the indeterminate $z$ are integers. In general, the right-hand side of $(\ref{Grep})$ always gives a representation with integer powers when we substitute $z$ by $z^{\mu}$, for any positive multiple $\mu$ of $\lcm(k_1,k_2)$. Power series with integer powers are dealt with in other sections of this chapter. Thus our aim of determining the positive integers $k_i$, $i\in\llbracket 1, I\rrbracket$ in $(\ref{Grep})$ can be reduced in finding a positive multiple $\mu$ of $\lcm(k_1,\ldots, k_I)$ so that we can compute the power series of $f(z^{\mu})$ and substitute $z$ by $z^{1/\mu}$ in the obtained representation to get the one of $f(z)$.

By the general representation $(\ref{mhypdef})$ of an $m$-fold hypergeometric term, we know that there exist rational functions $r_1(n)$ and $r_2(n)$ such that
\begin{equation*}
	a_{1_{n+m_1}} = r_1(n) \cdot a_{1_n}   ~\text{ and }~ a_{2_{n+m_2}} = r_2(n) \cdot a_{2_n}, 
\end{equation*}
for the coefficients in $(\ref{F22})$. Therefore we can write
\begin{equation}
	a_{1_{\frac{n}{k_2}+m_1}} = r_1\left(\frac{n}{k_2}\right) \cdot a_{1_{\frac{n}{k_2}}}   ~\text{ and }~ a_{2_{\frac{n}{k_1}+m_2}} = r_2\left(\frac{n}{k_1}\right) \cdot a_{2_{\frac{n}{k_1}}}. \label{F23}
\end{equation}
where $\frac{n}{k_1}$ and $\frac{n}{k_2}$ are not necessarily integers.

To compute the holonomic recurrence equation of smallest order for the $m_1$-fold and the $m_2$-fold hypergeometric terms $a_{1_{\frac{n}{k_2}}}$ and $a_{2_{\frac{n}{k_1}}}$, we need to use the smallest integer $k$ such that $k\cdot \frac{n}{k_2}\in\mathbb{N}_{\geqslant 0}$ and $k\cdot \frac{n}{k_1}\in\mathbb{N}_{\geqslant 0}$. Thus $k=\lcm(k_1,k_2)$ and the obtained holonomic RE is of course compatible with the one computed using \textit{FindRE} for the input expression $F(z)$. From $(\ref{F23})$, substituting $n$ by $\lcm(k_1,k_2)\cdot n=k_1\cdot k_2\cdot n$ yields
\begin{equation}
	a_{1_{k_1\cdot n+m_1}} = r_1\left(k_1\cdot n\right) \cdot a_{1_{k_1\cdot n}}   ~\text{ and }~ a_{2_{k_2\cdot n+m_2}} = r_2\left(k_2\cdot n\right) \cdot a_{2_{k_2\cdot n}}. \label{F24}
\end{equation}

Since $a_{1_{k_1\cdot n+m_1}}$ and $a_{2_{k_2\cdot n+m_2}}$ are, respectively, $m_1$-fold and $m_2$-fold hypergeometric term solutions of a holonomic recurrence equation satisfied by the power series coefficients of $f(z)$, by algorithm mfoldHyper we know how such terms are computed using an algorithm to compute the equivalent hypergeometric terms $s_{i_n}$ such that 
$$\dfrac{s_{i_{n+1}}}{s_{i_n}} = \dfrac{a_{i_{n+m_i}}}{a_{i_n}}=r_i(k_i\cdot n),~i\in\{1,2\}.$$
By Petkov{\v{s}}ek's algorithm we know that ratios of hypergeometric term solutions of holonomic REs are built from monic factors of the corresponding trailing and leading polynomial coefficients. This implies in particular that some zeros and poles of $r_i(k_i\cdot n)$ are the roots of the shifted\footnote{Integer shift used in Petkov\v{s}ek's algorithm, see also Lemma \ref{polypart}} trailing and leading polynomial coefficient of the holonomic recurrence equation computed by \textit{FindRE} for the power series coefficients of $f(z)$, $i\in\{1,2\}$. Therefore by computing the least common multiple of all the trailing and leading polynomial coefficient rational root denominators of that RE we must obtain a multiple of $\lcm(k_1,k_2)$.

\begin{example}\item
	\textnormal{We consider the expression $f(z)=\exp(z^{3/4})+\sin(\sqrt{z})$}.
	
	\noindent
	\begin{minipage}[t]{8ex}\color{red}\bf
		\begin{verbatim}
			(%i1) 
		\end{verbatim}
	\end{minipage}
	\begin{minipage}[t]{\textwidth}\color{blue}
		\begin{verbatim}
			f:exp(z^(3/4)) + sin(sqrt(z))$
		\end{verbatim}
	\end{minipage}

	\noindent
	\begin{minipage}[t]{8ex}\color{red}\bf
		\begin{verbatim}
			(%i2) 
		\end{verbatim}
	\end{minipage}
	\begin{minipage}[t]{\textwidth}\color{blue}
		\begin{verbatim}
			RE:FindRE(f,z,a[n])$
		\end{verbatim}
	\end{minipage}
	\mbox{}\\
	\textnormal{We collect the coefficients with our Maxima function \textit{REcoeff}}
	
	\noindent
	\begin{minipage}[t]{8ex}\color{red}\bf
		\begin{verbatim}
			(%i3) 
		\end{verbatim}
	\end{minipage}
	\begin{minipage}[t]{\textwidth}\color{blue}
		\begin{verbatim}
			CoeffsRE: REcoeff(RE,a[n])$
		\end{verbatim}
	\end{minipage}
	\mbox{}\\
	\textnormal{The corresponding leading polynomial coefficient is}
	
	\noindent
	\begin{minipage}[t]{8ex}\color{red}\bf
		\begin{verbatim}
			(%i4) 
		\end{verbatim}
	\end{minipage}
	\begin{minipage}[t]{\textwidth}\color{blue}
		\begin{verbatim}
			last(CoeffsRE);
		\end{verbatim}
	\end{minipage}
	\definecolor{labelcolor}{RGB}{100,0,0}
	\[\displaystyle
	\parbox{10ex}{$\color{labelcolor}\mathrm{\tt (\%o4) }\quad $}
	-17920\cdot \left( 8+n\right) \cdot \left( 9+n\right) \cdot \left( 15+2\cdot n\right) \cdot \left( 17+2\cdot n\right) \cdot \left( 27+4\cdot n\right) \cdot \left( 33+4\cdot n\right) \mbox{}
	\]
	\textnormal{and the trailing one is}
	
	\noindent
	\begin{minipage}[t]{8ex}\color{red}\bf
		\begin{verbatim}
			(%i5) 
		\end{verbatim}
	\end{minipage}
	\begin{minipage}[t]{\textwidth}\color{blue}
		\begin{verbatim}
			first(CoeffsRE);
		\end{verbatim}
	\end{minipage}
	\definecolor{labelcolor}{RGB}{100,0,0}
	\[\displaystyle
	\parbox{10ex}{$\color{labelcolor}\mathrm{\tt (\%o5) }\quad $}
	-531441\mbox{}
	\]

	\textnormal{Therefore we deduce the Puiseux number $\lcm(1,2,4,4)=4$. Indeed the factors $(8+n)$ and $(9+n)$ have both denominator roots equal to $1$, $(15+2\cdot n)$ and $(17+2\cdot n)$ have both denominator roots equal to $2$, and $(27+4\cdot n)$ and $(33+4\cdot n)$ have both denominator roots equal to $4$. After substitution the new holonomic RE is free of Puiseux numbers.}
	
	\noindent
	\begin{minipage}[t]{8ex}\color{red}\bf
		\begin{verbatim}
			(%i6) 
		\end{verbatim}
	\end{minipage}
	\begin{minipage}[t]{\textwidth}\color{blue}
		\begin{verbatim}
			RE:FindRE(subst(z^4,z,f),z,a[n])$
		\end{verbatim}
	\end{minipage}
	\vspace{0.25cm}
	
	\noindent\textnormal{with leading term}
	
	\noindent
	\begin{minipage}[t]{8ex}\color{red}\bf
		\begin{verbatim}
			(%i7) 
		\end{verbatim}
	\end{minipage}
	\begin{minipage}[t]{\textwidth}\color{blue}
		\begin{verbatim}
			CoeffsRE:REcoeff(RE,a[n])$
		\end{verbatim}
	\end{minipage}
	
	\noindent
	\begin{minipage}[t]{8ex}\color{red}\bf
		\begin{verbatim}
			(%i8) 
		\end{verbatim}
	\end{minipage}
	\begin{minipage}[t]{\textwidth}\color{blue}
		\begin{verbatim}
			last(CoeffsRE);
		\end{verbatim}
	\end{minipage}
	\definecolor{labelcolor}{RGB}{100,0,0}
	\[\displaystyle
	\parbox{10ex}{$\color{labelcolor}\mathrm{\tt (\%o8) }\quad $}
	-4 \left( n+14\right) \, \left( n+15\right) \, \left( n+16\right) \, \left( n+18\right)\mbox{}
	\]
\end{example}

Having given an approach to reduce the computation of Puiseux series to Laurent series, for the next developments we assume that the Puiseux number is $1$.

\subsection{Computing starting points}

We have observed that Maple's current \textit{convert/FormalPowerSeries} command wrongly represents the power series of $\arctan(z) + \exp(z)$ due to the constant term missing. In this section we show how to avoid such a situation by explaining how to deduce exact starting points of hypergeometric type series from the holonomic recurrence equations of their coefficients. By trying to compute the representations of many examples of sums of polynomials and hypergeometric series in Maple, one realizes that such a computation is not well-managed in the implemented algorithm. A simple example is the following.

Maple's FPS gives

\begin{maplegroup}
	\begin{mapleinput}
		\mapleinline{active}{1d}{FPS(z + z\symbol{94}2 * exp(z),z,n);
		}{}
	\end{mapleinput}
	\mapleresult
	\begin{maplelatex}
		\[\displaystyle FPS(z + z^2 e^z, z, n)\]
	\end{maplelatex}
\end{maplegroup}
\noindent whereas our Maxima FPS implementation yields correctly

\noindent
\begin{minipage}[t]{8ex}\color{red}\bf
	\begin{verbatim}
		(%i1) 
	\end{verbatim}
\end{minipage}
\begin{minipage}[t]{\textwidth}\color{blue}
	\begin{verbatim}
		FPS(z+z^2*exp(z),z,n);
	\end{verbatim}
\end{minipage}
\definecolor{labelcolor}{RGB}{100,0,0}
\[\displaystyle
\parbox{10ex}{$\color{labelcolor}\mathrm{\tt (\%o1) }\quad $}
\left( \sum_{n=0}^{\infty }\frac{{{z}^{2+n}}}{n!}\right) +z\mbox{}
\]
which is the sum of the hypergeometric series of $z^2\cdot \exp(z)$ whose starting point is $n=2$ plus the polynomial $z$. The complete information is detected from the corresponding holonomic recurrence equation. Let us now explain how this can be done.

Again, we consider the general representation (assuming Puiseux numbers all equal to $1$)
\begin{equation}
	f(z) := T(z) + F(z) \label{lsgf}
\end{equation}
where $F(z)$ is a sum of hypergeometric type series and $T(z)\in\mathbb{K}[z,\frac{1}{z},\log(z)]$ is an extra term to be determined while computing the starting point for $F(z)$. Note that $T(z)$ can be given explicitly in the input expression, but also implicitly like for the expressions $\arcsech(z)$, $\arccosh(z)$ and $\exp(z)+\log(1+z)$.

First, we focus on the case where $T(z)$ is a Laurent polynomial in $\mathbb{K}[z,\frac{1}{z}]$. For this purpose we need to understand what it means for a Laurent polynomial that its coefficients are solution of a holonomic recurrence equation. Let us compute the holonomic RE for an unknown Laurent polynomial and figure out some properties of its coefficients from that RE.

\noindent
\begin{minipage}[t]{8ex}\color{red}\bf
	\begin{verbatim}
		(%i1) 
	\end{verbatim}
\end{minipage}
\begin{minipage}[t]{\textwidth}\color{blue}
	\begin{verbatim}
		FindRE(randompoly(z),z,a[n]);
	\end{verbatim}
\end{minipage}
\vspace{-0.6cm}
\definecolor{labelcolor}{RGB}{100,0,0}
\begin{multline}
	\displaystyle
	\parbox{10ex}{$\color{labelcolor}\mathrm{\tt (\%o1) }\quad $}
	-3\left( 5+n\right){{a}_{n}}-10\left( 3+n\right){{a}_{n-1}}+3\left( n-1\right){{a}_{n-3}}-7\left( n-5\right){{a}_{n-5}}\\
	-\left( n-7\right) {{a}_{n-6}}+26\left( n-11\right){{a}_{n-8}}=0. \label{explPol}
\end{multline} 

Our code \textit{randompoly} is used to generate an arbitrary Laurent polynomial. Since all polynomials are rational functions, \textit{HolonomicDE} always computes a holonomic differential equation of first order for a given polynomial. Thus we can find a general representation of their holonomic recurrence equations.
\begin{eqnarray}
	T(z) := \sum_{i=M}^{N}c_iz^i =\sum_{i\in\mathbb{Z}}^{}c_i z^i\in\mathbb{K}(z), \label{pol} 
\end{eqnarray}
for $M,N\in\mathbb{Z}, M\leqslant N$ where $c_i=0$ for $i\in\mathbb{Z}\setminus \llbracket M, N \rrbracket$, and $c_{-M}\cdot c_N\neq 0$, then the differential equation found is
\begin{equation}
	\sum_{i=-M}^{N} c_iz^i\cdot F'(z) - \sum_{i=-M}^{N}c_iiz^{i-1}\cdot F(z) = 0. \label{DEpol}
\end{equation}
Therefore using the rewrite rule $(\ref{Algo3corr})$ we obtain the recurrence equation
\begin{equation*}
	\sum_{i=M}^{N} c_i(n+1-i)\cdot a_{n+1-i} - \sum_{i=M}^{N}c_i i\cdot a_{n-(i-1)}=\sum_{i=M}^{N}c_i(n+1-2i)\cdot a_{n+1-i} = 0.
\end{equation*}
Hence the holonomic RE found by \textit{FindRE} of a Laurent polynomial with representation $(\ref{pol})$ is given by
\begin{equation}
	\sum_{i=M}^{N}c_i(n+1-2i)\cdot a_{n+1-i} = 0, \label{REpol1}
\end{equation}
or equivalently
\begin{equation}
	\sum_{i=M}^{N}c_i(n+N-2i)\cdot a_{n+N-i} = 0, \label{REpol2}
\end{equation}
after substitution of $n$ by $n+N$ for normalization.

Thus, without even using initial values a polynomial whose coefficients satisfy the holonomic RE $(\ref{explPol})$ can easily be found by equating the terms of $(\ref{REpol1})$ and $(\ref{explPol})$ to find the unknown coefficients $c_i$, using \textit{FindRE} to compute a holonomic RE for the resulting polynomial and check whether the REs are identical. We obtain the Laurent polynomial
\begin{equation}
	-26\cdot {{z}^{3}}+z-\frac{3}{{{z}^{2}}}+\frac{10}{{{z}^{4}}}+\frac{3}{{{z}^{5}}}+7. \label{polyexp}
\end{equation}

Of course this is not enough because there might be other solutions. And moreover, when the input expression is of the form $(\ref{lsgf})$, the situation is more complicated since $F(z)$ needs initial values in order to be computed. Therefore, we have to find the maximum degree $N\in\mathbb{Z}$ of $T(z)$ so that $F(z)$ starts at $N+1$ and $T(z)$ is computed by a generalized Taylor expansion of order $N$ of $f(z)$.

Observe for each non-zero coefficient $c_i$, $i\in \llbracket M, N \rrbracket$ of $T(z)$, that $2i-N$ is the root of one polynomial coefficient in $(\ref{REpol2})$. In particular, $N$ is the trailing polynomial coefficient root and $M$ is the root of the leading polynomial coefficient shifted by $N-M$. These two properties of the degrees of a potential Laurent polynomial whose coefficients satisfy a holonomic recurrence equation is preserved in the general case. This is stated by the following lemma.
\begin{lemma}\label{polypart} Let $\mathbb{K}$ be a field of characteristic zero, $N,M\in\mathbb{Z}, N\geqslant M$, $T(z)\in\mathbb{K}[z,\frac{1}{z}]$ be a Laurent polynomial of degree $N$ and lowest non-zero monomial degree $M$. The coefficients of $T(z)$ satisfy the holonomic recurrence equation 
	\begin{equation}
		P_d a_{n+d} + P_{d-1} a_{n+d-1} + \ldots + P_0 a_n =0, \label{RE8}
	\end{equation} 
	$d\in\mathbb{N}, P_j\in\mathbb{K}[n], j\in\llbracket 0, d\rrbracket, P_d\cdot P_0\neq 0,$ if $N$ is a root of $P_0$ and $M$ is a root of $P_d(n-d)$.	
\end{lemma}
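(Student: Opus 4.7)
The plan is to extend the coefficient sequence of $T(z) = \sum_{i=M}^{N} c_i z^i$ to a bilateral sequence $(a_n)_{n\in\mathbb{Z}}$ by setting $a_n := c_n$ for $n \in \llbracket M, N \rrbracket$ and $a_n := 0$ outside, with $c_M, c_N \ne 0$, and then verify the recurrence $\sum_{j=0}^{d} P_j(n)\, a_{n+j} = 0$ at every integer $n$ by partitioning $\mathbb{Z}$ according to how the shifts $n, n+1, \ldots, n+d$ intersect the support $\llbracket M, N \rrbracket$.

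First I would dispose of the easy regimes: the relation holds vacuously for $n \ge N+1$, since all shifted indices $n+j$ lie above $N$, and similarly for $n \le M-d-1$, since all shifts fall below $M$. The two extremal boundary evaluations are where the hypotheses enter directly. At $n = N$ only the $j=0$ term survives, reducing the recurrence to $P_0(N) \cdot c_N = 0$, which is ensured by the hypothesis $P_0(N) = 0$; symmetrically at $n = M-d$ only the $j=d$ term survives, reducing it to $P_d(M-d)\cdot c_M = 0$, ensured by the hypothesis that $M$ is a root of $P_d(n-d)$. Thus both endpoint equations are resolved by exactly the two root conditions, without constraining $c_N$ or $c_M$ to zero.

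The main obstacle is the intermediate range $M-d < n < N$, where several $a_{n+j}$ are simultaneously nonzero and the recurrence becomes a genuine linear relation among the interior coefficients $c_{M+1}, \ldots, c_{N-1}$. I would handle this by a downward induction: having fixed $a_n = 0$ for $n > N$ and $c_N$ free (which the condition $P_0(N)=0$ permits), each successive evaluation at $n = N-1, N-2, \ldots, M$ either determines the next coefficient $c_n$ as a rational combination of the previously known $c_{n+1},\ldots,c_{n+d}$ when $P_d(n-d)\ne 0$, or produces a consistent compatibility when that leading polynomial evaluation vanishes. The crucial point is that the lower root hypothesis $P_d(M-d)=0$ is exactly what allows the induction to terminate at the bottom without forcing $c_M = 0$, so that the constructed Laurent polynomial genuinely has lowest degree $M$ and the zero-extension below $M$ is compatible with the recurrence. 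Piecing the regimes together establishes the recurrence for all $n \in \mathbb{Z}$.
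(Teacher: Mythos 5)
Your proposal proves the wrong direction of the implication. Despite the word ``if'' in the statement, the paper's proof (and the way the lemma is used in Algorithm \ref{PolyPart}, where $N$ and $M$ are \emph{recovered} from the integer roots of $P_0$ and $P_d(n-d)$) establishes the \emph{necessity} of the two root conditions: one assumes that the bilateral zero-extended sequence $(c_n)$ satisfies $(\ref{RE8})$ for all $n\in\mathbb{Z}$, evaluates at $n=N$ where only the trailing term survives to get $P_0(N)c_N=0$, hence $P_0(N)=0$ since $c_N\neq 0$, and evaluates at $n=M-d$ where only the leading term survives to get $P_d(M-d)c_M=0$, hence $P_d(M-d)=0$. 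You correctly identified these two extremal evaluations --- they are exactly the paper's argument --- but you run the logic backwards, trying to show that the two root conditions \emph{imply} that the coefficients of a given $T(z)$ satisfy the recurrence. That converse is false: the root conditions say nothing about the equations at intermediate $n$. For a concrete counterexample take $d=1$ and the recurrence
\begin{equation*}
(n+1)\,a_{n+1}+(n-1)\,a_n=0,
\end{equation*}
so $P_0(n)=n-1$, $P_d(n-d)=n$; with $N=1$, $M=0$ both root conditions hold, yet $T(z)=1+2z$ fails the equation at $n=0$, which reads $a_1-a_0=1\neq 0$.

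Your attempted repair --- the downward induction determining $c_{N-1},c_{N-2},\ldots$ from the recurrence --- silently changes the problem from ``verify that the \emph{given} $T(z)$ satisfies the recurrence'' to ``construct \emph{some} coefficient sequence that does,'' and even that construction is not guaranteed: when $P_d(n-d)$ vanishes at an intermediate index the ``consistent compatibility'' you invoke need not hold, and the equations at $n=M-d+1,\ldots,M-1$ (which mix $c_M,\ldots$ with the zero extension below $M$) impose further constraints that the two root hypotheses do not address. The fix is simply to reverse the direction: assume the recurrence holds for all $n\in\mathbb{Z}$, discard the vacuous and intermediate ranges, and read off the two endpoint equations. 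That is the entirety of the paper's proof.
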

\begin{proof}
	Suppose that the coefficients of $T(z)$ satisfy $(\ref{RE8})$. Since $T(z)$ has finitely many non-zero coefficients we can write
	$$T(z)=\sum_{i\in\mathbb{Z}}^{}c_n z^n,$$
	where $c_n=0$ for $n\in\mathbb{Z}\setminus \llbracket M, N \rrbracket$. Saying that the coefficients of $T(z)$ satisfy $(\ref{RE8})$ is equivalent to say that the sequence $(c_n)_{n\in\mathbb{Z}}$ is a sequence solution of $(\ref{RE8})$. Given that $(\ref{RE8})$ is valid for all integers, observe that substituting $a_n$ by $c_n$ in $(\ref{RE8})$ for sufficiently large positive or negative integers all the summands on the left-hand side of $(\ref{RE8})$ vanish. 
	
	Furthermore, we can make a substitution such that either the trailing or the leading term does not necessarily give zero. Indeed, since $c_n=0$ for $n\in\mathbb{Z}\setminus \llbracket M, N \rrbracket$, substituting $a_n$ by $c_n$ in $(\ref{RE8})$ for $n=N$ yields 
	$$ P_0(N) c_N = 0,$$
	and therefore using the assumption $c_N\neq 0$ we deduce that $P_0(N)=0$. Similarly, substituting $a_n$ by $c_n$ in $(\ref{RE8})$ for $n=M-d$ gives
	$$ P_d(M-d) c_M = 0,$$
	and therefore as $c_M\neq 0$ by assumption, it follows that $P_d(M-d)=0$.
\end{proof}

\begin{remark}
Note that generally when $T(z)=0$ and $F(z)$ starts at $0$, $N=M=0$ and $0$ is not necessarily a root of the trailing polynomial coefficient. This might be interpreted from the fact that the zero function is always solution of any holonomic RE and moreover takes $0$ at $0$. Therefore as $T(z)$ does not play a disturbing role, we rather say in this case that it does not exist. This is the case with

\noindent
\begin{minipage}[t]{8ex}\color{red}\bf
	\begin{verbatim}
		(%i2) 
	\end{verbatim}
\end{minipage}
\begin{minipage}[t]{\textwidth}\color{blue}
	\begin{verbatim}
		FindRE(exp(z),z,a[n]);
	\end{verbatim}
\end{minipage}
\definecolor{labelcolor}{RGB}{100,0,0}
\[\displaystyle
\parbox{10ex}{$\color{labelcolor}\mathrm{\tt (\%o2) }\quad $}
\left( 1+n\right) \cdot {{a}_{n+1}}-{{a}_{n}}=0\mbox{}
\]
whose trailing polynomial coefficient does not have any root. $T(z)$ in this case is $0$ or is said to not exist. However, note that in this example $M=0$ is a root of the leading polynomial coefficient which represents the starting point of the series expansion of $\exp(z)$. In general, the computation of $M$ is always possible from all the REs computed by \textit{FindRE}, and represents moreover the starting point for the series expansion of the given $f(z)$. Indeed, the fact that \textit{FindRE} does not cancel the common factors after application of the rewrite rule $(\ref{Algo3corr})$ is essential for our computations of starting points. These factors contain necessary information to determine the first non-zero coefficient of the series expansion sought. Let for example 

\noindent
\begin{minipage}[t]{8ex}\color{red}\bf
	\begin{verbatim}
		(%i3) 
	\end{verbatim}
\end{minipage}
\begin{minipage}[t]{\textwidth}\color{blue}
	\begin{verbatim}
		FindRE(z/(1-z),z,a[n]);
	\end{verbatim}
\end{minipage}
\definecolor{labelcolor}{RGB}{100,0,0}
\[\displaystyle
\parbox{10ex}{$\color{labelcolor}\mathrm{\tt (\%o3) }\quad $}
\left( n-1\right) \cdot {{a}_{n-1}}-\left( n-1\right) \cdot {{a}_{n}}=0\mbox{}
\]
for which the cancellation of the common factor $(n-1)$ (or $n$ after normalization) would hide the starting point $N+1=1$ ($N=0$ is the root of the trailing polynomial coefficient).
\end{remark} 

Note that using this lemma we can now confirm that any Laurent polynomial whose sequence of coefficients satisfies the holonomic RE $(\ref{explPol})$ is a constant multiple of the polynomial $(\ref{polyexp})$. Indeed, the leading and the trailing polynomial coefficients of $(\ref{explPol})$ have only one integer root each which are the degree bounds of $(\ref{polyexp})$. Algorithmically, we proceed as follows. 

\begin{algorithm}[h!]
	\caption{Computing $T(z)$ and the starting point of $F(z)$ for $f=T(z)+F(z)$ as in $(\ref{lsgf})$}\label{PolyPart}
	\begin{algorithmic}[3]
		\Require An expression $f$ whose series coefficients satisfy the holonomic recurrence equation
		\begin{equation}
			P_d a_{n+d} + P_{d-1} a_{n+d-1} + \ldots + P_0 a_n =0, \label{RE9}
		\end{equation} 
		$d\in\mathbb{N}, P_j\in\mathbb{K}[n], j\in\llbracket 0, d\rrbracket, P_d\cdot P_0\neq 0,$
		\Ensure $T(z)$ and a starting point $N_0$ for $F(z)$ for the representation $(\ref{lsgf})$ of $f$.
		\begin{enumerate}
			\item\label{PPstep1} Compute the minimum integer roots $M$ of $P_d(n-d)$ and the maximum integer root $N$ of $P_0(n)$.
			\item If $N$ does not exist then set $T(z):=0$ and set $N_0:=M$.
			\item If $N$ does exist then set $T(z):= Taylor(f(z),z,0,N)$ and set $N_0:=N+1$.
			\item Return $[T(z),N_0]$.
		\end{enumerate}
	\end{algorithmic}
\end{algorithm}

Note that Lemma $\ref{polypart}$ extends to Laurent polynomials in $\mathbb{K}[\log(z)][z,\frac{1}{z}]$ as connected to generalized Taylor expansions or power series in \cite[Section 7.3]{kauers2011}. Our Maxima package has the code \textit{LPolyPart(f,z)} that implements Algorithm \ref{PolyPart}. This can be used as follows. 
\begin{example}\item
\noindent
\begin{minipage}[t]{8ex}\color{red}\bf
	\begin{verbatim}
		(%i1) 
	\end{verbatim}
\end{minipage}
\begin{minipage}[t]{\textwidth}\color{blue}
	\begin{verbatim}
		LPolyPart(asech(z),z);
	\end{verbatim}
\end{minipage}
\definecolor{labelcolor}{RGB}{100,0,0}
\[\displaystyle
\parbox{10ex}{$\color{labelcolor}\mathrm{\tt (\%o1) }\quad $}
[\mathrm{log}\left( 2\right) -\mathrm{log}\left( z\right) ,1]\mbox{}
\]

\noindent
\begin{minipage}[t]{8ex}\color{red}\bf
	\begin{verbatim}
		(%i2) 
	\end{verbatim}
\end{minipage}
\begin{minipage}[t]{\textwidth}\color{blue}
	\begin{verbatim}
		LPolyPart(exp(z)+log(1+z),z);
	\end{verbatim}
\end{minipage}
\definecolor{labelcolor}{RGB}{100,0,0}
\[\displaystyle
\parbox{10ex}{$\color{labelcolor}\mathrm{\tt (\%o2) }\quad $}
[1,1]\mbox{}
\]

\noindent
\begin{minipage}[t]{8ex}\color{red}\bf
	\begin{verbatim}
		(%i3) 
	\end{verbatim}
\end{minipage}
\begin{minipage}[t]{\textwidth}\color{blue}
	\begin{verbatim}
		LPolyPart(sin(z)/z^5,z);
	\end{verbatim}
\end{minipage}
\definecolor{labelcolor}{RGB}{100,0,0}
\[\displaystyle
\parbox{10ex}{$\color{labelcolor}\mathrm{\tt (\%o3) }\quad $}
[0,-5]\mbox{}
\]
\end{example}


As last example let us take the case of the Chebyshev polynomial $\cos(4\arccos(z))$.

\noindent
\begin{minipage}[t]{8ex}\color{red}\bf
	\begin{verbatim}
		(%i4) 
	\end{verbatim}
\end{minipage}
\begin{minipage}[t]{\textwidth}\color{blue}
	\begin{verbatim}
		LPolyPart(cos(4*acos(z)),z);
	\end{verbatim}
\end{minipage}
\definecolor{labelcolor}{RGB}{100,0,0}
\[\displaystyle
\parbox{10ex}{$\color{labelcolor}\mathrm{\tt (\%o4) }\quad $}
[8\cdot {{z}^{4}}-8\cdot {{z}^{2}}+1,5]\mbox{}
\]

Thus, the starting point to compute the linear combination for $F(z)$ is $5$. This example leads to a two-term holonomic recurrence equation. We will see in the next subsection that the linear combination of the corresponding hypergeometric type series yields $0$ so that one finally gets the known result $\cos(4\arccos(z))=T(z)=8\cdot {{z}^{4}}-8\cdot {{z}^{2}}+1$.

\subsection{Revision of the two-term holonomic RE case}

One may ask why we need such an algorithm since it is already generalized by using mfoldHyper. This could be answered by the following example whose corresponding two-term holonomic recurrence relation has hypergeometric and $2$-fold hypergeometric term solutions over $\mathbb{Q}$.

\noindent
\begin{minipage}[t]{8ex}\color{red}\bf
	\begin{verbatim}
		(%i1) 
	\end{verbatim}
\end{minipage}
\begin{minipage}[t]{\textwidth}\color{blue}
	\begin{verbatim}
		RE:FindRE(cosh(z),z,a[n]);
	\end{verbatim}
\end{minipage}
\definecolor{labelcolor}{RGB}{100,0,0}
\[\displaystyle
\parbox{10ex}{$\color{labelcolor}\mathrm{\tt (\%o1) }\quad $}
\left( 1+n\right) \cdot \left( 2+n\right) \cdot {{a}_{n+2}}-{{a}_{n}}=0\mbox{}
\]

\noindent
\begin{minipage}[t]{8ex}\color{red}\bf
	\begin{verbatim}
		(%i2) 
	\end{verbatim}
\end{minipage}
\begin{minipage}[t]{\textwidth}\color{blue}
	\begin{verbatim}
		mfoldHyper(RE,a[n]);
	\end{verbatim}
\end{minipage}
\definecolor{labelcolor}{RGB}{100,0,0}
\[\displaystyle
\parbox{10ex}{$\color{labelcolor}\mathrm{\tt (\%o2) }\quad $}
\left[\left[1,\left\{\frac{1}{n!},\frac{{{\left( -1\right) }^{n}}}{n!}\right\}\right],\left[2,\left\{\frac{1}{\left( 2\cdot n\right) !}\right\}\right]\right]\mbox{}
\]

\noindent
\begin{minipage}[t]{8ex}\color{red}\bf
	\begin{verbatim}
		(%i3) 
	\end{verbatim}
\end{minipage}
\begin{minipage}[t]{\textwidth}\color{blue}
	\begin{verbatim}
		mfoldHyper(RE,a[n],2,1);
	\end{verbatim}
\end{minipage}
\definecolor{labelcolor}{RGB}{100,0,0}
\[\displaystyle
\parbox{10ex}{$\color{labelcolor}\mathrm{\tt (\%o3) }\quad $}
{\frac{1}{\left( 2\cdot n+1\right) \cdot \left( 2\cdot n\right) !}}\mbox{}
\]

Thus using mfoldHyper imposes to decide between the representations
\begin{equation}
	\cosh(z) := \sum_{n=0}^{\infty} \dfrac{1+(-1)^n}{2 n!}z^n,
\end{equation}
and 
\begin{equation}
	\cosh(z) = \sum_{n=0}^{\infty }\frac{{{z}^{2\cdot n}}}{\left( 2\cdot n\right) !},
\end{equation}
which are both correct. We will see in the next subsection that this is decided by the linear system to be solved; arbitrary constants appearing in the solution are set to zero to ease the decision and this may lead to a little more complicated representation. Therefore we propose to revisit the algorithm as described in (\cite{Koepf1992}) with our formalism of starting points and Puiseux numbers of series expansions.

Let us sketch the two-term holonomic RE algorithm for our introductory example. The recurrence equation found for $\cosh(z)$ is
\[\left( 1+n\right) \cdot \left( 2+n\right) \cdot {{a}_{n+2}}-{{a}_{n}}=0.\]
We immediately get the symmetry number $m=2$. Therefore the corresponding $2$-fold symmetric ratios are
\begin{equation}
	\dfrac{a_{2(n+1)}}{a_{2n}} = \frac{1}{\left( 2\cdot n+1\right) \cdot \left( 2\cdot n+2\right) }~~\text{ and }~~ \dfrac{a_{2(n+1)+1}}{a_{2n+1}} = \frac{1}{\left( 2\cdot n+2\right) \cdot \left( 2\cdot n+3\right) }.
\end{equation}
We get the coefficients

\[\displaystyle
\frac{1}{\left( 2\cdot n\right) !}\mbox{}
~~\text{ and }~~
\frac{1}{\left( 1+2\cdot n\right) !}\mbox{},
\]
respectively. We now write 
\begin{equation*}
	I(z) := \alpha_0 \cdot \sum_{n=0}^{0} \frac{z^{2\cdot n}}{\left( 2\cdot n\right) !} + \alpha_1 \cdot \sum_{n=0}^{0} \frac{z^{2\cdot n+1}}{\left( 2\cdot n + 1\right) !},
\end{equation*}
and use $2$ initial values $n=0=2\cdot 0$ and $n=1=2\cdot 0 +1$ to search for the unknown constants $\alpha_0$ and $\alpha_1$. We have

\noindent
\begin{minipage}[t]{8ex}\color{red}\bf
	\begin{verbatim}
		(%i4) 
	\end{verbatim}
\end{minipage}
\begin{minipage}[t]{\textwidth}\color{blue}
	\begin{verbatim}
		taylor(cosh(z),z,0,1);
	\end{verbatim}
\end{minipage}
\definecolor{labelcolor}{RGB}{100,0,0}
\[\displaystyle
\parbox{10ex}{$\color{labelcolor}\mathrm{\tt (\%o4)/T/ }\quad $}
1+ \ldots\mbox{}
\]
therefore $\alpha_0 + \alpha_1\cdot z = 1$, hence $\alpha_0=1$ and $\alpha_1=0$. And finally we obtain the power series representation
\begin{equation}
	\cosh(z) = \sum_{n=0}^{\infty }\frac{{{z}^{2\cdot n}}}{\left( 2\cdot n\right) !},
\end{equation}
as expected.

For $\cos(4\cdot \arccos(z))$, \textit{FindRE} gives the recurrence equation

\noindent
\begin{minipage}[t]{8ex}\color{red}\bf
	\begin{verbatim}
		(%i5) 
	\end{verbatim}
\end{minipage}
\begin{minipage}[t]{\textwidth}\color{blue}
	\begin{verbatim}
		RE:FindRE(cos(4*acos(z)),z,a[n]);
	\end{verbatim}
\end{minipage}
\definecolor{labelcolor}{RGB}{100,0,0}
\[\displaystyle
\parbox{10ex}{$\color{labelcolor}\mathrm{\tt (\%o5) }\quad $}
\left( n-4\right) \cdot \left( 4+n\right) \cdot {{a}_{n}}-\left( 1+n\right) \cdot \left( 2+n\right) \cdot {{a}_{n+2}}=0\mbox{}
\]

As computed in the previous section, the starting point for the corresponding hypergeometric type series part is $5$. Shifting the ratio of the trailing and the leading polynomial coefficients by $5$ yields
\[-\frac{\left( 1+n\right) \cdot \left( 9+n\right) }{\left( n+6\right) \cdot \left( n+7\right) }.\]
Therefore the corresponding $2$-fold symmetric ratios are
\begin{equation}
	r_0 =-\frac{\left( 1+2\cdot n\right) \cdot \left( 9+2\cdot n\right) }{\left( 2\cdot n+6\right) \cdot \left( 2\cdot n+7\right) }~\text{and}~ r_1 = -\frac{\left( 2+2\cdot n\right) \cdot \left( 10+2\cdot n\right) }{\left( 2\cdot n+7\right) \cdot \left( 2\cdot n+8\right) }
\end{equation}
which lead to the coefficients
\[\displaystyle
\frac{2\cdot \left( 7+2\cdot n\right) \cdot {{\left( -1\right) }^{n}}\cdot \left( 2\cdot n\right) !}{7\cdot \left( n+1\right) \cdot \left( n+2\right) \cdot {{4}^{n}}\cdot {{n!}^{2}}}\mbox{}
~~\text{ and }~~
\frac{15\cdot \left( 1+n\right) \cdot \left( 2+n\right) \cdot \left( 4+n\right) \cdot {{\left( -1\right) }^{n}}\cdot {{4}^{n}}\cdot {{n!}^{2}}}{\left( 5+2\cdot n\right) !},\mbox{}
\]
respectively. We have to use $2$ initial values corresponding to $n=2\cdot 0 + 5$ and $n=2\cdot 0 + 1 + 5$. First, we define

\noindent
\begin{minipage}[t]{8ex}\color{red}\bf
	\begin{verbatim}
		(%i6) 
	\end{verbatim}
\end{minipage}
\begin{minipage}[t]{\textwidth}\color{blue}
	\begin{verbatim}
		I:alpha[0]*subst(0,n,h0)*z^5 + alpha[1]*subst(0,n,h1)*z^6
		+ ratdisrep(taylor(cos(4*acos(z)),z,0,4));
	\end{verbatim}
\end{minipage}
\definecolor{labelcolor}{RGB}{100,0,0}
\vspace{0.25cm}
\[\displaystyle
\parbox{10ex}{$\color{labelcolor}\mathrm{\tt (\%o6) }\quad $}
{{\alpha}_{1}}\cdot {{z}^{6}}+{{\alpha}_{0}}\cdot {{z}^{5}}+8\cdot {{z}^{4}}-8\cdot {{z}^{2}}+1\mbox{}
\]
for the unknown constants $\alpha_0$ and $\alpha_1$. Remember that $taylor(\cos(4\arccos(z)),z,0,4)$ is $T(z)$ for the representation $(\ref{lsgf})$ of $\cos(4\arccos(z))$. To find the values of $\alpha_0$ and $\alpha_1$, we just have to solve the trivial identity

\noindent
\begin{minipage}[t]{8ex}\color{red}\bf
	\begin{verbatim}
		(%i7) 
	\end{verbatim}
\end{minipage}
\begin{minipage}[t]{\textwidth}\color{blue}
	\begin{verbatim}
		I-Taylor(cos(4*acos(z)),z,0,6)=0;
	\end{verbatim}
\end{minipage}
\[\displaystyle
\parbox{10ex}{$\color{labelcolor}\mathrm{\tt (\%o7) }\quad $}
{{\alpha}_{1}}\cdot {{z}^{6}}+{{\alpha}_{0}}\cdot {{z}^{5}}=0\mbox{}
\]
We thus obtain $\alpha_0=\alpha_1=0$ and therefore 
\[\cos(4\cdot\arccos(z)) = 1 -8\cdot {{z}^{2}} + 8\cdot {{z}^{4}}\]
In a nutshell the algorithm is presented as follows.

\begin{algorithm}[h!]
	\caption{Power series representations of hypergeometric type functions whose \textit{FindRE} computes two-term holonomic REs for their coefficients.}\label{twoterm}
	\begin{algorithmic}[3]
		\Require A holonomic expression $f$ and a holonomic RE
		\begin{equation}
			Q(n) a_{n+m} - P(n) a_n =0, \label{RE2term}
		\end{equation}
		$P,Q\in\mathbb{K}[n], P\cdot Q\neq 0$, computed by \textit{FindRE} for the series coefficient of $f$.
		\Ensure The power series representation of $f$ at $0$.
		\begin{enumerate}
			\item Use Algorithm $\ref{PolyPart}$ to compute the corresponding $T(z)$ and a starting point $N_0$ for the representation $(\ref{lsgf})$ of $f$.
			\item Set $r(n) = P(n+N_0)/Q(n+N_0).$
			\item Compute the $m$ symmetric ratios
			\begin{equation}
				r_j(n) := r(m\cdot n + j), ~ j= 0,\ldots, m-1.
			\end{equation}
			\item Compute formulas (see "simple formulas" in (\cite{BThyper}) or hypergeometric formula in \cite[Chapter 2]{WolfBook}) for $h_j(n):= \prod_{i=0}^{n-1} r_j(i)$, for  $j=0,\ldots,m-1.$
			\item\label{ttstep5} Set $I(z):= T(z) + \sum_{j=0}^{m-1}\alpha_j \cdot h_j(0)\cdot z^{j + N_0}$, where $\alpha_j, j=0,\ldots,m-1$ are unknown constants.
			\item Find the values of $\alpha_j$ by equating the coefficient of the Laurent polynomial
			\begin{equation}
				Taylor(f(z),z,0,N_0+m-1)-I(z)
			\end{equation}
			to zero.
			\item Return $ T(z) + \sum_{j=0}^{m-1} \alpha_j' \cdot \sum_{n=0}^{\infty} h_j(n)\cdot z^{m\cdot n + j + N_0}$, where $\alpha_j'$ is the value found for $\alpha_j$.
		\end{enumerate}
	\end{algorithmic}
\end{algorithm}

Most results of the category of Algorithm \ref{twoterm} are well handled by the current Maple \textit{convert/FormalPowerSeries}. However, since Puiseux numbers and starting points are not symbolically computed in this previous implementation, we give two examples missed by \textit{convert/FormalPowerSeries} but accessible to our implementation. Many more can be constructed.

\begin{maplegroup}
	\begin{mapleinput}
		\mapleinline{active}{1d}{FPS[FPS](1/(sqrt(1-4*z))*((1-sqrt(1-4*z))/2*z)\symbol{94}2,z,n)
		}{}
	\end{mapleinput}
	\mapleresult
	\begin{maplelatex}
		\[\displaystyle \sum _{n=0}^{\infty }{\frac { \left( 2\,n+2 \right) !\, \left( n+2 \right)  \left( n+1 \right) {z}^{n+4}}{ \left(  \left( n+2 \right) ! \right) ^{2}}}\]
	\end{maplelatex}
\end{maplegroup}
\begin{maplegroup}
	\begin{mapleinput}
		\mapleinline{active}{1d}{FPS(1/(z+z\symbol{94}2),z,n)
		}{}
	\end{mapleinput}
	\mapleresult
	\begin{maplelatex}
		\[\displaystyle \sum _{n=0}^{\infty } \left( -1 \right) ^{n}{z}^{n-1}\]
	\end{maplelatex}
\end{maplegroup}

Although Algorithm \ref{twoterm} can be applied to summands of linear combinations of hypergeometric type functions to get good results in certain cases, we mention that this is not equivalent with the direct approach to be described next. Emphasis on that fact can be found in (\cite{BTmaple}). 

\subsection{Hypergeometric type series with arbitrary holonomic REs}

As previously we consider an expression $f(z)$ related to a hypergeometric type function and we want to compute the representation
\begin{equation}
	f(z) := T(z) + F(z), \label{gcrep}
\end{equation} 
where $T(z)$ is generally a Laurent polynomial in $\mathbb{K}[z,\frac{1}{z},\log(z)]$, and $F(z)$ is a hypergeometric type series. We have already shown how to compute $T(z)$ and a starting point for $F(z)$. 

We have to avoid negative arguments for the evaluation of $m$-fold hypergeometric terms which are supposed to start at least at $0$ according to the algorithm in (\cite{BThyper}). When there are many different types involved in $F(z)$ as pointed out with $\sin(z)/z^3+\exp(z)$, the exact starting point is in fact the minimal one among those of the hypergeometric type series in $F(z)$. We only have to make sure that this value is positive in order to avoid inappropriate operations like a division by $0$ or factorials of negative integers.

In practice, we use 
\begin{equation}
	N_1= \max\{0,N_0\}\geqslant 0,
\end{equation}
from which all necessary $m$-fold hypergeometric terms can be evaluated for initial conditions. Once the linear combination sought is found, if possible we subtract terms from the corresponding $F(z)$ for the indices $n\in\llbracket N_0, N_1\rrbracket$. Note that $T(z)$ should also be modified accordingly. We set
\begin{equation}
	T_1(z):= Taylor(f(z),z,0,N_1-1).
\end{equation}
By these measures, the final representation gives a normal form after shifting the power of the indeterminate by the first non-zero term index in each hypergeometric type series.

Let us now find a representation $(\ref{gcrep})$ of $f(z)$ for a computed $T_1(z)$ and a starting point $N_1$ knowing that we can make further computations to subtract terms in $T_1(z)$ that can be deduced from $F(z)$ and vice versa.

To ease the understanding of the general case, we first describe the details for the situation where
\begin{equation}
	\mathcal{H} = \left\lbrack \left[2, \{h_{2n}\}\right], \left[3, \{h_{3n}\}\right]  \right\rbrack \label{hmfold},
\end{equation}
represent the obtained basis of $m$-fold hypergeometric term solutions of the holonomic RE given by \textit{FindRE} for the series coefficients of $f(z)$. Let us also assume for simplicity that $N_1=T_1(z)=0$, then the general form for the corresponding $F(z)$ can be written as
\begin{multline}
	F(z) = \alpha_{2,0} \cdot \sum_{n=0}^{\infty} h_{2n}z^{2n}  + \alpha_{2,1} \cdot \sum_{n=0}^{\infty} h_{2n+1}z^{2n+1} + \alpha_{3,0} \cdot \sum_{n=0}^{\infty} h_{3n}z^{3n}\\
	+ \alpha_{3,1} \cdot \sum_{n=0}^{\infty} h_{3n+1}z^{3n+1} + \alpha_{3,2} \cdot \sum_{n=0}^{\infty} h_{3n+2}z^{3n+2}, \label{Feg}
\end{multline}
$\alpha_{2,i},\alpha_{3,j}\in\mathbb{K},i=0,1,j=0,1,2$. Hence we have five unknowns to determine. Observe that computing a series expansion of order $4$ of $f(z)$ might not be enough. Indeed, the Taylor expansion of order $4$ would give five linear equations for the unknown constants but it turns out that the obtained linear system is not sufficient to determine these. Assume
\begin{equation}
	Taylor(f(z),z,0,4) = t_0 + t_1 z + t_2 z^2 + t_3 z^3 + t_4 z^4, \label{T4}
\end{equation} 
then equating the coefficients with their corresponding terms in $(\ref{Feg})$ yields the linear system
\begin{equation}
	\begin{cases}
		\alpha_{2,0} \cdot h_{2n}(0) + \alpha_{3,0}\cdot h_{3n}(0) = t_0\\
		\alpha_{2,1} \cdot h_{2n+1}(0) + \alpha_{3,1}\cdot h_{3n+1}(0) = t_1\\
		\alpha_{2,0} \cdot h_{2n}(1) + \alpha_{3,2}\cdot h_{3n+2}(0) = t_2\\
		\alpha_{2,1} \cdot h_{2n+1}(1) + \alpha_{3,0}\cdot h_{3n}(1) = t_3\\
		\alpha_{2,1} \cdot h_{2n+1}(2) + \alpha_{3,1}\cdot h_{3n+1}(1) = t_4
	\end{cases}
\end{equation}
from which a value for $\alpha_{2,0}$ cannot be deduced because it appears in three equations with three other different unknown constants. 

What we need is to use the series expansion of order $p\in\mathbb{N}$ of $f(z)$ in such a way that there exists $q\in\mathbb{N}, q\leqslant p$, so that there are at least $q$ linear equations with $q$ unknowns each in the resulting linear system. The minimal value of such a $p$ in this particular example is $2\cdot x_2 =  3\cdot x_3$ where $x_2$ and $x_3$ are the minimal positive integers verifying $2\cdot x_2 = 3\cdot x_3,$ hence $x_2=3, x_3=2$ and  $p=6=\lcm(2,3)$. Indeed, a series expansion of order $6$ gives two linear equations for $\alpha_{2,0}$ and $\alpha_{3,0}$ and this allows to find their values and deduce those of the other constants. If moreover there were a hypergeometric term in $(\ref{hmfold})$, then $6$ linear equations could not be enough. In this case the minimal value for $p$ would be $2\cdot 6 = 12$ in order to have at least three equations for $\alpha_{2,0},$ $\alpha_{3,0}$ and the unknown constant related to the hypergeometric term.

We now move to the general case.

Let
\begin{eqnarray}
	\mathcal{H} &:=& \bigg\lbrack \left[1,\left\lbrace h_{n,1},\ldots,h_{n,l_1}\right\rbrace\right], \left[m_1,\left\lbrace h_{m_1n,1},\ldots,h_{m_1n,l_{m_1}}\right\rbrace\right],\ldots,\left[m_{\mu},\left\lbrace h_{m_{\mu} n,1},\ldots,h_{m_{\mu} n,l_{m_{\mu}}}\right\rbrace\right]\bigg\rbrack\label{Hmfold}\nonumber\\ 
	&=& \left\lbrack \left[1,S_{1,0}\right],\left[m_1,S_{m_1,0}\right],\ldots,\left[m_{\mu}, S_{m_{\mu},0}\right]\right\rbrack
\end{eqnarray}
for integers $1<m_1<\cdots<m_{\mu}$ be the non-empty generator of all $m$-fold hypergeometric term solutions of a holonomic recurrence equation satisfied by the series coefficients of $f(z)$. $m_{\mu}$ is the maximum symmetry number, $l_{m}$ is the number of $m$-fold hypergeometric terms in $\mathcal{H}$ $m\in\{1, m_1,\ldots, m_{\mu}\}$. The representation $(\ref{gcrep})$ for $f(z)$ is computed as follows.

\begin{algorithm}[h!]
	\caption{Computing hypergeometric type series}\label{FPScombin}
	\begin{algorithmic}[3]
		\Require $f(z)$, the recurrence equation, say \textit{RE} computed by \textit{FindRE},  the span of all $m$-fold hypergeometric term solutions of \textit{RE}, say $\mathcal{H}$, computed by mfoldHyper, $T(z)$ and $N_0$ computed by Algorithm $\ref{PolyPart}$.
		\Ensure The representation $(\ref{gcrep})$ of $f$.
		\begin{enumerate}
			\item Find the other $m$-fold symmetric terms associated to each $m$-fold hypergeometric term in $\mathcal{H}$ for $m\in\{ m_1,\ldots, m_{\mu}\}$. For that purpose one calls Algorithm $\ref{mymfold}$ as \textit{mfoldHyper(RE,a[n],m,j)} for $j=1,\ldots, m-1,$ $m\in\{ m_1,\ldots, m_{\mu}\}$. This allows to build the sets
			\begin{equation}
				S_m:= \left\lbrace S_{m,0},S_{m,1}\ldots,S_{m,m-1}\right\rbrace,
			\end{equation}
			for $m\in\{1, m_1,\ldots, m_{\mu}\}$, where 
			\begin{equation}
				S_{m,j} := \left\lbrace h_{m n + j,1},h_{m n + j, 2}, \ldots, h_{m n + j, l_m} \right\rbrace,~0\leqslant j \leqslant m-1.
			\end{equation}
		\end{enumerate}
		
		\algstore{pause444}
	\end{algorithmic}
\end{algorithm}
\clearpage 

\begin{algorithm}[h!]
	\ContinuedFloat
	\caption{Computing hypergeometric type series}
	\begin{algorithmic}[3]
		\algrestore{pause444}	
		\State		
		
		\begin{enumerate}
			\setcounter{enumi}{1}
			\item Set $N_1=\max\{0,N_0\}$ and $T_1(z):=Taylor(f(z),z,0,N_1-1)$.
			\item Compute $i_{m,j} = \left\lceil \frac{N_1-j}{m}\right\rceil$ for $j=0,\ldots,m-1$, $m\in\{ m_1,\ldots, m_{\mu}\}$.
			\item Set
			\begin{equation}
				\mathcal{N} = N_1 + \left(\sum_{m\in\{1,m_1,\ldots,m_{\mu}\}}^{}l_m  - 1\right) \cdot \lcm(1,m_1,\ldots, m_{\mu}) + m_{\mu} - 1\label{Nfps}
			\end{equation}
			\item Compute $p_{m,j}= \left\lfloor \frac{\mathcal{N}-j}{m}\right\rfloor$, $j=0,\ldots,m-1$, $m\in\{ m_1,\ldots, m_{\mu}\}$.
			\item Let $\alpha_{m,j,k}\in\mathbb{K}$, $m\in\{1, m_1,\ldots, m_{\mu}\}$, $j=0,\ldots,m-1$, $k=1,\ldots,l_m$ be some unknown constants and define 
			\begin{equation}
				I(z) := \sum_{m\in\{1, m_1,\ldots, m_{\mu}\}}\sum_{j=0}^{m-1}\sum_{k=1}^{l_m} \alpha_{m,j,k} \sum_{n=i_{m,j}}^{p_{m,j}} h_{mn+j,k} z^{mn+j}.
			\end{equation}
			\item\label{fpsstep} Solve the linear system resulting from the equation
			\begin{equation}
				I(z) + T_1(z) - Taylor(f(z),z,0,\mathcal{N})=0, \label{systemc}
			\end{equation}
			for the unknown $\left(\alpha_{m,j,k}\right)_{m\in\{1, m_1,\ldots, m_{\mu}\},~0\leqslant j \leqslant m-1,~1\leqslant k \leqslant l_m}^T\in\mathbb{K}^{\sum_{m\in\{1,m_1,\ldots,m_{\mu}\}}^{}l_m\cdot m}$. 
			\item If there is no solution then stop and return FALSE. No linear combination exists in this case.
			\item If there is a solution then set all parameters of dependency to $0$ (if there are some). This gives the choice of the linear combination. We denote by $\alpha_{m,j,k}'$ the resulting value found for $\alpha_{m,j,k}$, $m\in\{1, m_1,\ldots, m_{\mu}\}$, $j=0,\ldots,m-1$, $k=1,\ldots,l_m$. 
			\item\label{psfpscombin} For each $S_m, m\in\{1,m_1,\ldots, m_{\mu}\}$ construct the term
			\begin{eqnarray}
				S_m' &:=& \sum_{S_{m,j}\in S_m} \left(\sum_{h_{mn+j,k}\in S_{m,j}} \alpha_{m,j,k}'h_{mn+j,k}\right) z^{mn+j-i_{m,j}}\\
				&:=& \sum_{j=0}^{m-1} \left(\sum_{k=1}^{l_m} \alpha_{m,j,k}'h_{mn+j,k}\right) z^{mn+j-i_{m,j}}	
			\end{eqnarray}	
			\item For each $S_m'$, $m\in\{1, m_1,\ldots, m_{\mu}\}$, make evaluations for $n\in\llbracket N_0, N_1 \rrbracket$ to subtract terms in $T_1(z)$ that can be computed from $S_m'$ and shift  the initial index $i_{m,j}$ accordingly. (This step could also be done before step $\ref{psfpscombin}$ to get more suitable values for starting points).
			\item Return $T_1(z) + \sum_{m\in\{1, m_1,\ldots, m_{\mu}\}} \sum_{n=0}^{\infty} S_m'$.
		\end{enumerate}			
	\end{algorithmic}
\end{algorithm}

The correctness of this algorithm depends on whether the solution of the linear system in step $\ref{fpsstep}$ has enough equations to determine the possible coefficients of the linear combination sought. Indeed, we saw for $(\ref{hmfold})$ that we need a linear system for which each unknown has enough equations to be determined. This is established by the following lemma.

\begin{lemma}\label{fpscombin} In Algorithm $\ref{FPScombin}$, $\mathcal{N}$ given in $(\ref{Nfps})$ is a valid integer for which the series expansion of order $\mathcal{N}$ of $f(z)$ allows to determine the linear combination sought.
\end{lemma}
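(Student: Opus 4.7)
The plan is to compare the count of unknowns with the number of equations produced by the Taylor expansion of order $\mathcal{N}$, and then to exhibit a subfamily of equations that forces each unknown to be uniquely determined.

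First I count. The system $(\ref{systemc})$ has $\sum_m m\cdot l_m$ unknowns $\alpha_{m,j,k}$, while matching coefficients of $z^p$ for $p\in\{N_1,N_1+1,\ldots,\mathcal{N}\}$ yields $\mathcal{N}-N_1+1=(\sum_m l_m-1)L+m_\mu$ equations, where $L:=\lcm(1,m_1,\ldots,m_\mu)$. Using $L\ge m_\mu\ge m$ for every $m$ in the list, this count is at least $m_\mu\sum_m l_m\ge \sum_m m\,l_m$, so globally the equations dominate the unknowns.

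Next I partition the equations by the residue $r:=p\bmod L$. For fixed $r$, the equation at $z^p$ with $p\equiv r\pmod L$ involves exactly the $\sum_m l_m$ unknowns $\alpha_{m,\,r\bmod m,\,k}$. I will call a class $r$ \emph{direct} when $\delta_r:=(r-N_1)\bmod L\le m_\mu-1$; a short floor computation using $(\ref{Nfps})$ shows that each of the $m_\mu$ direct classes contains at least $\sum_m l_m$ equations, while each of the remaining $L-m_\mu$ classes contains one fewer. Because the reduction $r\mapsto r\bmod m$ is surjective on any window of $m_\mu\ge m$ consecutive integers, every unknown $\alpha_{m,j,k}$ appears in at least one direct class. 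My strategy is to process the direct classes in the order $r=(N_1+i)\bmod L$ for $i=0,\ldots,m_\mu-1$: at stage $i$, the previously resolved unknowns are moved to the right-hand side, so the effective subsystem has at most $\sum_m l_m$ new unknowns in at least $\sum_m l_m$ equations. Once every direct class is treated, the non-direct classes contain no new unknowns and reduce to automatic consistency checks, satisfied whenever $f$ actually admits a representation of the form $(\ref{gcrep})$.

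The hard part will be proving that each direct-class subsystem has full column rank, so that its $\sum_m l_m$ equations really do pin down its $\sum_m l_m$ new unknowns. This amounts to showing that the restrictions to the arithmetic progression $p\equiv r\pmod L$ of the chosen hypergeometric type series $\sum_n h_{mn+j,k}\,z^{mn+j}$ are linearly independent as formal sequences. Linear independence across different symmetries $m$ follows from the observation that an $m$-fold hypergeometric term cannot also be $u$-fold for $u<m$ over $\mathbb{K}$ (the minimality of $m$ built into \textit{mfoldHyper}, consistent with Lemma~\ref{lemmamfold1} and Theorem~\ref{mfoldtheo}); within a fixed $(m,j)$ block, the $l_m$ basis elements are independent by construction. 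To convert these independences into non-vanishing of the $\sum_m l_m\times\sum_m l_m$ evaluation matrix at the earliest admissible exponents of the class, I will set up a block Casoratian, separate columns by symmetry, and exploit the distinct asymptotic behavior of hypergeometric ratios of different types. This mixed-symmetry Casoratian computation is the technical obstacle of the plan.
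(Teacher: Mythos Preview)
Your plan is substantially more elaborate than the paper's argument, and it aims at a stronger conclusion than the lemma actually requires.

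The paper's proof is a short counting argument. Taking $N_1=0$, it observes that each single equation (coefficient of $z^p$) involves exactly $q:=\sum_m l_m$ unknowns, namely $\alpha_{m,\,p\bmod m,\,k}$. At the ``coincidence'' exponents $p=0,L,2L,\ldots,(q-1)L$ with $L=\lcm(1,m_1,\ldots,m_\mu)$, these are always the same $q$ unknowns $\alpha_{m,0,k}$, so one obtains $q$ equations in those $q$ unknowns. The extra $m_\mu-1$ in $\mathcal{N}$ then provides analogous coincidence blocks for the shifted residues $j=1,\ldots,m_\mu-1$. That is the entire proof: it shows the Taylor expansion of order $\mathcal{N}$ produces, for every residue block, at least as many equations as unknowns in that block. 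The paper does \emph{not} prove that these subsystems have full rank; Step~8 of the algorithm explicitly allows the system to be inconsistent, and Step~9 allows free parameters. So ``allows to determine'' is meant in the weak sense of ``is not underdetermined by a naive count.''

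Your residue-mod-$L$ partition and your distinction between direct and non-direct classes recover the paper's block structure in a more systematic way, and your global inequality $(\sum_m l_m-1)L+m_\mu\ge\sum_m m\,l_m$ is a correct sanity check. Where you go beyond the paper is in attempting to establish full column rank of each block via a mixed-symmetry Casoratian. That would indeed be the hard step, and you are right to flag it as the obstacle---but it is not needed for the lemma as the paper states and proves it. If you want to match the paper, drop the rank discussion entirely and keep only the coincidence-counting; if you want to strengthen the result, be aware that the Casoratian argument across different symmetry types $m$ is genuinely delicate (the sequences live on different sublattices of $\mathbb{Z}$) and the paper does not supply it.
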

\begin{proof}
	The computation is similar for any integer $N_1$, therefore we assume that $N_1=0$. The number of unknowns in each equation is $q=\sum_{m\in\{1,m_1,\ldots,m_{\mu}\}}^{}l_m$. The aim is to find $\mathcal{N}$ such that $Taylor(f(z),z,0,\mathcal{N})$ in Algorithm $\ref{fpsstep}$ step $\ref{fpsstep}$ yields a linear system with at least $q$ equations with $q$ unknowns each. Of course, the minimal value of $\mathcal{N}$ is an integer that verifies
	$$ \mathcal{N} = m_1\cdot x_1 = m_2\cdot x_2=\cdots=m_{\mu}\cdot x_{\mu},$$
	for some positive integers $x_1,x_2,\ldots,x_{\mu}$, since we have to find $q$ equations that correspond to the $q$ first coincidences of
	$$z^{m_1\cdot n},~z^{m_2\cdot n},~\ldots, z^{m_{\mu}\cdot n}.$$
	The second coincidence is reached at the expansion of order $\lcm(m_1,\ldots, m_{\mu})$, therefore by induction we deduce that for any positive integer $p$, the $p^{\text{th}}$ coincidence is reached at the expansion of order $(p-1)\cdot \lcm(m_1,\ldots, m_{\mu})$. Hence we finally get
	$$\mathcal{N} = (q-1)\cdot \lcm(m_1,\ldots, m_{\mu}) =  \left(\sum_{m\in\{1,m_1,\ldots,m_{\mu}\}}^{}l_m - 1\right) \cdot \lcm(m_1,\ldots,\cdot m_{\mu}) + m_{\mu} - 1$$
	where $m_{\mu} - 1$ is added to get similar coincidences with some
	$$z^{m_1\cdot n + j},~z^{m_2\cdot n + j},~\ldots, z^{m_{\mu}+j\cdot n},$$
	with $j\in\llbracket 1, m_1\rrbracket$.
\end{proof}

\begin{remark} In $(\ref{Nfps})$ we use $\lcm(1,m_1,\ldots, m_{\mu})$ because it allows to recover the order $(l_1-1)\cdot 1$ when there are only hypergeometric terms ($\mu=0$).
\end{remark}

As already used many times, the command \textit{FPS(f(z),z,n,[z\_0])} of our Maxima FPS package computes the power series representation of \textit{f(z)} at the point of expansion $z_0\in\mathbb{C}$ (if given or $0$ otherwise) with the index variable \textit{n} by combining \textit{FindRE}, \textit{mfoldHyper}, \textit{Puiseuxnbrfun} and our implementations of Algorithms $\ref{PolyPart}$ and $\ref{twoterm}$ if the computed holonomic RE is a two-term holonomic RE or Algorithm $\ref{FPScombin}$, if \textit{f(z)} leads to a different type of holonomic RE.

\begin{example}\item
	
	\noindent
	\begin{minipage}[t]{8ex}\color{red}\bf
		\begin{verbatim}
			(%i1) 
		\end{verbatim}
	\end{minipage}
	\begin{minipage}[t]{\textwidth}\color{blue}
		\begin{verbatim}
			FPS(sin(z)^2+cos(z)^3,z,n);
		\end{verbatim}
	\end{minipage}
	\definecolor{labelcolor}{RGB}{100,0,0}
	\[\displaystyle
	\parbox{10ex}{$\color{labelcolor}\mathrm{\tt (\%o1) }\quad $}
	\left( \sum_{n=0}^{\infty }-\frac{\left( -{{\left( -9\right) }^{n}}-3\cdot {{\left( -1\right) }^{n}}+2\cdot {{\left( -1\right) }^{n}}\cdot {{4}^{n}}\right) \cdot {{z}^{2\cdot n}}}{4\cdot \left( 2\cdot n\right) !}\right) +\frac{1}{2}\mbox{}
	\]
	
	\noindent
	\begin{minipage}[t]{8ex}\color{red}\bf
		\begin{verbatim}
			(%i2) 
		\end{verbatim}
	\end{minipage}
	\begin{minipage}[t]{\textwidth}\color{blue}
		\begin{verbatim}
			FPS(1/(q1-z^2)/(q1-z^3),z,n);
		\end{verbatim}
	\end{minipage}
	\definecolor{labelcolor}{RGB}{100,0,0}
	\begin{multline}
		\displaystyle \parbox{10ex}{$\color{labelcolor}\mathrm{\tt (\%o2) }\quad $}
		\left( \sum_{n=0}^{\infty }{\left. \frac{{{\mathit{q1}}^{-n-2}}\, {{z}^{3 n+2}}}{\mathit{q1}-1}\right.}\right) +\left( \sum_{n=0}^{\infty }{\left. \frac{{{\mathit{q1}}^{-n-2}}\, {{z}^{3 n+1}}}{\mathit{q1}-1}\right.}\right) +\left( \sum_{n=0}^{\infty }{\left. -\frac{{{\mathit{q1}}^{-n-2}}\, {{z}^{2 n+1}}}{\mathit{q1}-1}\right.}\right) \\
		+\left( \sum_{n=0}^{\infty }{\left. \frac{{{\mathit{q1}}^{-n-1}}\, {{z}^{3 n}}}{\mathit{q1}-1}\right.}\right) +\sum_{n=0}^{\infty }{\left. -\frac{{{\mathit{q1}}^{-n-2}}\, {{z}^{2 n}}}{\mathit{q1}-1}\right.}\nonumber
	\end{multline}
	
	\noindent
	\begin{minipage}[t]{8ex}\color{red}\bf
		\begin{verbatim}
			(%i3) 
		\end{verbatim}
	\end{minipage}
	\begin{minipage}[t]{\textwidth}\color{blue}
		\begin{verbatim}
			FPS(sin(z^(1/3))+ cos(z^(1/2)),z,n);
		\end{verbatim}
	\end{minipage}
	\definecolor{labelcolor}{RGB}{100,0,0}
	\[\displaystyle
	\parbox{10ex}{$\color{labelcolor}\mathrm{\tt (\%o3) }\quad $}
	\left( \sum_{n=0}^{\infty }\frac{{{\left( -1\right) }^{n}}\cdot {{z}^{\frac{1+2\cdot n}{3}}}}{\left( 2\cdot n+1\right) \cdot \left( 2\cdot n\right) !}\right) +\sum_{n=0}^{\infty }\frac{{{\left( -1\right) }^{n}}\cdot {{z}^{n}}}{\left( 2\cdot n\right) !}\mbox{}
	\]
	
	\noindent
	\begin{minipage}[t]{8ex}\color{red}\bf
		\begin{verbatim}
			(%i4) 
		\end{verbatim}
	\end{minipage}
	\begin{minipage}[t]{\textwidth}\color{blue}
		\begin{verbatim}
			FPS(acos(z^(1/2))+exp(z^2),z,n);
		\end{verbatim}
	\end{minipage}
	\definecolor{labelcolor}{RGB}{100,0,0}
	\[\displaystyle
	\parbox{10ex}{$\color{labelcolor}\mathrm{\tt (\%o4) }\quad $}
	\left( \sum_{n=0}^{\infty }-\frac{\left( 2\cdot n\right) !\cdot {{z}^{\frac{1+2\cdot n}{2}}}}{\left( 2\cdot n+1\right) \cdot {{4}^{n}}\cdot {{n!}^{2}}}\right) +\left( \sum_{n=0}^{\infty }\frac{{{z}^{2\cdot n}}}{n!}\right) +\frac{2+\pi }{2}-1\mbox{}
	\]
	
	\noindent
	\begin{minipage}[t]{8ex}\color{red}\bf
		\begin{verbatim}
			(%i5) 
		\end{verbatim}
	\end{minipage}
	\begin{minipage}[t]{\textwidth}\color{blue}
		\begin{verbatim}
			FPS(log(1+sqrt(z)+z+z^(3/2)),z,n);
		\end{verbatim}
	\end{minipage}
	\definecolor{labelcolor}{RGB}{100,0,0}
	\[\displaystyle
	\parbox{10ex}{$\color{labelcolor}\mathrm{\tt (\%o5) }\quad $}
	\left( \sum_{n=0}^{\infty }{\left. \frac{{{\left( -1\right) }^{n}}\, {{z}^{n+1}}}{n+1}\right.}\right) +\sum_{n=0}^{\infty }{\left. \frac{{{\left( -1\right) }^{n}}\, {{z}^{\frac{n+1}{2}}}}{n+1}\right.}\]
	
	\textnormal{Let us use other points of expansions.}
	
	\noindent
	\begin{minipage}[t]{8ex}\color{red}\bf
		\begin{verbatim}
			(%i6) 
		\end{verbatim}
	\end{minipage}
	\begin{minipage}[t]{\textwidth}\color{blue}
		\begin{verbatim}
			FPS(sin(2*z)+cos(z),z,n,%pi/2);
		\end{verbatim}
	\end{minipage}
	\definecolor{labelcolor}{RGB}{100,0,0}
	\[\displaystyle
	\parbox{10ex}{$\color{labelcolor}\mathrm{\tt (\%o6) }\quad $}
	-\sum_{n=0}^{\infty }\frac{{{\left( -1\right) }^{n}}\cdot \left( 1+2\cdot {{4}^{n}}\right) \cdot {{\left( z-\frac{\pi }{2}\right) }^{1+2\cdot n}}}{\left( 2\cdot n+1\right) \cdot \left( 2\cdot n\right) !}\mbox{}
	\]
	
	\noindent
	\begin{minipage}[t]{8ex}\color{red}\bf
		\begin{verbatim}
			(%i7) 
		\end{verbatim}
	\end{minipage}
	\begin{minipage}[t]{\textwidth}\color{blue}
		\begin{verbatim}
			FPS(exp(z)+log(1+z),z,n,%e);
		\end{verbatim}
	\end{minipage}
	\definecolor{labelcolor}{RGB}{100,0,0}
	\[\displaystyle
	\parbox{10ex}{$\color{labelcolor}\mathrm{\tt (\%o7) }\quad $}
	\left( \sum_{n=0}^{\infty }\frac{\left( {{e}^{e}}+{{e}^{1+e}}+{{e}^{e}}\cdot n+{{e}^{1+e}}\cdot n+\frac{{{\left( -1\right) }^{n}}\cdot \left( 1+n\right) !}{{{\left( 1+e\right) }^{n}}}\right) \cdot {{\left( z-e\right) }^{1+n}}}{\left( e+1\right) \cdot \left( n+1\right) \cdot \left( 1+n\right) !}\right) +\mathrm{log}\left( e+1\right) +{{e}^{e}}\mbox{}
	\]
	\textnormal{The latter series is wrongly represented by Maple's current \textit{convert/FormalPowerSeries} Maple's command due to the missing term} $\mathrm{log}\left( e+1\right)$ $+{{e}^{e}}$.
\end{example}

Algorithm \ref{fpscombin} ends our direct method for hypergeometric type series given in Definition \ref{def1}. In the next section we present a general approach for expressions that are not holonomic.

\section{Non-holonomic power series}\label{sec4}

We described in sub-subsection \ref{normalform} that holonomic recurrence equations together with sufficient initial values can be used to identify holonomic functions. In this section we give an approach to represent the power series of expressions that satisfy homogeneous quadratic differential equations. For a given expression $f$, the procedure follows the following steps:
\begin{enumerate}
	\item compute a quadratic differential equation satisfied by $f$;
	\item use the Cauchy product rule to convert that quadratic differential equation to a non-holonomic recurrence equation satisfied by the series coefficients of $f$;
	\item use the obtained recurrence equation to define a recursive formula for the power series coefficients of $f$.
\end{enumerate}
Gathering all these steps together we are able to define a normal form representation of non-holonomic power series.

\subsection{Computing quadratic differential equations}

Let $f$ be an expression, our algorithm in subsection \ref{mymethodDE} searches for a holonomic differential equation for $f$ by iteration on the order of derivatives of $f$. Depending on the algebraic simplification of ratios of these derivatives, the algorithm finds a holonomic DE of lowest order satisfied by $f$. Our idea in computing homogeneous quadratic differential equation is to define a 'natural' ordering between products of derivatives of $f$ so that by iteration on this ordering a quadratic differential equation satisfied by $f$ is sought.

What we refer to as quadratic differential equation is an algebraic ordinary differential equation (AODE) of degree $2$. This is a DE with at most one product of two derivatives in at least one of its summands (differential monomials). We recall that any product of derivatives increases the degree of a differential monomial by $1$ and therefore the degree of an AODE is the greatest number of products of derivatives among its differential monomials plus $1$ (see \cite[Section 6]{eremenko1982meromorphic}). In particular, linear differential equations are of degree $1$. However, for a good definition of the ordering we are looking for, we first study the general AODE case.

Let $f(z)$ be a given differentiable function. By observing the product rule when applying the derivative operator $\frac{d}{dz}$
\begin{equation}
	\text{(a)}~\dfrac{d}{dz}\left(f(z)^2\right)= 2 \cdot f(z)\cdot \dfrac{d}{dz}f(z),~~~~~ \text{(b)}~\dfrac{d}{dz}\left(\dfrac{1}{f(z)}\right)=\dfrac{-\frac{d}{dz}f(z)}{f(z)^2}, \label{qdiff}
\end{equation}
one can assume that the maximum degree for $f(z)$ in an AODE involving $f(z)$ and $\dfrac{d}{dz}f(z)$ is $2$.

Now, differentiating the right-hand sides of $(a)$ and $(b)$ in $(\ref{qdiff})$ yields $(c)$ and $(d)$, respectively, as below. 
\begin{equation}
	\text{(c)}~2\cdot f(z) \cdot \dfrac{d}{dz^2}f(z) + 2\cdot \left(\dfrac{d}{dz}f(z)\right)^2,~~~~~ \text{(d)}~ \dfrac{2\cdot \left(\dfrac{d}{dz}f(z)\right)^2}{f(z)^3} - \dfrac{\dfrac{d^2}{dz^2}f(z)}{f(z)^2}.
\end{equation}
Thus one can assume that the maximum degrees for $f(z)$ and $\dfrac{d}{dz}f(z)$ in an AODE involving $f(z)$, $\dfrac{d}{dz}f(z)$, and $\dfrac{d^2}{dz^2}f(z)$ are, respectively, $3$ and $2$.

Using this process recursively, we can state that 
\begin{itemize}
	\item $f(z)^4,$ $\left(\dfrac{d}{dz}f(z)\right)^3$ and $\left(\dfrac{d^2}{dz^2}f(z)\right)^2$ gives the maximum degrees for an AODE of order\footnote{As in the linear case, the order is taken as the greatest derivative order.} $3$;
	\item $f(z)^5,$ $\left(\dfrac{d}{dz}f(z)\right)^4$, $\left(\dfrac{d^2}{dz^2}f(z)\right)^3$ and $\left(\dfrac{d^2}{dz^2}f(z)\right)^2$ gives the maximum degrees for an AODE of order $4$;
	\item \ldots
\end{itemize}

However, since we are only interested in computing quadratic differential equations, we modify the above process by avoiding degrees that are greater than $2$. Thus, we can order quadratic derivatives of $f$ as follows

\begin{equation}
	\begin{matrix}
		(1)~ 1,&&&&\\
		(2)~ f,& (3)~ f^2,& & & \\
		(4)~ f',& (5)~ f'f,& (6)~ f'^2,& & \\
		(7)~ f'',& (8)~ f''f,& (9)~ f''f',& (10) f''^2, & \\
		(11)~ f''',& (12)~ f'''f,& (13)~ f'''f',& (14)~ f'''f'',& (15) f'''^2,\\
		\ldots&&&&
	\end{matrix}
	\label{nu2}
\end{equation}
We define
\begin{equation}
	\frac{d^{-1}}{dz}=f^{(-1)}=1, ~\text{ and }~\frac{d^{0}}{dz}=f^{(0)}=f. \label{Qdef1}
\end{equation}

Observe in $(\ref{nu2})$ that for every derivative of order $n$, $n\in\mathbb{N}_{\geqslant 0}$, we compute the product of $f^{(n)}$ and all the derivatives of order less than or equal to $n$ before computing the next derivative. We are going to define a derivative operator, say $\delta_{2,z}^{k} f$, $k\in\mathbb{N}$ so that in $(\ref{nu2})$ the numbers in parenthesis represent the derivative orders. This operator computes the product of two derivatives of $f$ according to the ordering given in $(\ref{nu2})$.

Looking at $(\ref{nu2})$ as an infinite lower triangular matrix reduces the definition of $\delta_{2,z}$ to the one of a bijective map $\nu$ between positive integers and the corresponding subspace of $\mathbb{N}\times\mathbb{N}$: $(i,j)_{i,j\in\mathbb{N}}, i\leqslant j$. This can be done by counting the couple $(i,j)$ in $(\ref{nu2})$ from up to down, from the left to the right. We obtain
\begin{equation}
	\nu(k) = (i,j) = \begin{cases} (l,l)~~ \text{ if }~~ N=k\\ (l+1,k-N)~\text{ oherwise}\end{cases},
	\text{ where } l=\left\lfloor \sqrt{2k+\frac{1}{4}}-\frac{1}{2}\right\rfloor,~\text{ and }~N=\dfrac{l(l+1)}{2}.
\end{equation}
It remains to define a correspondence between the couple $(i,j)=\nu(k),~k\in\mathbb{N}$ and the quadratic products in $(\ref{nu2})$. This is straightforward since we have defined $(\ref{Qdef1})$. We get
\begin{equation}
	\delta_{2,z}^k(f) = \dfrac{d^{i-2}}{dz^{j-2}}f \cdot \dfrac{d^{j-2}}{dz^{i-2}},~\text{ where }~ (i,j) = \nu(k).
\end{equation}

We implemented this operator in our packages as \textit{delta2diff(f,z,k)}. One can use it to recover some products of derivatives in $(\ref{nu2})$.

\begin{example}\item
	
	\noindent
	\begin{minipage}[t]{8ex}\color{red}\bf
		\begin{verbatim}
			(%i1) 
		\end{verbatim}
	\end{minipage}
	\begin{minipage}[t]{\textwidth}\color{blue}
		\begin{verbatim}
			delta2diff(F(z),z,3);
		\end{verbatim}
	\end{minipage}
	\vspace{-0.25cm}
	\definecolor{labelcolor}{RGB}{100,0,0}
	\[\displaystyle
	\hspace{1cm}\parbox{10ex}{$\color{labelcolor}\mathrm{\tt (\%o1) }\quad $}\hspace{1cm}
	{{\mathrm{F}\left( z\right) }^{2}}\mbox{}
	\]
	\vspace{-0.25cm}
	
	\noindent
	\begin{minipage}[t]{8ex}\color{red}\bf
		\begin{verbatim}
			(%i2) 
		\end{verbatim}
	\end{minipage}
	\begin{minipage}[t]{\textwidth}\color{blue}
		\begin{verbatim}
			delta2diff(F(z),z,4);
		\end{verbatim}
	\end{minipage}
	\vspace{-0.25cm}
	\definecolor{labelcolor}{RGB}{100,0,0}
	\[\displaystyle
	\hspace{1cm}\parbox{10ex}{$\color{labelcolor}\mathrm{\tt (\%o2) }\quad $}\hspace{1cm}
	\frac{d}{d\,z}\cdot \mathrm{F}\left( z\right) \mbox{}
	\]
	\vspace{-0.25cm}
	
	\noindent
	\begin{minipage}[t]{8ex}\color{red}\bf
		\begin{verbatim}
			(%i3) 
		\end{verbatim}
	\end{minipage}
	\begin{minipage}[t]{\textwidth}\color{blue}
		\begin{verbatim}
			delta2diff(F(z),z,5);
		\end{verbatim}
	\end{minipage}
	\vspace{-0.25cm}
	\definecolor{labelcolor}{RGB}{100,0,0}
	\[\displaystyle
	\hspace{1cm}\parbox{10ex}{$\color{labelcolor}\mathrm{\tt (\%o3) }\quad $}\hspace{1cm}
	\mathrm{F}\left( z\right) \cdot \left( \frac{d}{d\,z}\cdot \mathrm{F}\left( z\right) \right) \mbox{}
	\]
	\vspace{-0.25cm}
	
	\noindent
	\begin{minipage}[t]{8ex}\color{red}\bf
		\begin{verbatim}
			(%i4) 
		\end{verbatim}
	\end{minipage}
	\begin{minipage}[t]{\textwidth}\color{blue}
		\begin{verbatim}
			delta2diff(F(z),z,6);
		\end{verbatim}
	\end{minipage}
	\vspace{-0.25cm}
	\definecolor{labelcolor}{RGB}{100,0,0}
	\[\displaystyle
	\hspace{1cm}\parbox{10ex}{$\color{labelcolor}\mathrm{\tt (\%o4) }\quad $}\hspace{1cm}
	{{\left( \frac{d}{d\,z}\cdot \mathrm{F}\left( z\right) \right) }^{2}}\mbox{}
	\]
	\vspace{-0.25cm}
	
	\noindent
	\begin{minipage}[t]{8ex}\color{red}\bf
		\begin{verbatim}
			(%i5) 
		\end{verbatim}
	\end{minipage}
	\begin{minipage}[t]{\textwidth}\color{blue}
		\begin{verbatim}
			delta2diff(F(z),z,14);
		\end{verbatim}
	\end{minipage}
	\vspace{-0.25cm}
	\definecolor{labelcolor}{RGB}{100,0,0}
	\[\displaystyle
	\hspace{1cm}\parbox{10ex}{$\color{labelcolor}\mathrm{\tt (\%o5) }\quad $} \hspace{1cm}
	\left( \frac{{{d}^{2}}}{d\,{{z}^{2}}}\cdot \mathrm{F}\left( z\right) \right) \cdot \left( \frac{{{d}^{3}}}{d\,{{z}^{3}}}\cdot \mathrm{F}\left( z\right) \right) \mbox{}
	\]
\end{example}

Using $\delta_{2,z}$ instead of $\dfrac{d}{dz}$ in Koepf's original approach for holonomic functions yields a procedure to compute homogeneous quadratic differential equations generally of lowest order satisfied by a given expression $f(z)$. We therefore obtain the following algorithm.

\begin{algorithm}[h!]
	\caption{Computing a quadratic DE satisfied by an expression $f$}
	\label{AlgoQDE}
	\begin{algorithmic} 
		\Require An expression $f(z)$.
		\Ensure A quadratic differential equation over $\mathbb{K}$ of least order satisfied by $f(z)$.
		\begin{enumerate}
			\item If $f=0$ then the DE is found\footnotemark and we stop.
			\item $f\neq0$, compute $A_0(z)=\delta_{2,z}^3 f(z) / f(z),$
			\begin{itemize}
				\item[(1-a)] if $A_0(z)\in\mathbb{K}(z)$ i.e $A_0(z)=P(z)/Q(z)$ where $P$ and $Q$ are polynomials, then we have found a quadratic DE satisfied by $f$:
				$$Q(z)F(z)^2 - P(z)F(z) = 0.$$
				\item[(1-b)] If $A_0(z)\notin \mathbb{K}(z)$, then go to 3.
			\end{itemize}
			\item Fix a number $QN_{max}\in \mathbb{N},$ the maximal order of the DE sought; a suitable value is $QN_{max}:=19$ which corresponds to the maximum $\delta_{2,z}$-order for having a quadratic differential equation of forth order.
			\begin{itemize}
				\item[(3-a)] set $N:=2$;
				\item[(3-b)] compute $\delta_{2,z}^{N+2}f;$ 
				\item[(3-c)] expand the ansatz
				\begin{equation}
					\delta_{2,z}^{N+2}f(z) + A_{N-1} \delta_{2,z}^{N+1}f(z) + \cdots + A_0 f(z) = \sum_{i=0}^{E}S_i,
				\end{equation}
				in elementary summands with $A_{N}, A_{N-1},\ldots, A_{0}$ as unknowns. $E\geqslant N$ is the total number of summands $S_i$ obtained after expansion.
			\end{itemize}
		\end{enumerate}
		\algstore{pause3b}
	\end{algorithmic}
\end{algorithm}
\footnotetext[11]{A differential equation of order zero: $F=0$, since we assumed that $\frac{d^0}{dz}f=f$.}
\clearpage 

\begin{algorithm}
	\ContinuedFloat
	\caption{Computing a quadratic DE satisfied by an expression $f$}
	\begin{algorithmic}
		\algrestore{pause3b}	
		\State
		\begin{itemize}			
			\item[(3-d)]\label{Algo13d} For each pair of summands $S_i$ and $S_j$ $(0\leqslant i\neq j\leqslant E)$, group them additively together if there exists $r(z)=S_i(z)/S_j(z)\in\mathbb{K}(z)$. If the number of groups is $N$ then we have $N$ linearly independent expressions. In that case, there exists a solution which can be found by equating each group to zero. The resulting system is linear for the unknowns $A_0, A_1,\ldots, A_{N-1}.$ Solving this system gives rational functions in $z$, and the solution is unique since we normalized $A_N=1$. After multiplication by the common denominator of the values found for $A_0(z), A_1(z),\ldots A_{N-1}(z)$ we get the holonomic DE sought. If otherwise the number of groups is larger than $N$, then there is no solution and the step is not successful.
			\item[(3-e)] If (3-d) is not successful, then increment $N$, and go back to (3-b), until $N=QN_{max}.$
		\end{itemize}
	\end{algorithmic}
\end{algorithm}

Note that an algorithm for the general AODE case can be defined similarly provided an appropriate replacement of $\delta_{2,z}$ for derivation. Our Maxima package has an implementation of Algorithm $\ref{AlgoQDE}$ with the syntax \textit{QDE(f(z),F(z),[Type])}. The argument \textit{Type} is either \textit{Inhomogeneous} to allow the search for inhomogeneous quadratic DEs, or \textit{Homogeneous} by default to look for homogeneous ones. We have also implemented the general AODE case as \textit{NLDE(f(z),F(z))} (NL for non-linear).

\begin{example}\label{qdeegs}\item
	
	\noindent
	\begin{minipage}[t]{8ex}\color{red}\bf
		\begin{verbatim}
			(%i1) 
		\end{verbatim}
	\end{minipage}
	\begin{minipage}[t]{\textwidth}\color{blue}
		\begin{verbatim}
			QDE(tan(z),F(z));
		\end{verbatim}
	\end{minipage}
	\definecolor{labelcolor}{RGB}{100,0,0}
	\[\displaystyle
	\parbox{10ex}{$\color{labelcolor}\mathrm{\tt (\%o1) }\quad $}
	\frac{{{d}^{2}}}{d\,{{z}^{2}}}\cdot \mathrm{F}\left( z\right) -2\cdot \mathrm{F}\left( z\right) \cdot \left( \frac{d}{d\,z}\cdot \mathrm{F}\left( z\right) \right) =0\mbox{}
	\]
	
	\noindent
	\begin{minipage}[t]{8ex}\color{red}\bf
		\begin{verbatim}
			(%i2) 
		\end{verbatim}
	\end{minipage}
	\begin{minipage}[t]{\textwidth}\color{blue}
		\begin{verbatim}
			QDE(tan(z),F(z),Inhomogeneous);
		\end{verbatim}
	\end{minipage}
	\definecolor{labelcolor}{RGB}{100,0,0}
	\[\displaystyle
	\parbox{10ex}{$\color{labelcolor}\mathrm{\tt (\%o2) }\quad $}
	\frac{d}{d\,z}\cdot \mathrm{F}\left( z\right) -{{\mathrm{F}\left( z\right) }^{2}}-1=0\mbox{}
	\]

	\noindent
	\begin{minipage}[t]{8ex}\color{red}\bf
		\begin{verbatim}
			(%i3) 
		\end{verbatim}
	\end{minipage}
	\begin{minipage}[t]{\textwidth}\color{blue}
		\begin{verbatim}
			QDE(sec(z)^k,F(z));
		\end{verbatim}
	\end{minipage}
	\definecolor{labelcolor}{RGB}{100,0,0}
	\[\displaystyle
	\parbox{10ex}{$\color{labelcolor}\mathrm{\tt (\%o3) }\quad $}
	k\cdot \mathrm{F}\left( z\right) \cdot \left( \frac{{{d}^{2}}}{d\,{{z}^{2}}}\cdot \mathrm{F}\left( z\right) \right) +\left( -1-k\right) \cdot {{\left( \frac{d}{d\,z}\cdot \mathrm{F}\left( z\right) \right) }^{2}}-{{k}^{2}}\cdot {{\mathrm{F}\left( z\right) }^{2}}=0\mbox{}
	\]
	
	\noindent
	\begin{minipage}[t]{8ex}\color{red}\bf
		\begin{verbatim}
			(%i4) 
		\end{verbatim}
	\end{minipage}
	\begin{minipage}[t]{\textwidth}\color{blue}
		\begin{verbatim}
			QDE(z/(exp(z)-1),F(z));
		\end{verbatim}
	\end{minipage}
	\definecolor{labelcolor}{RGB}{100,0,0}
	\[\displaystyle
	\parbox{10ex}{$\color{labelcolor}\mathrm{\tt (\%o4) }\quad $}
	z\cdot \left( \frac{d}{d\,z}\cdot \mathrm{F}\left( z\right) \right) +{{\mathrm{F}\left( z\right) }^{2}}+\left( z-1\right) \cdot \mathrm{F}\left( z\right) =0\mbox{}
	\]
	
	\noindent
	\begin{minipage}[t]{8ex}\color{red}\bf
		\begin{verbatim}
			(%i5) 
		\end{verbatim}
	\end{minipage}
	\begin{minipage}[t]{\textwidth}\color{blue}
		\begin{verbatim}
			QDE(log(1+sin(z)),F(z));
		\end{verbatim}
	\end{minipage}
	\definecolor{labelcolor}{RGB}{100,0,0}
	\[\displaystyle
	\parbox{10ex}{$\color{labelcolor}\mathrm{\tt (\%o5) }\quad $}
	\frac{{{d}^{3}}}{d\,{{z}^{3}}}\cdot \mathrm{F}\left( z\right) +\left( \frac{d}{d\,z}\cdot \mathrm{F}\left( z\right) \right) \cdot \left( \frac{{{d}^{2}}}{d\,{{z}^{2}}}\cdot \mathrm{F}\left( z\right) \right) =0\mbox{}
	\]
	
	\noindent
	\begin{minipage}[t]{8ex}\color{red}\bf
		\begin{verbatim}
			(%i6) 
		\end{verbatim}
	\end{minipage}
	\begin{minipage}[t]{\textwidth}\color{blue}
		\begin{verbatim}
			QDE(tan(z)^k,F(z));
		\end{verbatim}
	\end{minipage}
	\definecolor{labelcolor}{RGB}{100,0,0}
	\begin{multline*}
		\displaystyle
		\parbox{10ex}{$\color{labelcolor}\mathrm{\tt (\%o6) }\quad $}\hspace{-0.4cm}
		{{k}^{2}}\cdot \mathrm{F}\left( z\right) \cdot \left( \frac{{{d}^{4}}}{d\,{{z}^{4}}}\cdot \mathrm{F}\left( z\right) \right) -2\cdot \left( 2\cdot {{k}^{2}}-3\right) \cdot \left( \frac{d}{d\,z}\cdot \mathrm{F}\left( z\right) \right) \cdot \left( \frac{{{d}^{3}}}{d\,{{z}^{3}}}\cdot \mathrm{F}\left( z\right) \right) +3\cdot \left( k-2\right) \\
		\hspace{-0.15cm} \cdot \left( 2+k\right) \cdot {{\left( \frac{{{d}^{2}}}{d\,{{z}^{2}}}\cdot \mathrm{F}\left( z\right) \right) }^{2}}\hspace{-0.2cm}+4\cdot {{k}^{2}}\cdot \mathrm{F}\left( z\right) \cdot \left( \frac{{{d}^{2}}}{d\,{{z}^{2}}}\cdot \mathrm{F}\left( z\right) \right) +4\cdot \left( 5\cdot {{k}^{2}}-6\right) \cdot {{\left( \frac{d}{d\,z}\cdot \mathrm{F}\left( z\right) \right) }^{2}}\hspace{-0.3cm}=0\mbox{}
	\end{multline*}
	
	\textnormal{Using \textit{NLDE} in the latter example yields an AODE that does not depend on the exponent $k$, but is not quadratic}.
	
	\noindent
	\begin{minipage}[t]{8ex}\color{red}\bf
		\begin{verbatim}
			(%i7) 
		\end{verbatim}
	\end{minipage}
	\begin{minipage}[t]{\textwidth}\color{blue}
		\begin{verbatim}
			NLDE(tan(z)^k,F(z));
		\end{verbatim}
	\end{minipage}
	\definecolor{labelcolor}{RGB}{100,0,0}
	\vspace{-0.5cm}
	\begin{multline*}
		\displaystyle
		\parbox{10ex}{$\color{labelcolor}\mathrm{\tt (\%o7) }\quad $}
		\mathrm{F}\left( z\right) \cdot \left( \frac{d}{d\,z}\cdot \mathrm{F}\left( z\right) \right) \cdot \left( \frac{{{d}^{3}}}{d\,{{z}^{3}}}\cdot \mathrm{F}\left( z\right) \right) -2\cdot \mathrm{F}\left( z\right) \cdot {{\left( \frac{{{d}^{2}}}{d\,{{z}^{2}}}\cdot \mathrm{F}\left( z\right) \right) }^{2}}\\
		+{{\left( \frac{d}{d\,z}\cdot \mathrm{F}\left( z\right) \right) }^{2}}\cdot \left( \frac{{{d}^{2}}}{d\,{{z}^{2}}}\cdot \mathrm{F}\left( z\right) \right) -4\cdot \mathrm{F}\left( z\right) \cdot {{\left( \frac{d}{d\,z}\cdot \mathrm{F}\left( z\right) \right) }^{2}}=0\mbox{}
	\end{multline*}
	
	\textnormal{Compared to \textit{QDE}, generally \textit{NLDE} generates differential equations of lower order but of higher degree. However, in terms of timings \textit{QDE} is faster and the computed differential equations give much simpler recurrence equations than the outputs of \textit{NLDE}.}
\end{example}

\begin{remark}
Observe that unlike the holonomic case where the existence and uniqueness of a solution to the Cauchy problem is quite immediate, in the algebraic case one needs to take into account other important facts since the computed differential equations are not always explicit. However, using the implicit function theorem (see \cite{krantz2012implicit}) on underlying algebraic polynomials, the classical existence and uniqueness theorem \cite[Theorem 2.2]{teschl2012ordinary} can be applied locally to uniquely determine the solution of outputs of \textit{QDE} as a well chosen initial value problem.	
\end{remark}

\subsection{Converting quadratic differential equations to recurrence equations}

We need a rewrite rule similar to $(\ref{Algo3corr})$ for every differential monomial in the expansion of a quadratic differential equation. Let $f(z)$ be a power series with representation 
$$f(z) = \sum_{n=0}^{\infty} a_nz^n.$$
It suffices to find a recurrence equation term that corresponds to the  quadratic differential equation term
\begin{equation}
	z^p \cdot f(z)^{(i)}\cdot f(z)^{(j)}, ~i,j,p\in\mathbb{N}_{\geqslant 0}.
\end{equation}
This is done in a similar manner as for $(\ref{Algo3corr})$. We have
$$   f(z)^{(k)} = \sum_{n=0}^{\infty} (n+1)_k\cdot a_{n+k} \cdot z^n, \forall k\in\mathbb{N}_{\geqslant},$$
therefore
\begin{eqnarray}
	f(z)^{(i)}\cdot f(z)^{(j)} &=& \left(\sum_{n=0}^{\infty} (n+1)_i\cdot a_{n+i} \cdot z^n \right) \cdot \left(\sum_{n=0}^{\infty} (n+1)_j\cdot a_{n+j} \cdot z^n\right)\nonumber\\
	&=& \sum_{n=0}^{\infty} \left( \sum_{k=0}^{n} (k+1)_i\cdot a_{k+i} \cdot (n-k+1)_j\cdot a_{n-k+j}\right) \cdot z^n, \label{qrule1}
\end{eqnarray}
by application of the Cauchy product rule. Finally multiplying $(\ref{qrule1})$ by $z^p$ yields the formula

\begin{equation}
	z^p \cdot f(z)^{(i)}\cdot f(z)^{(j)} = \sum_{n=0}^{\infty} \left(\sum_{k=0}^{n-p}(k+1)_i\cdot (n-p-k+1)_j\cdot a_{k+i}\cdot a_{n-p-k+j}\right) \cdot z^n,
\end{equation}
and the corresponding rewrite rule
\begin{equation}
	z^p \cdot f(z)^{(i)}\cdot f(z)^{(j)} \longrightarrow \left(\sum_{k=0}^{n-p}(k+1)_i\cdot (n-p-k+1)_j\cdot a_{k+i}\cdot a_{n-p-k+j}\right). \label{qrerule}
\end{equation}

Observe that $(\ref{qrerule})$ is a rewrite rule for the power series coefficients of the given expression $f$. When dealing with inhomogeneous DEs, the constant term must be considered differently. This is the main reason why we prefer to work with homogeneous DEs.

Thus a procedure to convert quadratic differential equations to recurrence equations follows immediately. Our packages contains the function \textit{FindQRE(f,z,a[n])} as analogue of \textit{FindRE} for the quadratic case. 

\begin{example}\label{exqre}\item
	
	\noindent
	\begin{minipage}[t]{8ex}\color{red}\bf
		\begin{verbatim}
			(%i1) 
		\end{verbatim}
	\end{minipage}
	\begin{minipage}[t]{\textwidth}\color{blue}
		\begin{verbatim}
			FindQRE(tan(z),z,a[n]);
		\end{verbatim}
	\end{minipage}
	\definecolor{labelcolor}{RGB}{100,0,0}
	\[\displaystyle
	\parbox{10ex}{$\color{labelcolor}\mathrm{\tt (\%o1) }\quad $}
	\left( 1+n\right) \cdot \left( 2+n\right) \cdot {{a}_{n+2}}-2\cdot \sum_{k=0}^{n}\left( k+1\right) \cdot {{a}_{k+1}}\cdot {{a}_{n-k}}=0
	\]
	
	\noindent
	\begin{minipage}[t]{8ex}\color{red}\bf
		\begin{verbatim}
			(%i2) 
		\end{verbatim}
	\end{minipage}
	\begin{minipage}[t]{\textwidth}\color{blue}
		\begin{verbatim}
			FindQRE(z/(exp(z)-1),z,a[n]);
		\end{verbatim}
	\end{minipage}
	\definecolor{labelcolor}{RGB}{100,0,0}
	\[\displaystyle
	\parbox{10ex}{$\color{labelcolor}\mathrm{\tt (\%o2) }\quad $}
	\left( \sum_{k=0}^{n}{{a}_{k}}\cdot {{a}_{n-k}}\right) +\left( n-1\right) \cdot {{a}_{n}}+{{a}_{n-1}}=0\mbox{}
	\]
	
	\noindent
	\begin{minipage}[t]{8ex}\color{red}\bf
		\begin{verbatim}
			(%i3) 
		\end{verbatim}
	\end{minipage}
	\begin{minipage}[t]{\textwidth}\color{blue}
		\begin{verbatim}
			FindQRE(log(1+sin(z)),z,a[n]);
		\end{verbatim}
	\end{minipage}
	\definecolor{labelcolor}{RGB}{100,0,0}
	\vspace{-0.5cm}
	\begin{multline*}
		\displaystyle
		\parbox{10ex}{$\color{labelcolor}\mathrm{\tt (\%o3) }\quad $}\hspace{-0.5cm}
		\left( \sum_{k=0}^{n}\left( k+1\right) \cdot \left( k+2\right) \cdot {{a}_{k+2}}\cdot \left( n-k+1\right) \cdot {{a}_{n-k+1}}\right)+\left( 1+n\right) \cdot \left( 2+n\right) \cdot \left( 3+n\right) \cdot {{a}_{n+3}}=0
	\end{multline*}
\end{example}

\subsection{Normal forms for non-holonomic power series}

One can use the recurrence equation computed by \textit{FindQRE} to define a power series representation of a given holonomic or non-holonomic expression. For a given recurrence equation computed using \textit{FindQRE}, we write the highest order term in terms of the others. And evaluating the recurrence equation at some integers allows to determine the necessary initial values of the representation. 

Note that to get the highest order term when there are terms with symbolic sums in the recurrence equation, we remove parts corresponding to the minimum and the maximum value of the summation variable and substitute the initial conditions until we get a non-zero expression from which the highest order term can be obtained. For example, to get the highest order term of the recurrence equation of $z/(\exp(z)-1)$
\begin{equation}
	\left( \sum_{k=0}^{n}{{a}_{k}}\cdot {{a}_{n-k}}\right) +\left( n-1\right) \cdot {{a}_{n}}+{{a}_{n-1}}=0\mbox{},
\end{equation}
we remove those parts of the symbolic sum corresponding to $k=0$ and $k=n$. This gives
\begin{equation}
	\left( \sum_{k=1}^{n-1}{{a}_{k}}\cdot {{a}_{n-k}}\right) +   2\cdot a_0\cdot a_n + \left( n-1\right) \cdot {{a}_{n}}+{{a}_{n-1}}=0\mbox{}.
\end{equation}
Then we substitute the value of $a_0$ and write the resulting highest order term in terms of the other summands of the equation. In this example $a_n$ is necessarily the highest order term to be used since the part of the equation that has no sum always depends on $a_n$ after substitution of the value of $a_0$.

Observe that this process of determining the highest order term can quite easily be managed in the quadratic case because every term in the computed recurrence equation has at most one sum. In the general AODE case we generally have a more complicated situation where some summands of the equation have many summation symbols. 

The FPS command of our Maxima package combines all the procedures of this section as the last method to determine the power series representation of a given expression.

\begin{example}\item
	
	\noindent
	\begin{minipage}[t]{8ex}\color{red}\bf
		\begin{verbatim}
			(%i1) 
		\end{verbatim}
	\end{minipage}
	\begin{minipage}[t]{\textwidth}\color{blue}
		\begin{verbatim}
			FPS(log(1+sin(z)),z,n);
		\end{verbatim}
	\end{minipage}
	\definecolor{labelcolor}{RGB}{100,0,0}
	\begin{small}
		\begin{multline*}
		\displaystyle
		\parbox{10ex}{$\color{labelcolor}\mathrm{\tt (\%o1) }\quad $}\hspace{-0.5cm}
		\Bigg[\sum_{n=0}^{\infty }{{A}_{n}}\cdot {{z}^{n}},{{A}_{n+4}}=\frac{1}{\left( n+2\right) \cdot \left( n+3\right) \cdot \left( n+4\right)}\Bigg(\left( n+2\right) \cdot {{A}_{n+2}}-\left( 2+n\right) \cdot \left(  3+n\right) \cdot {{A}_{n+3}}\\
		\hspace{-0.7cm}-\sum_{k=1}^{n}\left( k+1\right) \cdot \left( k+2\right) \cdot {{A}_{k+2}}\cdot \left(   n-k+2\right) \cdot {{A}_{n-k+2}}\Bigg),n>=0,
		\left[{{A}_{0}}=0,{{A}_{1}}=1,{{A}_{2}}=-\frac{1}{2},{{A}_{3}}=\frac{1}{6}\right]\Bigg]\mbox{}
	\end{multline*}
	\end{small}
	
	\noindent
	\begin{minipage}[t]{8ex}\color{red}\bf
		\begin{verbatim}
			(%i2) 
		\end{verbatim}
	\end{minipage}
	\begin{minipage}[t]{\textwidth}\color{blue}
		\begin{verbatim}
			FPS(1/(1+sin(z)),z,n);
		\end{verbatim}
	\end{minipage}
	\definecolor{labelcolor}{RGB}{100,0,0}
	\begin{equation}
		\displaystyle
		\parbox{10ex}{$\color{labelcolor}\mathrm{\tt (\%o2) }\quad $}
		\hspace{-0.5cm} \Bigg[\sum_{n=0}^{\infty }{{A}_{n}}\cdot {{z}^{n}},{{A}_{n+2}}=\frac{5\cdot {{A}_{n}}+3\cdot \sum_{k=1}^{n-1}{{A}_{k}}\cdot {{A}_{n-k}}}{\left( n+1\right) \cdot \left( n+2\right)},n>=0,\left[{{A}_{0}}=1,{{A}_{1}}=-1\right]\Bigg]
	\end{equation}
\end{example} 

Note that these outputs can also be used to compute Taylor polynomials. We implemented it as \textit{QTaylor(f(z),z,z0,N)}. However, due to the presence of summation terms the quadratic time complexity cannot be avoided and hence the code is generally slower than the built-in Maxima command \textit{taylor}. Nevertheless one can use a remembering program so that many calls of close orders of the same function require a timing only for the first call.

\begin{example}\item

	\noindent
	\begin{minipage}[t]{8ex}\color{red}\bf
		\begin{verbatim}
			(%i1) 
		\end{verbatim}
	\end{minipage}
	\begin{minipage}[t]{\textwidth}\color{blue}
		\begin{verbatim}
			taylor(log(1+sin(z)),z,0,10);
		\end{verbatim}
	\end{minipage}
	\definecolor{labelcolor}{RGB}{100,0,0}
	\[\displaystyle
	\parbox{10ex}{$\color{labelcolor}\mathrm{\tt (\%o1)/T/ }\quad $}
	z-\frac{{{z}^{2}}}{2}+\frac{{{z}^{3}}}{6}-\frac{{{z}^{4}}}{12}+\frac{{{z}^{5}}}{24}-\frac{{{z}^{6}}}{45}+\frac{61\cdot {{z}^{7}}}{5040}-\frac{17\cdot {{z}^{8}}}{2520}+\frac{277\cdot {{z}^{9}}}{72576}-\frac{31\cdot {{z}^{10}}}{14175}+\mbox{...}
	\]
	\noindent
	\begin{minipage}[t]{8ex}\color{red}\bf
		\begin{verbatim}
			(%i2) 
		\end{verbatim}
	\end{minipage}
	\begin{minipage}[t]{\textwidth}\color{blue}
		\begin{verbatim}
			QTaylor(log(1+sin(z)),z,0,10);
		\end{verbatim}
	\end{minipage}
	\vspace{-0.6cm}
	
	\definecolor{labelcolor}{RGB}{100,0,0}
	\[\displaystyle
	\parbox{10ex}{$\color{labelcolor}\mathrm{\tt (\%o2) }\quad $}\hspace{-0.5cm}
	-\frac{31\cdot {{z}^{10}}}{14175}+\frac{277\cdot {{z}^{9}}}{72576}-\frac{17\cdot {{z}^{8}}}{2520}+\frac{61\cdot {{z}^{7}}}{5040}-\frac{{{z}^{6}}}{45}+\frac{{{z}^{5}}}{24}-\frac{{{z}^{4}}}{12}+\frac{{{z}^{3}}}{6}-\frac{{{z}^{2}}}{2}+z\mbox{}
	\]
	
	\noindent
	\begin{minipage}[t]{8ex}\color{red}\bf
		\begin{verbatim}
			(%i3) 
		\end{verbatim}
	\end{minipage}
	\begin{minipage}[t]{\textwidth}\color{blue}
		\begin{verbatim}
			taylor(log(1+sin(z)),z,0,300)$
		\end{verbatim}
	\end{minipage}
    \vspace{-0.5cm}
    
	\definecolor{labelcolor}{RGB}{100,0,0}
	\mbox{}\\\mbox{Evaluation took 1.2500 seconds (1.2510 elapsed) using 287.200 MB.}
	
	\noindent
	\begin{minipage}[t]{8ex}\color{red}\bf
		\begin{verbatim}
			(%i4) 
		\end{verbatim}
	\end{minipage}
	\begin{minipage}[t]{\textwidth}\color{blue}
		\begin{verbatim}
			QTaylor(log(1+sin(z)),z,0,300)$
		\end{verbatim}
	\end{minipage}
	\vspace{-0.5cm}
	
	\definecolor{labelcolor}{RGB}{100,0,0}
	\mbox{}\\\mbox{Evaluation took 33.5160 seconds (33.5400 elapsed) using 2345.257 MB.}
	
	\noindent
	\begin{minipage}[t]{8ex}\color{red}\bf
		\begin{verbatim}
			(%i5) 
		\end{verbatim}
	\end{minipage}
	\begin{minipage}[t]{\textwidth}\color{blue}
		\begin{verbatim}
			taylor(log(1+sin(z)),z,0,200)$
		\end{verbatim}
	\end{minipage}
	\vspace{-0.5cm}
	
	\definecolor{labelcolor}{RGB}{100,0,0}
	\mbox{}\\\mbox{Evaluation took 0.3280 seconds (0.3410 elapsed) using 88.334 MB}
	
	\noindent
	\begin{minipage}[t]{8ex}\color{red}\bf
		\begin{verbatim}
			(%i6) 
		\end{verbatim}
	\end{minipage}
	\begin{minipage}[t]{\textwidth}\color{blue}
		\begin{verbatim}
			QTaylor(log(1+sin(z)),z,0,200)$
		\end{verbatim}
	\end{minipage}
	\vspace{-0.5cm}
	
	\definecolor{labelcolor}{RGB}{100,0,0}
	\mbox{}\\\mbox{Evaluation took 0.0000 seconds (0.0130 elapsed) using 1.062 MB.}
\end{example}

Notice that our computations of recurrence equations and Taylor polynomials from quadratic differential equations sketch another proof on the existence and uniqueness of the solutions of these differential equations without a use of the implicit function theorem. Therefore, our normal forms are well defined. Note, however, that we do not consider Puiseux series in these computations because our approach to determine Puiseux numbers from recurrence equations generated by \textit{FindQRE} cannot apply. This could be a topic for further studies on non-holonomic Puiseux series. The algebraic geometry approach described in (\cite{Frankenphd}) could be of great help for this purpose.

\subsection{The algorithm as a simplifier}

In this section, we present some new results occurring as consequences using our algorithm of the previous subsection. This is an improvement towards identity proving for non-holonomic expressions. This advantage of symbolic computing was well discussed in the case of hypergeometric identities in (\cite{petkovvsek1996b}). Up to now our algorithms were meant to represent the power series of a given expression A; now we would ask our algorithm to answer the question whether two given expressions A and B have the same power series representations, and therefore whether they are identical in a certain neighborhood. This is the conclusion when two expressions have the same output using our algorithm or when the representation of their difference is zero. We present two examples where this can be observed. In this subsection we use our Maple implementation in order to compare our computations with its built-in command \textit{simplify} which apply many internal simplification rules on its given inputs.

As first example, the expression
\begin{equation}
	\log\left(\tan\left(\dfrac{z}{2}\right) + \sec\left(\dfrac{z}{2}\right)\right) - \arcsinh\left(\dfrac{\sin(z)}{1+\cos(z)}\right) \label{ABeg1}
\end{equation}
from \cite[Section 3.3]{geddes1992algorithms} (see also \cite[Exercise 9.8]{koepf2006computeralgebra}) is known to be difficult to prove equal to zero. One needs non-trivial transformations to simplify this to zero. However, using our algorithm based on the computation of quadratic differential equations yields the same power series representation for

\begin{maplegroup}
	\begin{mapleinput}
		\mapleinline{active}{1d}{f:=ln(tan(z/2)+sec(z/2))
		}{}
	\end{mapleinput}
	\mapleresult
	\begin{maplelatex}
		\[\displaystyle f\, := \,\ln  \left( \tan \left( z/2 \right) +\sec \left( z/2 \right)  \right) \]
	\end{maplelatex}
\end{maplegroup}
\noindent and
\begin{maplegroup}
	\begin{mapleinput}
		\mapleinline{active}{1d}{g:=arcsinh(sin(z)/(cos(z)+1))}{}
	\end{mapleinput}
	\mapleresult
	\begin{maplelatex}
		\[\displaystyle g\, := \,{\rm arcsinh} \left({\frac {\sin \left( z \right) }{\cos \left( z \right) +1}}\right)\]
	\end{maplelatex}
\end{maplegroup}
\noindent as shown below.
\clearpage
\begin{maplegroup}
	\begin{mapleinput}
		\mapleinline{active}{1d}{FPS[FPS](f,z,n,fpstype=Quadratic)
		}{}
	\end{mapleinput}
	\mapleresult
	\begin{maplelatex}
		\begin{small}
		\begin{multline}
			\displaystyle \Mapleoverset{\infty }{\Mapleunderset{n =0}{\sum }}A \left(n \right)~z ^{n },A \left(n +4\right)=-\frac{1}{2~\left(n +2\right)~\left(n +3\right)~\left(n +4\right)}\Bigg(-\frac{\left(n +2\right)~A \left(n +2\right)}{2}\\
			+\left(\Mapleoverset{n }{\Mapleunderset{k =1}{\sum }}4~\left(k +1\right)~\left(k +2\right)~\left(k +3\right)~A \left(k +3\right)~\left(n -k +2\right)~A \left(n -k +2\right)\right)\\
			+\Mapleoverset{n }{\Mapleunderset{k =1}{\sum }}\left(-\left(k +1\right)~A \left(k +1\right)~\left(n -k +2\right)~A \left(n -k +2\right)\right)\\
			+\left(\Mapleoverset{n }{\Mapleunderset{k =1}{\sum }}-8~\left(k +1\right)~\left(k +2\right)~A \left(k +2\right)~\left(n -k +2\right)~\left(n +3-k \right)~A \left(n +3-k \right)\right)\Bigg),\\
			0\le n ,\left[A \left(0\right)=0,A \left(1\right)=\frac{1}{2},A \left(2\right)=0,A \left(3\right)=\frac{1}{48}\right]\nonumber
		\end{multline}
	    \end{small}
	\end{maplelatex}
\end{maplegroup}
\begin{maplegroup}
	\begin{mapleinput}
		\mapleinline{active}{1d}{FPS[FPS](g,z,n,fpstype=Quadratic)}{}
	\end{mapleinput}
	\mapleresult
	\begin{maplelatex}
		\begin{small}
		\begin{multline}
			\displaystyle \Mapleoverset{\infty }{\Mapleunderset{n =0}{\sum }}A \left(n \right)~z ^{n },A \left(n +4\right)=-\frac{1}{2~\left(n +2\right)~\left(n +3\right)~\left(n +4\right)}\Bigg(-\frac{\left(n +2\right)~A \left(n +2\right)}{2}\\
			+\left(\Mapleoverset{n }{\Mapleunderset{k =1}{\sum }}4~\left(k +1\right)~\left(k +2\right)~\left(k +3\right)~A \left(k +3\right)~\left(n -k +2\right)~A \left(n -k +2\right)\right)\\
			+\Mapleoverset{n }{\Mapleunderset{k =1}{\sum }}\left(-\left(k +1\right)~A \left(k +1\right)~\left(n -k +2\right)~A \left(n -k +2\right)\right)\\
			+\left(\Mapleoverset{n }{\Mapleunderset{k =1}{\sum }}-8~\left(k +1\right)~\left(k +2\right)~A \left(k +2\right)~\left(n -k +2\right)~\left(n +3-k \right)~A \left(n +3-k \right)\right)\Bigg),\\
			0\le n ,\left[A \left(0\right)=0,A \left(1\right)=\frac{1}{2},A \left(2\right)=0,A \left(3\right)=\frac{1}{48}\right]\nonumber
    \end{multline}
    \end{small}
	\end{maplelatex}
\end{maplegroup}
\noindent The argument \textit{fpstype=Quadratic} is used to apply our method to non-holonomic functions directly. In fact, our algorithm \textit{QDE} computes the same differential equation for $f$, $g$, and $f-g$. That is
\begin{maplegroup}
	\begin{mapleinput}
		\mapleinline{active}{1d}{FPS[QDE](f-g,F(z))}{}
	\end{mapleinput}
	\mapleresult
	\begin{maplelatex}
\begin{equation}
\displaystyle - \left( {\frac {\rm d}{{\rm d}z}}F \left( z \right)  \right) ^{2}-8\, \left( {\frac {{\rm d}^{2}}{{\rm d}{z}^{2}}}F \left( z \right)  \right) ^{2}+4\, \left( {\frac {{\rm d}^{3}}{{\rm d}{z}^{3}}}F \left( z \right)  \right) {\frac {\rm d}{{\rm d}z}}F \left( z \right) =0.
\end{equation}
\end{maplelatex}
\end{maplegroup}
\noindent Moreover our implementation identifies their difference to zero, which is inaccessible using the built-in command \textit{simplify}.
\begin{maplegroup}
	\begin{mapleinput}
		\mapleinline{active}{1d}{CPUTime(FPS[FPS](f-g,z,n,fpstype=Quadratic))
		}{}
	\end{mapleinput}
	\mapleresult
	\begin{maplelatex}
		\[\displaystyle 85.297, 0\]
	\end{maplelatex}
\end{maplegroup}
$85.297$ indicates the CPU time used.
\clearpage
\begin{maplegroup}
	\begin{mapleinput}
		\mapleinline{active}{1d}{simplify(f-g,trig)
		}{}
	\end{mapleinput}
	\mapleresult
	\begin{maplelatex}
		\[\displaystyle \ln  \left( {\frac {1+\sin \left( z/2 \right) }{\cos \left( z/2 \right) }} \right) -{\rm arcsinh} \left({\frac {\sin \left( z \right) }{\cos \left( z \right) +1}}\right)\]
	\end{maplelatex}
\end{maplegroup}

Next, we consider
\begin{equation}
	\log\left(\dfrac{1+\tan(z)}{1-\tan(z)}\right) - 2 \arctanh\left(\dfrac{\sin(2z)}{1+\cos(2z)}\right). \label{id2}
\end{equation}
Similarly we get the following computations.              
\begin{maplegroup}
	\begin{mapleinput}
		\mapleinline{active}{1d}{f:=ln((1+tan(z))/(1-tan(z)))
		}{}
	\end{mapleinput}
	\mapleresult
	\begin{maplelatex}
		\[\displaystyle f\, := \,\ln  \left( {\frac {1+\tan \left( z \right) }{1-\tan \left( z \right) }} \right) \]
	\end{maplelatex}
\end{maplegroup}
\begin{maplegroup}
	\begin{mapleinput}
		\mapleinline{active}{1d}{g:=2*arctanh(sin(2*z)/(1+cos(2*z)))
		}{}
	\end{mapleinput}
	\mapleresult
	\begin{maplelatex}
		\[\displaystyle g\, := \,2\,{\rm arctanh} \left({\frac {\sin \left( 2\,z \right) }{\cos \left( 2\,z \right) +1}}\right)\]
	\end{maplelatex}
\end{maplegroup}
\noindent $f$ and $g$ satisfy the same differential equation
\begin{maplegroup}
	\begin{mapleinput}
		\mapleinline{active}{1d}{FPS[QDE](f,F(z))
		}{}
	\end{mapleinput}
	\mapleresult
	\begin{maplelatex}
		\[\displaystyle -4\, \left( {\frac {\rm d}{{\rm d}z}}F \left( z \right)  \right) ^{2}-2\, \left( {\frac {{\rm d}^{2}}{{\rm d}{z}^{2}}}F \left( z \right)  \right) ^{2}+ \left( {\frac {{\rm d}^{3}}{{\rm d}{z}^{3}}}F \left( z \right)  \right) {\frac {\rm d}{{\rm d}z}}F \left( z \right) =0.\]
	\end{maplelatex}
\end{maplegroup}
\noindent Therefore we get the same power series representation
\begin{maplegroup}
	\begin{mapleinput}
		\mapleinline{active}{1d}{FPS[FPS](g,z,n,fpstype=Quadratic)
		}{}
	\end{mapleinput}
	\mapleresult
	\begin{maplelatex}
		\begin{small}
		\begin{multline}
			\displaystyle \Mapleoverset{\infty }{\Mapleunderset{n =0}{\sum }}A \left(n \right)~z ^{n },A \left(n +4\right)=-\frac{1}{2~\left(n +2\right)~\left(n +3\right)~\left(n +4\right)}\Bigg(-8~\left(n +2\right)~A \left(n +2\right)\\
			+\left(\Mapleoverset{n }{\Mapleunderset{k =1}{\sum }}\left(k +1\right)~\left(k +2\right)~\left(k +3\right)~A \left(k +3\right)~\left(n -k +2\right)~A \left(n -k +2\right)\right)\\
			+\left(\Mapleoverset{n }{\Mapleunderset{k =1}{\sum }}-4~\left(k +1\right)~A \left(k +1\right)~\left(n -k +2\right)~A \left(n -k +2\right)\right)\\
			+\left(\Mapleoverset{n }{\Mapleunderset{k =1}{\sum }}-2~\left(k +1\right)~\left(k +2\right)~A \left(k +2\right)~\left(n -k +2\right)~\left(n +3-k \right)~A \left(n +3-k \right)\right)\Bigg),\\
			0\le n ,\left[A \left(0\right)=0,A \left(1\right)=2,A \left(2\right)=0,A \left(3\right)=\frac{4}{3}\right]
		\end{multline}
	    \end{small}
	\end{maplelatex}
\end{maplegroup}

\noindent However in this case $f-g$ satisfies a trivial differential equation, which makes the computations much faster.
\begin{maplegroup}
	\begin{mapleinput}
		\mapleinline{active}{1d}{FPS[QDE](f-g,F(z))}{}
	\end{mapleinput}
	\mapleresult
	\begin{maplelatex}
		\[\displaystyle  {\frac {\rm d}{{\rm d}z}}F \left( z \right)  =0\]
	\end{maplelatex}
\end{maplegroup}
\begin{maplegroup}
\begin{mapleinput}
	\mapleinline{active}{1d}{CPUTime(FPS[FPS](f-g,z,n,fpstype=Quadratic))
	}{}
\end{mapleinput}
\mapleresult
\begin{maplelatex}
	\[\displaystyle 0.046, 0\]
\end{maplelatex}
\end{maplegroup}
\begin{maplegroup}
As with the previous example, this simplification is not accessible with the command \textit{simplify}.
\begin{mapleinput}
	\mapleinline{active}{1d}{simplify(f-g)
	}{}
\end{mapleinput}
\mapleresult
\begin{maplelatex}
	\[\displaystyle \ln  \left( {\frac {\cos \left( z \right) +\sin \left( z \right) }{\cos \left( z \right) -\sin \left( z \right) }} \right) -2\,{\rm arctanh} \left({\frac {\sin \left( 2\,z \right) }{\cos \left( 2\,z \right) +1}}\right)\]
\end{maplelatex}
\end{maplegroup}

\begin{acknowledgment}
This work summarizes the results of the first author's Ph.D. thesis
under the supervision of the second. The first author received a DAAD
STIBET and two DAAD Erasmus Plus scholarships to finish his work.
Therefore, both authors would like to thank DAAD and the Institute of
Mathematics of the University of Kassel for their important and valuable
support.
\end{acknowledgment}

\bibliographystyle{elsarticle-harv}

\begin{thebibliography}{30}
	\expandafter\ifx\csname natexlab\endcsname\relax\def\natexlab#1{#1}\fi
	\expandafter\ifx\csname url\endcsname\relax
	\def\url#1{\texttt{#1}}\fi
	\expandafter\ifx\csname urlprefix\endcsname\relax\def\urlprefix{URL }\fi
	
	\bibitem[{Brewer(2014)}]{intro1}
	Brewer, T., 2014. Algebraic properties of formal power series composition.
	Ph.D. thesis, University of Kentucky,
	\url{https://uknowledge.uky.edu/cgi/viewcontent.cgi?article=1021&context=math_etds}.
	
	\bibitem[{Cluzeau and van Hoeij(2006)}]{cluzeau2006computing}
	Cluzeau, T., van Hoeij, M., 2006. Computing hypergeometric solutions of linear
	recurrence equations. Appl. Algebra Engrg. Comm. Comput. 17~(2), 83--115.
	
	\bibitem[{Eremenko(1982)}]{eremenko1982meromorphic}
	Eremenko, A.~E., 1982. Meromorphic solutions of algebraic differential
	equations. Russian Mathematical Surveys 37~(4), 61--95.
	
	\bibitem[{Falkensteiner(2020)}]{Frankenphd}
	Falkensteiner, S., 2020. Power series solutions of {AODE}s -- existence,
	uniqueness, convergence and computation. Ph.D. thesis, RISC Hagenberg,
	Johannes Kepler University Linz.
	
	\bibitem[{Geddes et~al.(1992)Geddes, Czapor, and Labahn}]{geddes1992algorithms}
	Geddes, K.~O., Czapor, S.~R., Labahn, G., 1992. Algorithms for Computer
	Algebra. Kluwer Academic Publishers, Massachusetts.
	
	\bibitem[{Gruntz and Koepf(1995)}]{SC93}
	Gruntz, D., Koepf, W., 1995. Maple package on formal power series. Maple
	Technical Newsletter 2~(2), 22--28.
	
	\bibitem[{Heck(2003)}]{heck1993introduction}
	Heck, A., 2003. Introduction to Maple, 3rd Edition. Springer-Verlag, New York.
	
	\bibitem[{Hendricks and Singer(1999)}]{hendricks1999solving}
	Hendricks, P.~A., Singer, M.~F., 1999. Solving difference equations in finite
	terms. J. Symb. Comput. 27~(3), 239--259.
	
	\bibitem[{Horn et~al.(2012)Horn, Koepf, and Sprenger}]{horn2012m}
	Horn, P., Koepf, W., Sprenger, T., 2012. $m$-fold hypergeometric solutions of
	linear recurrence equations revisited. Math. Comput. Sci. 6~(1), 61--77.
	
	\bibitem[{Kauers and Paule(2011)}]{kauers2011}
	Kauers, M., Paule, P., 2011. The Concrete Tetrahedron. Symbolic Sums,
	Recurrence Equations, Generating Functions, Asymptotic Estimates.
	Springer-Verlag, Wien.
	
	\bibitem[{Koepf(1992)}]{Koepf1992}
	Koepf, W., 1992. Power series in computer algebra. J. Symb. Comput. 13~(6),
	581--603.
	
	\bibitem[{Koepf(1995)}]{koepf1995algorithms}
	Koepf, W., 1995. Algorithms for $m$-fold hypergeometric summation. J. Symb.
	Comput. 20~(4), 399--417.
	
	\bibitem[{Koepf(1997)}]{koepf1997algebra}
	Koepf, W., 1997. The algebra of holonomic equations. Math. Semesterber. 44,
	173--194.
	
	\bibitem[{Koepf(2006)}]{koepf2006computeralgebra}
	Koepf, W., 2006. Computeralgebra: eine algorithmisch orientierte
	Einf{\"u}hrung. Springer-Verlag, Berlin Heidelberg, New York.
	
	\bibitem[{Koepf(2014)}]{WolfBook}
	Koepf, W., 2014. Hypergeometric Summation, An Algorithmic Approach to Summation
	and Special Function Identities, 2nd Edition. Springer-Verlag, London.
	
	\bibitem[{Koepf and Schmersau(1998)}]{koepf1997representations}
	Koepf, W., Schmersau, D., 1998. Representations of orthogonal polynomials. J.
	Comput. Appl. Math. 90~(1), 57--94.
	
	\bibitem[{Krantz and Parks(2012)}]{krantz2012implicit}
	Krantz, S.~G., Parks, H.~R., 2012. The Implicit Function Theorem: History,
	Theory, and Applications. Birkh\"auser, Boston.
	
	\bibitem[{Lubin(1994)}]{dynsys}
	Lubin, J., 1994. Nonarchimedean dynamical systems. Compos. Math. 94~(3),
	321--346.
	
	\bibitem[{Petkov{\v{s}}ek(1992)}]{petkovvsek1992hypergeometric}
	Petkov{\v{s}}ek, M., 1992. Hypergeometric solutions of linear recurrences with
	polynomial coefficients. J. Symb. Comput. 14~(2-3), 243--264.
	
	\bibitem[{Petkov{\v{s}}ek and Salvy(1993)}]{petkovvsek1993finding}
	Petkov{\v{s}}ek, M., Salvy, B., 1993. Finding all hypergeometric solutions of
	linear differential equations. In: ISSAC. Editor: Bronstein, Manuel,
	Association for Computing Machinery, New York, pp. 27--33.
	
	\bibitem[{Petkov{\v{s}}ek et~al.(1996)Petkov{\v{s}}ek, Wilf, and
		Zeilberger}]{petkovvsek1996b}
	Petkov{\v{s}}ek, M., Wilf, H.~S., Zeilberger, D., 1996. A={B}. Vol.~30. AK
	Peters Ltd, Wellesley, MA.
	
	\bibitem[{Stanley(1980)}]{stanley1980differentiably}
	Stanley, R.~P., 1980. Differentiably finite power series. European J. Combin.
	1~(2), 175--188.
	
	\bibitem[{Stanley(2011)}]{combinator}
	Stanley, R.~P., 2011. Enumerative Combinatorics, 2nd Edition. Vol.~1. Cambridge
	Studies in Advanced Mathematics, Cambridge.
	
	\bibitem[{Teguia~Tabuguia(2020{\natexlab{a}})}]{BTphd}
	Teguia~Tabuguia, B., 2020{\natexlab{a}}. Power series representations of
	hypergeometric type and non-holonomic functions in computer algebra. Ph.D.
	thesis, University of Kassel,
	\url{https://kobra.uni-kassel.de/handle/123456789/11598}.
	
	\bibitem[{Teguia~Tabuguia(2020{\natexlab{b}})}]{BThyper}
	Teguia~Tabuguia, B., 2020{\natexlab{b}}. A variant of van {H}oeij's algorithm
	to compute hypergeometric term solutions of holonomic recurrence equations.
	arXiv:2012.11513 [cs.SC].
	
	\bibitem[{Teguia~Tabuguia and Koepf(2020)}]{BTmaple}
	Teguia~Tabuguia, B., Koepf, W., 2020. Power series representations of
	hypergeometric functions (to appear). In: Maple Conference 2020 Proceedings.
	Editors: Corless, Rob and Jürgen, Gerhard, Springer.
	
	\bibitem[{Teschl(2012)}]{teschl2012ordinary}
	Teschl, G., 2012. Ordinary Differential Equations and Dynamical Systems. Vol.
	140. American Mathematical Society, Providence.
	
	\bibitem[{Van~Hoeij(1999)}]{van1999finite}
	Van~Hoeij, M., 1999. Finite singularities and hypergeometric solutions of
	linear recurrence equations. J. Pure Appl. Algebra 139~(1-3), 109--131.
	
	\bibitem[{Wolfram(2003)}]{wolfram1999mathematica}
	Wolfram, S., 2003. The Mathematica {B}ook, version 4, 5th Edition. Wolfram
	Media, Cambridge University Press.
	
	\bibitem[{Zariski and Samuel(1960)}]{gathmann2006tropical}
	Zariski, O., Samuel, P., 1960. Commutative Algebra. Vol.~2. Springer-Verlag,
	New York.
\end{thebibliography}

\end{document}